\newtheorem{thm}{Teorema}[section]
\newtheorem{cor}[thm]{Corolário}
\newtheorem{lem}[thm]{Lema}
\newtheorem{obs}[thm]{Observação}
\newtheorem{prop}[thm]{Proposição}
\newtheorem{defn}[thm]{Definição}
\theoremstyle{definition}
\newtheorem{exemp}[thm]{Exemplo}
\newcommand{\Z}{\mathbb Z}
\newcommand{\q}{\mathrm Q}
\newcommand{\F}{\mathbb F}
\newcommand{\av}{\mathcal O}              
\newcommand{\df}{\mathcal{D}_F}           
\newcommand{\Pf}{\mathcal{P}_F}           
\newcommand{\cf}{\mathcal{C}_F}           
\newcommand{\LL} {\mathcal{L}}            
\newcommand{\pf}{\mathbb{P}_F}            
\newcommand{\af}{\mathcal{A}_F}           
\newcommand{\proj}[1]{\mathbf{P}^{#1}}    
\newcommand{\X}{\mathcal{X}}              
\newcommand{\Y}{\mathcal{Y}}              
\newcommand{\Pponto}{\mathcal{P}}         
\newcommand{\N}{\mathbb N}
\newcommand{\x}{\times}
\newcommand{\implica}{\Rightarrow}
\newcommand{\volta}{\Leftarrow}
\newcommand{\emm}{\rightarrow}
\newcommand{\Emm}{\mapsto}
\newcommand{\sse}{\Leftrightarrow}
\newcommand{\bolaf}[1]{\mathbf{B}[#1]}
\newcommand{\uniao}{\cup}
\newcommand{\Uniao}{\bigcup}
\newcommand{\inter}{\cap}
\newcommand{\ptodo}{\forall}
\newcommand{\contido}{\subset}
\newcommand{\contem}{\supset}
\newcommand{\contidod}{\varsubsetneq}           
\newcommand{\barra}[1]{\overline{#1}}
\newcommand{\ba}[1]{\overline{#1}}
\newcommand{\modulo}[1]{\mathsf{mod}{#1}}
\newcommand{\tnormal}[1]{\mathsf{#1}}          
\newcommand{\congruo}{\equiv}
\newcommand{\menori}{\leqslant}
\newcommand{\maiori}{\geqslant}
\newcommand{\combinacao}[2]{{ #1 \choose #2}}
\newcommand{\ndivide}{\nmid}
\newcommand{\iso}{\simeq}
\newcommand{\funcao}[5]{\begin{array}{rccl}
                          #1 : & #2 & \rightarrow & #3 \\
			  & #4 & \mapsto & #5
		        \end{array} }
\newcommand{\ordem}{\prec}
\newcommand{\gera}[1]{\langle #1 \rangle}
\title{ Uma Construção Elementar de Códigos de Goppa Geométricos}
\author{N. M. S.}
\begin{document}
	\pagenumbering{roman} \setcounter{page}{0}
\thispagestyle{empty}

\begin{center}
{\Large Universidade Estadual de Campinas} \\ \vspace{1ex}
{\Large Instituto de Matemática, Estatística\\ e  Computação Científica} \\
 {\scshape
 {\large Departamento de Matemática}\\}
\vspace{1.0in}
{\Huge\textbf{Códigos Geométricos de Goppa Via Métodos Elementares}}\\
\vspace{1.0in} 
{\Large \textbf{Autor: Nolmar Melo de Souza}} \\ \vspace{1.0cm}
{\Large \textbf{Orientador: Prof. Dr. Paulo Roberto Brumatti}} \\ \vspace{2ex}
{\Large \textbf{Co-orientador: Prof. Dr. Fernando Eduardo Torres Orihuela}} 
\end{center}

	\thispagestyle{empty}

\hspace*{-1cm}
\vspace*{-1cm}
\includegraphics{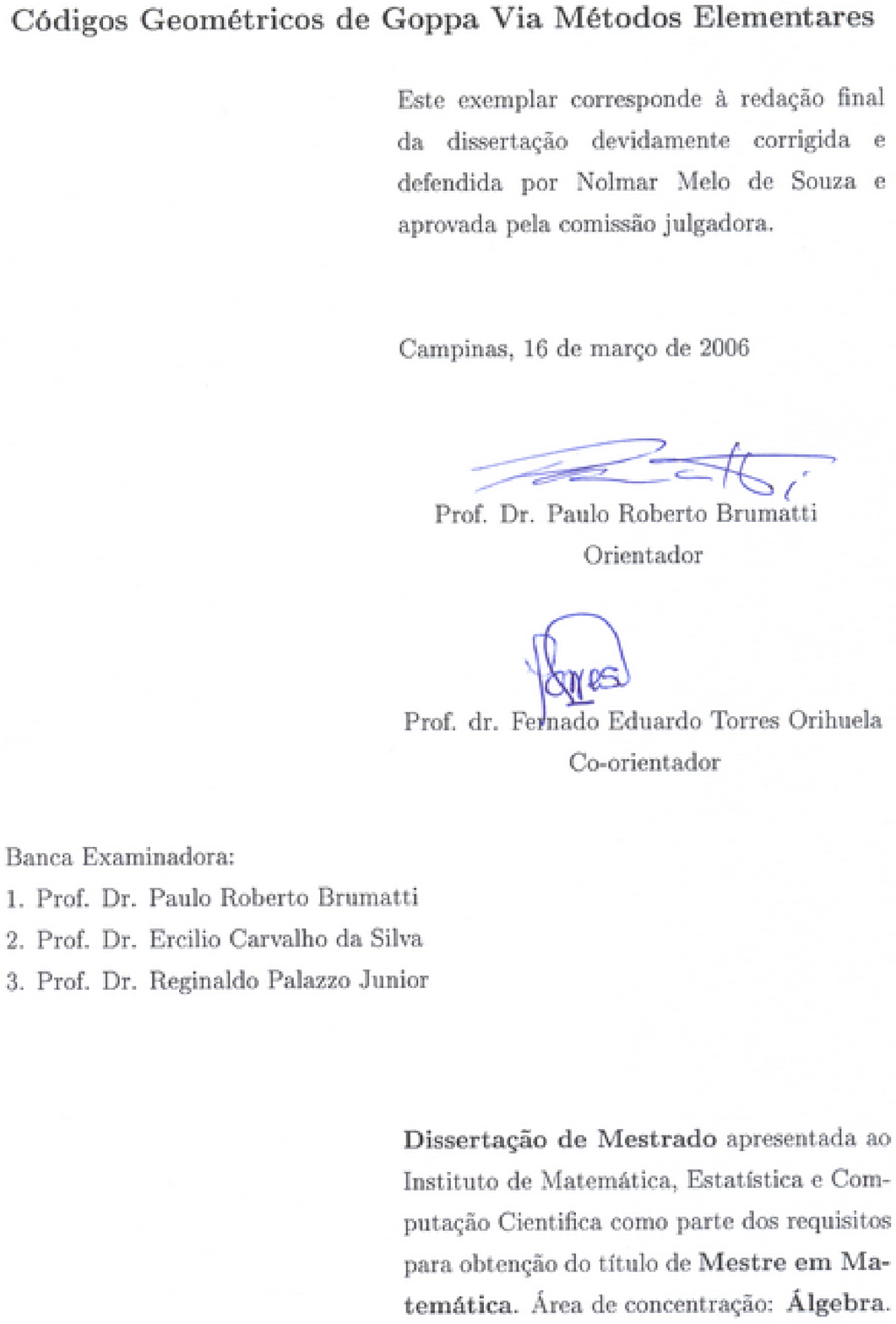}



\newpage \normalsize
\thispagestyle{empty}

\vspace{3.5cm}
\begin{center} 
\textbf{
\begin{tabular}{c} 
  FICHA CATALOGRÁFICA ELABORADA PELA \\
  BIBLIOTECA DO IMECC DA UNICAMP \\
  Bibliotecária: Maria Júlia Milani Rodrigues - CRB8a /2116\\
\end{tabular}
}
\end{center}
\begin{footnotesize}
\begin{center}
\begin{tabular}{|cl|} \hline
  \hspace{1cm} & \\
  	& Melo, Nolmar \\
  M491c & \hspace{0.6cm} Códigos geométricos de Goppa via métodos elementares / Nolmar Melo de\\
  	&  Souza -- Campinas, [S.P.:s.n.], 2006. \\
  	&  \\
  	& \hspace{0.6cm} Orientadores: Paulo Roberto Brumatti; Fernado Eduardo Torres Orihuela\\
	& \hspace{0.6cm} Dissertação (mestrado) - Universidade Estadual de Campinas, Instituto de \\
  	& Matemática, Estatística e  Computação Científica. \\
  	& \\
  	& \hspace{0.6cm} 1. Códigos de controle de erros (Teoria da informação). 2. Semigrupos. 3. \\
  	& Geometria algébrica. 4. Teoria da codificação. I. Brumatti, Paulo Roberto. II. \\
  	& Torres Orihuela, Fernando Eduardo. III. Universidade Estadual de Campinas. \\
  	& Instituto de Matemática, Estatística e Computação Científica. IV. Título \\
  	& \\ \hline
\end{tabular} 
\end{center}
\end{footnotesize}
\hspace*{-1cm}
\begin{tabular}{ll}
Título em inglês: Goppa goemetry codes via elementary methods  &\\ 
& \\
Palavras-chave em inglês (Keywords): 1. Error control codes (Information theory). 2.&\\
Semigroups. 3. Algebraic geometry. 4. Coding theory.& \\
& \\ 
Área de concentração: Álgebra &\\
& \\
Titulação: Mestre em Matemática &\\ 
& \\
\end{tabular}\\
\hspace*{-1cm}
\begin{tabular}{ll}
Banca examinadora: & Prof. Dr. Paulo Roberto Brumatti (IMECC-UNICAMP) \\
 & Prof. Dr. Ercílio Carvalho da Silva (UFU-MG) \\
 & Prof. Dr. Reginaldo Palazzo Junior (FEEC-UNICAMP)\\
\end{tabular}\\
\hspace*{-1cm}
\begin{tabular}{ll}
 & \\
Data da defesa: 17/02/2006 & \\
\end{tabular}

\thispagestyle{empty}
\begin{center}
\includegraphics{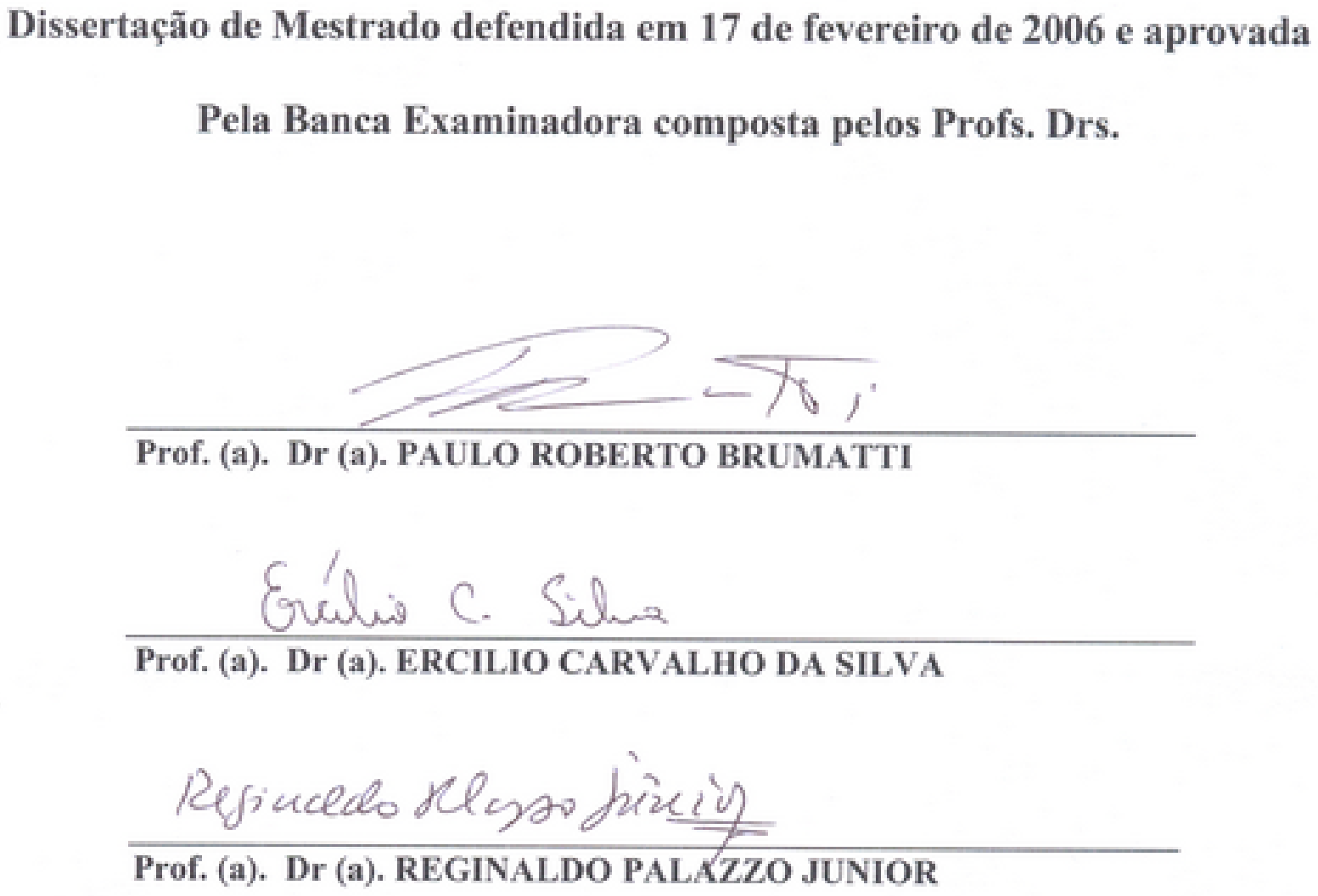}
\end{center}

	\newpage \thispagestyle{plain} 
\vspace{1.5cm}
\begin{center}
{\huge{\textbf{Resumo}}}
\end{center}
\vspace{0.5cm}


O objetivo central desta dissertação foi o de apresentar os Códigos Geométricos de Goppa via métodos elementares que foram introduzidos por J. H. van Lint, R. Pellikaan e T. H\o{}hold por volta de 1998. Numa primeira parte da dissertação são apresentados os conceitos fundamentais sobre corpos de funções racionais de uma curva algébrica na direção de se definir os códigos de Goppa de maneira clássica, neste estudo nos baseamos principalmente no livro ``Algebraic Function Fields and Codes'' de H. Stichtenoth. A segunda parte inicia-se com a introdução dos conceitos de funções peso, grau e ordem que são fundamentais para o estudo  dos Códigos de Goppa via métodos elementares de álgebra linear e de semigrupos, tal estudo foi baseado em ``Algebraic geometry codes'' de J. H. van Lint, R. Pellikaan e T. H\o{}hold. 

A dissertação termina com a apresentação de exemplos que ilustram os métodos elementares que nos referimos acima.

\newpage
\thispagestyle{plain} 
\vspace{1.5cm}
\begin{center}
{\huge{\textbf{Abstract}}}
\end{center}
\vspace{0.5cm}

The central objective of this dissertation was to present the Goppa Geometry Codes via elementary methods which were introduced by J.H. van Lint, R.Pellikaan and T. H\o{}hold  about 1998. On the first part of such dissertation are presented  the fundamental concepts about fields of rational functions of  an algebraic curve  in the direction as to define the Goppa Codes on a classical manner. In this study  we based ourselves  mainly on  the book ``Algebraic Function Fields and Codes'' of H. Stichtenoth. The second part is initiated with an introduction about the functions weight, degree and order which are fundamental for the study of the Goppa Codes through  elementary methods of linear algebra and of semigroups and such study was based on  ``Algebraic Geometry Codes'' of J.H. van Lint, R.Pellikaan and T. H\o{}hold.

The dissertation ends up with a presentation of examples which illustrate the elementary methods that we have referred to above.

\vspace{1.5ex}

\newpage \thispagestyle{plain}

	\thispagestyle{plain}
\vspace*{12cm}

\begin{flushright}
\begin{minipage}{12cm}
Dedico esse trabalho a aqueles que me apoiaram incondicionalmente, meus pais.
João Ferreira de Souza  e  Maria de Fátima Mello de Souza.
\end{minipage}
\end{flushright}

	\thispagestyle{plain}
\begin{Huge}
\centerline{{Agradecimentos\\}}
\end{Huge}
\vspace*{2.0cm}
Agradeço:
\begin{trivlist}  \itemsep 2ex  \normalsize

\item Por sua grande atenção, paciencia e dedicação agradeço em especial ao professor Paulo Roberto Brumatti, que me orientou na condução deste trabalho.

\item Ao professor Fernado Torres, o qual contribuiu com valiosas dicas de como prosseguir os estudos que levaram a construção deste.

\item Aos meus pais e irmãos que me apoiaram em todos os momentos da vida.

\item Ao professor Haroldo Benatti, que com sua dedicação me incentivou a continuar os estudos após a graduação.

\item À José Santana, que me apoiou em varios momentos.

\item Aos meus amigos, sem os quais seria dificil concluir essa etapa.

\item À Capes, pelo apoio financeiro.
\end{trivlist}

\newpage \normalsize
\thispagestyle{plain}

\vspace*{2.0in}
\hspace{0.6in}
{\sffamily\ttfamily
\begin{tabular}{|l|} \hline
\\
lance de dados\\
(gessinger)\\\\

daqui não tem mais volta, pra frente é sem saber\\
pequenos paraísos e riscos a correr\\
os deuses jogam pôquer\\
e bebem no saloon doses generosas de br 101\\
tá escrito há 6.000 anos em parachoques de caminhão\\
atalhos perigosos feito frases feitas\\
os deuses dão as cartas... o resto é com você\\
\\
no fundo tudo é ritmo\\
a dança foge do salão\\
invade a autoestrada do átomo ao caminhão\\
o fim é puro ritmo\\
o último suspiro é purificação\\
os deuses dão as costas... agora é só você\\
\\
os deuses dão as costas... agora é só você... querer 
\\
 
 \\ \hline
\end{tabular} 
}
\newpage{\empty}
	\tableofcontents
	\newpage
	\pagenumbering{arabic}
\markboth{Introdução}{Introdução}
\addcontentsline{toc}{chapter}{Introdução}
\chapter*{Introdução}

A teoria de códigos corretores de erros teve início com as pesquisas de matemáticos da Bell Lab. na década de 1940. Apesar da grande utilização na engenharia, a teoria utiliza de sofisticadas técnicas matemáticas fazendo uso de várias áreas tais como Geometria Algébrica, Teoria dos Números, Teoria dos Grupos e Combinatória.

Os códigos corretores de erros estão presentes no nosso cotidiano sempre que usamos o computador (no uso da internet ou na armazenagem de dados, por exemplo), transmitimos dados, assistimos um DVD, etc.

Os códigos algébricos geométricos, que são uma sub-classe dos códigos lineares (subespaços linear de um determinado espaço vetorial munido com uma métrica), foram inicialmente apresentados por V. D. Goppa num artigo, [3],  publicado em 1981. Essa classe de códigos, que é uma das mais estudadas atualmente, utiliza como ferramenta principal a geometria algébrica.

Nesse texto apresentamos os códigos de avaliação, que  foram introduzidos por 
Tom H\o{}hold , Jacobus H. van Lint e Ruud Pellikaan  através das funções ordem e peso. Tais códigos têm um tratamento muito mais simples do que  os códigos algébricos, geométricos visto que estes usam como teorias base os semigrupos e a álgebra linear.

O primeiro capítulo desse texto, capítulo onde a maioria das proposições tiveram suas demonstrações omitidas, traz uma introdução à teoria de códigos e também algumas ferramentas da geométria algébrica tendo como um dos principais resultados o teorema de Riemann-Roch.

Já no segundo capítulo fazemos uma apresentação dos códigos algébricos geométicos, códigos que também são chamados de {\bf códigos geométricos de Goppa}. Tal apresentação é feita de maneira sucinta.

No terceiro capítulo trazemos as funções ordem, peso e grau, que são fundamentais para calcularmos os parâmetros dos códigos de avaliação, que alí também são apresentados. Tais funções, unidas à teoria de semigrupo, nos permite descrever os códigos de avaliação de modo simples, fazendo desses uma alternativa aos códigos algébricos geométricos.

No quarto e último capítulo fazemos a conexão entre códigos de Goppa pontuais e códigos de avaliação e apresentamos alguns exemplos. Alí tentamos mostrar de modo prático a diferença entre ambos os códigos e ao mesmo tempo a proximidade dos dois.


\chapter{Noções Básicas}
Este capítulo está dedicado a introdução da notação que utilizaremos, assim como   expor definições e teoremas clássicos da teoria, os quais, na maioria das vezes, terão suas demonstrações omitidas, visto que a bibliografia indicada ao final desse texto as fazem muito bem. Um dos principais teoremas que iremos encontrar aqui é o Teorema de Riemann-Roch.


\section{Códigos}
Nessa seção iremos introduzir um dos principais objetos que serão tratados nesse texto (código). Usaremos aqui um conjunto $Q$ com $q$ elementos e o chamaremos de alfabeto.


\begin{defn}\index{Código}
	A um subconjunto próprio não vazio, $C$, de $Q^n$, damos o nome de código. Chamaremos de palavras código de comprimento $n$ aos seus elementos.
\end{defn}

\indent Dizemos que um código $C$ é trivial se $\#C=1$.

Em seguida apresentamos as definições elementares e o resultado, proposição \ref{codi:perf}, que caracteriza os códigos perfeitos.

\begin{defn}[Distância de Hamming] \index{Distancia! de Hamming}
	Sendo $x, y \in \q^n$ definimos distância de $x$ à $y$ como:
	$$d(x,y) = \#\lbrace i ; 1\menori i \menori n \textrm{ e } x_i \neq y_i \rbrace .$$
\end{defn}
\begin{defn}\index{Peso}
	Quando $Q$ é um corpo e $x\in Q^n$ o peso de $x$ é definido como $w(x)=d(x,0)$.
\end{defn}

\begin{defn}
	Chamamos de distância mínima do código $C$ ao natural $d=min\{d(x,y);\,\, x,y \in C\}$.
\end{defn}

\begin{prop}\label{codi:perf}
	Dado um código com distância mínima $d=2e+1$ temos que as bolas do conjunto $B = \lbrace \bolaf{x,e} ; x \in C \rbrace $ são disjuntas, onde $\bolaf{x,e}=\{y\in Q^n; d(x,y)\menori e\}$.
\end{prop}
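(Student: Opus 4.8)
The plan is to establish the pairwise disjointness of the family $B$ by a standard triangle-inequality argument. First I would reduce the claim to showing that for any two \emph{distinct} codewords $x, x' \in C$, the balls $\bolaf{x,e}$ and $\bolaf{x',e}$ have empty intersection, since a collection of sets is disjoint precisely when any two of its distinct members meet in the empty set.

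The core step is a proof by contradiction. Suppose some $y \in Q^n$ belonged to both balls, so that $d(x,y) \menori e$ and $d(x',y) \menori e$. Applying the triangle inequality for the Hamming distance would then yield $d(x,x') \menori d(x,y) + d(y,x') \menori e + e = 2e$. On the other hand, since $x \neq x'$ are both codewords, the definition of minimum distance forces $d(x,x') \maiori d = 2e+1$. These two bounds are incompatible, so no such $y$ can exist, and the two balls must be disjoint.

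The one ingredient that genuinely requires verification --- and which I expect to be the main, if modest, obstacle --- is the triangle inequality $d(x,z) \menori d(x,y) + d(y,z)$ for the Hamming distance, i.e. the fact that $d$ is a metric on $Q^n$. I would prove it by observing that whenever $x_i \neq z_i$ in a coordinate $i$, at least one of $x_i \neq y_i$ or $y_i \neq z_i$ must hold; hence the set of coordinates in which $x$ and $z$ differ is contained in the union of the corresponding sets for the pairs $(x,y)$ and $(y,z)$, and passing to cardinalities gives the desired inequality. Once this metric property is in hand, the contradiction above closes the argument immediately.
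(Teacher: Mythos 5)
Your proof is correct and follows exactly the same route as the paper, which disposes of the proposition in one line by noting it follows from the triangle inequality since the Hamming distance is a metric. You have simply made explicit the contradiction $d(x,x')\menori 2e < 2e+1 = d$ and verified the metric property, both of which the paper leaves to the reader.
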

\begin{proof}
	Segue diretamente da desigualdade triangular visto que a distân\-cia de Hamming é uma métrica.
\end{proof}
\begin{obs}
	Dizemos que um código com distância mínima $d=2e+1$ detecta $2e$ e corrige $e$ erros, ou seja, ocorrendo até $e$ erros é possível decodificar a palavra código transmitida.
\end{obs}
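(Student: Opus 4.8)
The plan is to derive both claims of the observation --- detection of $2e$ errors and correction of $e$ errors --- from the disjointness of the closed balls $\bolaf{x,e}$ established in Proposition \ref{codi:perf}, together with the definition of minimum distance. First I would fix the channel model: a codeword $c \in C$ is transmitted and a word $r \in Q^n$ is received, and the number of transmission errors is by definition the Hamming distance $d(c,r)$, the number of coordinates in which $c$ and $r$ disagree.

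For the detection claim I would argue by contradiction. Assuming a nonzero error pattern of weight at most $2e$, so that $1 \menori d(c,r) \menori 2e$, suppose the received word $r$ were itself a codeword. Then $r \neq c$ since $d(c,r) \maiori 1$, whence $d(c,r) \maiori d = 2e+1$ by the definition of minimum distance, contradicting $d(c,r) \menori 2e$. Hence $r \notin C$ and the error is detected, which is exactly the assertion that the code detects $2e$ errors.

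For the correction claim, I would observe that if at most $e$ errors occur then $d(c,r) \menori e$, i.e. $r \in \bolaf{c,e}$. The decisive step is uniqueness: $c$ is the only codeword whose radius-$e$ ball contains $r$. Indeed, if some other codeword $c' \neq c$ also satisfied $r \in \bolaf{c',e}$, then $r \in \bolaf{c,e} \inter \bolaf{c',e}$, contradicting Proposition \ref{codi:perf}, by which the balls $\set{\bolaf{x,e} : x \in C}$ are pairwise disjoint. Therefore nearest-neighbour decoding --- returning the unique codeword within Hamming distance $e$ of $r$ --- recovers the transmitted word $c$, so up to $e$ errors are corrected.

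I expect the two contradiction arguments to be routine; the only place where genuine content enters is the uniqueness in the correction step, which rests entirely on the disjointness proved in Proposition \ref{codi:perf}. Accordingly I would present the correction statement as an immediate corollary of that proposition and the detection statement as a direct consequence of the hypothesis $d = 2e+1$, being careful in both to invoke that the minimum distance is attained only between distinct codewords.
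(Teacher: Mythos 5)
Your proof is correct and takes exactly the route the paper intends: the observation is stated there without proof, placed immediately after Proposition \ref{codi:perf}, and your argument --- detection directly from the definition of minimum distance, correction from the pairwise disjointness of the closed balls $\bolaf{c,e}$ guaranteeing a unique nearest codeword --- is precisely the standard justification that proposition is set up to provide. No gaps.
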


\begin{defn}
	Um código $C\contido \q^n$ com distância mínima $2e+1$ é dito código perfeito se $\q^n = \stackrel{\bullet}{\Uniao}_{x\in C} B(x,e)$.
\end{defn}
\section{Códigos Lineares} 
Nessa seção definiremos uma das principais classes de códigos e para tais códigos o  alfabeto, $\q$, é um corpo finito $\F_q$ com $q$ elementos, onde $q=p^r$ e $p$ é um número primo.
\begin{defn}[Código Linear] \index{Código! Linear}
	Um código linear $C$ sobre um alfabeto $\F_q$ é um $\F_q-$subespaço vetorial de $\F_q^n$. Se $dim_{F_q}(C) = k$ chamamos $C$ de um $[n,k]$-código.
\end{defn}
Usaremos a notação $[n,k,d]$-código, para um código linear com distância mínima $d$.
\begin{thm}
	Num código linear a distância mínima é igual ao peso mínimo.
\end{thm}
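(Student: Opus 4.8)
The plan is to reduce the minimum distance to a minimum of weights by exploiting the linear (subspace) structure of $C$. The crucial elementary identity I would establish first is that for any two vectors $x, y \in \F_q^n$ one has $d(x,y) = w(x-y)$. This follows directly from the definitions: by definition $d(x,y) = \#\{i : x_i \neq y_i\}$, and since $x_i \neq y_i$ holds if and only if $(x-y)_i \neq 0$, the set on the right coincides with $\{i : (x-y)_i \neq 0\}$, whose cardinality is precisely $d(x-y,0) = w(x-y)$.

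With this identity in hand, the second step is to invoke linearity. Since $C$ is an $\F_q$-subspace, whenever $x, y \in C$ we also have $x - y \in C$; moreover $x \neq y$ is equivalent to $x - y \neq 0$. Thus the assignment $(x,y) \mapsto x - y$ sends every pair of distinct codewords to a nonzero codeword. The third step is to check that this assignment is surjective onto the nonzero codewords: given any $z \in C$ with $z \neq 0$, we may write $z = z - 0$ with $z, 0 \in C$ and $z \neq 0$, so $z$ does arise as a difference of distinct codewords. Consequently the set $\{x - y : x, y \in C,\ x \neq y\}$ equals exactly $C \anti \{0\}$.

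Combining these observations, I would conclude
\[
d = \min_{\substack{x,y \in C \\ x \neq y}} d(x,y) = \min_{\substack{x,y \in C \\ x \neq y}} w(x-y) = \min_{\substack{z \in C \\ z \neq 0}} w(z),
\]
which is precisely the minimum weight of $C$, completing the argument.

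There is no genuine obstacle here, as the whole statement is essentially bookkeeping with the definitions. The only point deserving care is making both the identity $d(x,y) = w(x-y)$ and the surjectivity claim of the third step fully explicit, since everything rests on the fact that differences of codewords remain inside $C$. This closure under subtraction is exactly the property that fails for arbitrary (non-linear) codes, and it is what makes the equality of minimum distance and minimum weight special to the linear case; I would therefore emphasize this use of the subspace hypothesis as the conceptual heart of the proof.
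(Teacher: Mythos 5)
Your proof is correct and follows essentially the same route as the paper's: the identity $d(x,y)=w(x-y)$ combined with closure of $C$ under subtraction. You are in fact more careful than the paper, which leaves implicit the surjectivity observation that every nonzero codeword $z$ arises as a difference $z-0$ of distinct codewords; spelling that out, as you do, is a genuine improvement in rigor rather than a different approach.
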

\begin{proof}
	Como $C$ é um subespaço vetorial e dados $x,y\in C$ temos que $x-y \in C$. Sejam $x,y \in C$ tais que $d(x,y)$ seja mínima. Logo $d(x,y) = d(x-y,0) = w(x-y)$ e mais $w(x-y)$ é mínimo.
\end{proof}

Na descrição de um código linear um importante ingrediente é o que se chama {\bf matriz geradora}.
\begin{defn}\index{Matriz! geradora}
	Uma matriz $k\x n$, $G$, é dita geradora de um código linear $C$ se os seus vetores linha formam uma base para $C$.   
\end{defn}

\begin{defn}
	Dizemos que uma matriz geradora, $G$, de um código linear, $C$, está na forma padrão se esta matriz está escrita em blocos da seguinte maneira:
	$$G= [ I_k|P ]  $$
	onde $I_k$ é a matriz identidade $k\x k$ e $P$ é uma $k\x (n-k)$ matriz.
\end{defn}
Em \cite{hefez2} vemos que nem todo código tem uma matriz geradora na forma padrão, contudo vê-se também que pode-se definir {\bf códigos equivalentes} com os mesmos parâmetros $n,\,k$ e $d$ de modo que essa possua uma matriz geradora na forma padrão.
\begin{prop}
	O complemento ortogonal de um $[n,k,d]-$código $C$, em rela\-ção ao produto interno canônico\footnote{Dados dois vetores em $\F_q^n$, $a=(a_1,\ldots,a_n)$ e $b=(b_1,\ldots,b_n)$, definimos o produto interno canônico como $\gera{a,b}=\sum_{i=1}^n a_ib_i$.}, também é um código e esse é chamado código dual e denotado por $C^{\bot}$. Além disso a dimensão do código dual é $n-k$ e mais, sendo $G=[I_k|P]$ a matriz geradora do código $C$, então $H = [-P^{t}|I_{n-k}]$ é a matriz geradora do código dual $C^{\bot}$. A matriz $H$ assim definida é dita matriz de teste de paridade pois $x \in C$ se, e somente se, $Hx^t = 0$.
\end{prop}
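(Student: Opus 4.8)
The plan is to establish, in turn, the four assertions contained in the statement: that $C^{\bot}$ is a subspace (hence a code), that $\dim C^{\bot}=n-k$, that $H=[-P^t\,|\,I_{n-k}]$ is a generator matrix for $C^{\bot}$, and finally that $x\in C$ if and only if $Hx^t=0$. I would begin with the easy part, namely that $C^{\bot}=\set{v\in\F_q^n : \gera{v,c}=0 \text{ for every } c\in C}$ is an $\F_q$-subspace: for $u,v\in C^{\bot}$ and $\lambda\in\F_q$, bilinearity of the canonical inner product gives $\gera{\lambda u+v,\,c}=\lambda\gera{u,c}+\gera{v,c}=0$, so $\lambda u+v\in C^{\bot}$; it contains $0$ and, since $C\neq\set{0}$, it is not all of $\F_q^n$, hence it is a code in the sense of the opening definition.

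For the dimension, the key observation is that $v\in C^{\bot}$ if and only if $v$ is orthogonal to each member of a basis of $C$, i.e. to the rows $g_1,\dots,g_k$ of the generator matrix $G$. This is precisely the linear system $Gv^t=0$, so $C^{\bot}$ is the kernel of the map $v\mapsto Gv^t$ from $\F_q^n$ to $\F_q^k$. Since the rows of $G$ are linearly independent, $G$ has rank $k$, and the rank--nullity theorem gives $\dim C^{\bot}=n-k$.

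The main step is to identify $H$ as a generator matrix of $C^{\bot}$, and here the decisive computation is the block product
$$GH^t=[I_k\,|\,P]\begin{bmatrix}-P\\ I_{n-k}\end{bmatrix}=-P+P=0,$$
which shows that every row of $H$ is orthogonal to every row of $G$ and thus lies in $C^{\bot}$. The block $I_{n-k}$ forces the rows of $H$ to be linearly independent, so they span an $(n-k)$-dimensional subspace of $C^{\bot}$; by the dimension count of the previous paragraph this subspace is all of $C^{\bot}$, and therefore $H$ generates it.

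Finally, for the parity-check property, if $x\in C$ then $x=uG$ for some row vector $u\in\F_q^k$, so $Hx^t=HG^tu^t=(GH^t)^t u^t=0$; conversely $\set{x : Hx^t=0}$ is the kernel of $x\mapsto Hx^t$, of dimension $n-\text{rank}(H)=k$, and it contains $C$, which also has dimension $k$, so the two sets coincide. I do not anticipate a genuinely hard step here, as everything is elementary linear algebra; the only point needing care is organizational, namely to secure $\dim C^{\bot}=n-k$ independently \emph{before} invoking it, so that the linearly independent rows of $H$ can be upgraded from vectors merely lying in $C^{\bot}$ to an actual basis of it.
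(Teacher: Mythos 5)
Your proof is correct and complete. Note that the paper itself states this proposition \emph{without} proof --- it sits in the introductory chapter, where most demonstrations are explicitly omitted and delegated to the references (e.g.\ \cite{hefez2}) --- so there is no in-paper argument to compare against; your chain of steps (subspace via bilinearity, $\dim C^{\bot}=n-k$ by rank--nullity applied to $v\mapsto Gv^{t}$, the block computation $GH^{t}=-P+P=0$ combined with the linear independence of the rows of $H$ forced by the $I_{n-k}$ block, and the parity-check equivalence via a dimension count) is exactly the standard textbook argument, and you were right to secure $\dim C^{\bot}=n-k$ first, so that the $n-k$ independent rows of $H$ can legitimately be promoted to a basis of $C^{\bot}$.
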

A proposição acima nos dá uma informação muito importante sobre a pertinência de uma dada palavra ao código, o que é fundamental para a sua decodificação. Assim ficam justificadas a definição e o resultado que apresentaremos a seguir:
\begin{defn}\index{Sindrome}
	Sendo $C$ um código linear com matriz de teste de paridade $H$, então para todo $x\in \F_q^n$ chamamos $Hx^t$ de síndrome de $x$.  
\end{defn}
\begin{thm}
	Dados $x,y\in \F_q^n$ então $x$ e $y$ tem a mesma síndrome se, e somente se, $x-y \in C$.
\end{thm}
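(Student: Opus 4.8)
O plano � reduzir o enunciado a uma manipula��o direta da linearidade da multiplica��o por matrizes, apoiando-me na caracteriza��o da proposi��o anterior: $x\in C$ se, e somente se, $Hx^t=0$, onde $H$ denota a matriz de teste de paridade de $C$.

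Primeiro registraria que, por defini��o, a s�ndrome de um vetor $z\in\F_q^n$ � $Hz^t$, e que tanto a transposi��o quanto a multiplica��o � esquerda por $H$ s�o opera��es $\F_q$-lineares. Logo, para $x,y\in\F_q^n$, a igualdade das s�ndromes $Hx^t=Hy^t$ equivale a $Hx^t-Hy^t=0$; usando $(x-y)^t=x^t-y^t$ e a linearidade, isto � o mesmo que $H(x-y)^t=0$.

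Em seguida, aplicaria a proposi��o citada ao vetor $x-y$: a condi��o $H(x-y)^t=0$ diz precisamente que $x-y\in C$. Encadeando as equival�ncias, a ida (mesma s�ndrome $\implica x-y\in C$) e a volta ($x-y\in C\implica$ mesma s�ndrome) saem ambas como as duas leituras da mesma cadeia de bicondicionais, o que encerra a prova.

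N�o antevejo obst�culo t�cnico relevante: a �nica coisa a assegurar � que estamos de fato usando $C$ como o n�cleo da aplica��o linear $z\mapsto Hz^t$, fato j� garantido pela proposi��o sobre a matriz de teste de paridade. Tudo o mais � mera verifica��o de linearidade.
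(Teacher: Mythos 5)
Sua proposta est\'a correta e coincide essencialmente com a prova do pr\'oprio texto: ambas se reduzem \`a cadeia de equival\^encias $Hx^{t}=Hy^{t} \sse Hx^{t}-Hy^{t}=0 \sse H(x-y)^{t}=0 \sse x-y\in C$, apoiada na linearidade da aplica\c{c}\~ao $z\mapsto Hz^{t}$ e na caracteriza\c{c}\~ao de $C$ como seu n\'ucleo, dada pela proposi\c{c}\~ao sobre a matriz de teste de paridade. N\~ao h\'a nada a corrigir.
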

\begin{proof}
	Seja $H$ a matriz de teste de paridade do código $C$, então temos:
	  $$Hx^{t} = Hy^{t} \sse Hx^{t}-Hy^{t} = 0 \sse H(x-y)^{t} = 0 \sse x-y \in C.$$
\end{proof}

As definições e resultados que apresentaremos abaixo estão todos relacionados com a codificação de uma mensagem enviada.

\begin{defn}
	Seja $c$ uma palavra transmitida e $x$ o vetor recebido, definimos o vetor erro como sendo $e=x-c$, observamos que a quantidade de erros ocorridos na transmissão é o peso de $e$.
\end{defn}
\begin{defn} \index{Classe! Lateral}
	Dizemos que dois vetores estão numa mesma classe lateral se eles tem a mesma síndrome. Um vetor de menor peso na classe é dito líder.
\end{defn}

\begin{thm}
	Sendo $C$ um $[n,k,d]-$código, se $u\in \F_q^n$ é tal que $w(u)\menori \left[ \frac{d-1}{2}\right]$\footnote{A notação $\left[ \frac ab \right]$ denota o maior inteiro menor ou igual a $\frac ab$.} então $u$ é o líder de sua classe.
\end{thm}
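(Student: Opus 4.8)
The plan is to prove the slightly stronger statement that $u$ is the \emph{unique} minimum-weight vector in its coset: I will show that every $v \neq u$ lying in the same coset satisfies $w(v) > w(u)$, which is exactly what it means for $u$ to be the leader.

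First I would fix an arbitrary $v \in \F_q^n$ belonging to the coset of $u$ with $v \neq u$. By the theorem characterizing cosets (two vectors share a syndrome if and only if their difference lies in $C$), I get $u - v \in C$. Since $v \neq u$, the vector $u - v$ is a nonzero codeword, and because the minimum distance of a linear code equals its minimum weight (the theorem proved earlier in this section), it follows that $w(u-v) \maiori d$. Next I would use that the Hamming weight is subadditive — which holds since $w(x) = d(x,0)$ and $w(-v) = w(v)$ over a field — to obtain $w(u-v) \menori w(u) + w(v)$. Chaining these two inequalities gives $d \menori w(u) + w(v)$, and therefore $w(v) \maiori d - w(u)$.

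Finally I would feed in the hypothesis $w(u) \menori \left[\frac{d-1}{2}\right]$. From the previous bound, $w(v) \maiori d - w(u) \maiori d - \left[\frac{d-1}{2}\right]$, and it remains to compare the right-hand side with $w(u)$. The only genuine computation is the floor inequality $2\left[\frac{d-1}{2}\right] \menori d - 1 < d$, which yields $d - \left[\frac{d-1}{2}\right] > \left[\frac{d-1}{2}\right]$; this is the point where one would split into the cases $d$ even and $d$ odd, but both are immediate. Combining, $w(v) > \left[\frac{d-1}{2}\right] \maiori w(u)$, so $w(v) > w(u)$ for every $v \neq u$ in the coset, and hence $u$ is the leader. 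The main (and essentially only) obstacle is this arithmetic step with the integer part; the structural heart of the argument is just the triangle inequality together with the minimum-distance property of $C$.
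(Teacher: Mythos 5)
Your proof is correct and takes essentially the same route as the paper's: both rest on $u-v\in C$ plus the fact that minimum distance equals minimum weight, the subadditivity $w(u-v)\menori w(u)+w(v)$, and the floor estimate $2\left[\frac{d-1}{2}\right]\menori d-1<d$. You merely rearrange the inequality chain to conclude $w(v)>w(u)$ directly, whereas the paper bounds $w(u-v)\menori d-1$ and forces $u-v=0$; the mathematical content is identical.
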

\begin{proof}
	Sejam $u,v\in \F_q^n$, com $w(u)\menori \left[ \frac{d-1}{2}\right]$ e $w(v) \menori \left[ \frac{d-1}{2}\right]$, se $u-v \in C$ temos que $w(u-v)\menori w(u)+w(v) \menori \left[ \frac{d-1}{2}\right]+\left[ \frac{d-1}{2}\right] \menori d-1$, logo $u-v=0$, assim temos que    $u=v.$
\end{proof}
\begin{thm}\label{teo:xx}
	Sendo $H$ uma matriz de teste de paridade de $C$ temos que $w(C) \maiori r$ se, e somente se, quaisquer $r-1$ colunas de $H$ são lineramente independentes (L.I.).
\end{thm}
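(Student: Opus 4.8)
The plan is to convert the parity-check condition $Hx^t = 0$ into a statement about linear dependence among the columns of $H$, and then to identify $w(C)$ with the smallest number of columns of $H$ that can be linearly dependent. First I would fix notation, writing $H = [\,h_1 \mid \cdots \mid h_n\,]$ where $h_j$ is the $j$-th column. The starting observation is the identity $Hx^t = \sum_{j=1}^n x_j h_j$, so that for $x \in \F_q^n$ the membership $x \in C$ — equivalently $Hx^t = 0$, by the parity-check characterization already established — is exactly a dependence relation $\sum_j x_j h_j = 0$. The key point is that the support of $x$, i.e.\ the set of indices $j$ with $x_j \neq 0$, is precisely the set of columns that genuinely take part in the relation; hence a nonzero codeword of weight $w$ is the same data as a nontrivial dependence among some $w$ columns of $H$ in which all $w$ coefficients are nonzero.

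With this dictionary the theorem becomes a counting argument on each side, which I would phrase by contradiction. For the implication $w(C) \geq r \Rightarrow$ (any $r-1$ columns are L.I.), I assume instead that some columns $h_{i_1}, \ldots, h_{i_{r-1}}$ are linearly dependent; placing the dependence coefficients at the positions $i_1, \ldots, i_{r-1}$ and zero elsewhere produces a nonzero $x \in C$ with $w(x) \leq r-1 < r$, contradicting $w(C) \geq r$.

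For the converse, I assume $w(C) \leq r-1$ and take a nonzero codeword $x$ of minimal weight $w \leq r-1$. Its support gives $w$ linearly dependent columns of $H$; since these $w \leq r-1$ columns are contained in some set of $r-1$ columns, and a superset of a dependent family is dependent, I obtain $r-1$ columns that are not independent, contradicting the hypothesis. This last passage relies on the hereditary property of independence in both directions — every subset of an independent family is independent — which is what lets me move freely between dependences among the $w$ columns of a support and the statement phrased for $r-1$ columns.

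I do not anticipate a genuine obstacle; the proof is essentially the translation step above. The only place deserving care is the equivalence "every $r-1$ columns are L.I." $\iff$ "every set of at most $r-1$ columns is L.I.", so that the minimality of the codeword weight $w$ can be compared cleanly against the fixed number $r-1$ in the hypothesis.
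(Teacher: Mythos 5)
Your proposal is correct and follows essentially the same route as the paper: both directions rest on the identity $Hx^t=\sum_j x_j h_j$, turning codewords into column dependences, with one direction producing a low-weight codeword from $r-1$ dependent columns and the other reading a dependence off the support of a minimal-weight codeword. Your only addition --- padding the $w\menori r-1$ dependent support columns up to a set of exactly $r-1$ columns via heredity of dependence --- is a harmless explicit step that the paper's phrasing handles implicitly.
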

\begin{proof}
	$\volta)$ Seja $c\in C$, $c\neq 0$, $H=[h_1,h_2,\cdots,h_n]$. Então $ 0=Hc^{t}=\sum_{i=1}^n c_ih_i$. Logo $w(c)\maiori r$, pois, caso contrário teríamos uma combinação linear de $r-1$ vetores L.I. dando zero.\\
	$\implica)$ Suponha que exista $r-1$ colunas L.D., sem perda de generalidade podemos supor que são as $r-1$ primeiras colunas $h_1,\cdots h_{r-1}$. Logo existem $c_1,\cdots c_{r-1} \in \F_q$, não todos nulos, com $\sum_{i=1}^{r-1} c_ih_i =0$. Assim $c=(c_1,\cdots,c_{r-1},0,\cdots,0)\in C$ pois $Hc^{t}=0$ e $w(c)\menori r-1$, absurdo.
\end{proof}
Uma cota que relaciona os parâmetros de um código aparecem no teorema:
\begin{thm}[Cota de Singleton] \index{Cota ! de Singleton}
	Seja $C$ um $[n,k,d]-$código linear, então $d \menori n-k+1$. Chamamos de código MDS (Maximum Distance Separable) ao código tal que $d=n-k+1$.
\end{thm}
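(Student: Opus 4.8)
The plan is to obtain the Singleton bound as an immediate consequence of the parity-check characterization of minimum weight, Theorem \ref{teo:xx}, together with two structural facts already established: that for a linear code the minimum distance $d$ equals the minimum weight $w(C)$, and that a parity-check matrix $H$ of an $[n,k]$-code has exactly $n-k$ rows, since $\dim C^{\bot}=n-k$.

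First I would apply Theorem \ref{teo:xx} with $r=d$. Because $w(C)=d\maiori d$ holds trivially, the theorem guarantees that \emph{any} $d-1$ columns of $H$ are linearly independent. Next I would invoke the dimension of the ambient space of those columns: each column of $H$ is a vector in $\F_q^{n-k}$, and in a space of dimension $n-k$ every linearly independent family has at most $n-k$ members. Taking a linearly independent set of $d-1$ columns (which exists by the previous step, as long as $d-1\menori n$) forces $d-1\menori n-k$, and rearranging gives $d\menori n-k+1$. The MDS case $d=n-k+1$ then corresponds exactly to the situation where $H$ possesses $n-k$ linearly independent columns realizing this maximum.

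I do not expect a genuine obstacle here; once Theorem \ref{teo:xx} is available the argument is essentially one line of bookkeeping. The only point demanding a little care is the justification that $H$ has precisely $n-k$ rows, so that the column vectors live in $\F_q^{n-k}$ and the independence count is governed by $n-k$ rather than by $n$; this is supplied by the proposition computing the dimension of the dual code. As a robustness check, I would keep in mind the alternative projection argument that avoids $H$ altogether: delete any fixed set of $d-1$ coordinates and restrict the resulting linear map $\F_q^n\To\F_q^{n-(d-1)}$ to $C$. Two codewords with the same image differ by a codeword vanishing on $n-(d-1)$ positions, hence of weight at most $d-1<d$, which must be $0$; thus the map is injective on $C$, giving $k\menori n-(d-1)$ and the same conclusion. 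Either route yields $d\menori n-k+1$, so I would present the first as the main proof and mention the second only if a reader prefers a derivation independent of the parity-check formalism.
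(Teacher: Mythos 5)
Your proposal is correct and takes essentially the same route as the paper: both apply Theorem \ref{teo:xx} with $r=d$ to conclude that any $d-1$ columns of the parity-check matrix $H$ are linearly independent, and then bound that count by $n-k$ (you by noting the columns live in $\F_q^{n-k}$, the paper by noting $H$ has rank $n-k$ --- the same fact in two guises). The alternative puncturing argument you sketch is a valid independent check, but the main argument already matches the paper's proof.
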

\begin{proof}
	Seja $H$ a matriz de teste de paridade de $C$. Logo, posto de $H=n-k$. Do teorema \ref{teo:xx} segue que quaisquer $(d-1)$ colunas de $H$ são L.I.. Assim $d-1 \menori$ (posto de $H$), temos que $d\menori n-k+1$.
\end{proof}


\section{Lugares}
Nessa seção apresentaremos as estruturas necessárias para a definição dos códigos algébricos geométricos que veremos no próximo capítulo.
\begin{defn}
	\label{corpo_funcao} 
	Um corpo de funções algébricas\index{Corpo! de funções algébricas} $F/K$ de uma variável sobre $K$ é uma extensão de corpos $F \contem K$ tal que $F$ é uma extensão algébrica finita de $K(x)$ com $x\in F$ e transcendente sobre $K$.
\end{defn}

O conjunto $\Tilde{K} = \{a \in F ; a$ é algébrico em $K\} $ é um subcorpo de $F$ e mais $F/\Tilde{K}$ é um corpo de funções algébricas sobre $\Tilde{K}$. Esse conjunto é chamado de {\bf corpo de constantes}\index{Corpo! de constantes} de $F/K$. Neste trabalho vamos supor sempre que $K=\Tilde{K}$.

\begin{exemp}[Corpo de Funções racionais]
	O corpo de funções\index{Corpo! de funções racionais} $F/K$ é dito racional se $F=K(x)$ para algum $x$ que é transcendente sobre $K$, onde $K(x)$ é o corpo de frações do anel de polinômios, $K[x]$, em uma variável sobre o corpo $K$.
\end{exemp}
Qualquer elemento não nulo, $z\in K(x)$,  tem uma única representação 
$$z=a\prod_iP_i(x)^{n_i}$$ 
com $0\neq a \in K$, $P_i(x) \in K[x]$ mônicos irredutíveis distintos e $n_i \in \Z$.

A próxima definição nos traz um tipo de anel fundamental para o desenvolvimento do nosso trabalho.

\begin{defn}
    	Um anel de valorização\index{Anel! de valorização}  de um corpo de funções $F/K$ é um sub-anel $\av \contido F$ com as seguintes propriedades:
	\begin{enumerate}
    		\item $K \contidod \av \contidod F$;
		\item Para qualquer $z \in F$, $z \in \av$ ou $z^{-1} \in \av$.
	\end{enumerate}
\end{defn}
\begin{exemp}
	No corpo de funções racionais $K(x)$ aparecem os primeiros anéis de valorizações fundamentais, a saber:

	Seja $p(x) \in K[x]$, um polinômio irredutível. Definimos o conjunto
	$$\av_{p(x)} = \left\lbrace  \dfrac{f(x)}{g(x)} ; f(x),g(x) \in K[x], \,\, p(x)\ndivide g(x) \right\rbrace. $$
	Assim definido, $\av_{p(x)}$ é um anel de valorização de $K(x)/K$. Observe que se $q(x)$ for outro polinômio irredutível, não associado a $p(x)$, temos que $\av_{p(x)} \neq \av_{q(x)}.$
\end{exemp}

Os anéis de valorizações são caracterizados, segundo sua estrutura, pelos resultados:
\begin{prop}
    	Seja $\av$  um anel de valorização do corpo de funções $F/K$ então:
    	\begin{enumerate}
        	\item[a.] $\av$ é anel local, isto é $\av$ tem um único ideal maximal $P=\av\setminus \av^*$, onde $\av^*$ é o conjunto das unidades de $\av$;
        	\item[b.] Para $0\neq x \in F, \, \, x\in P \sse x^{-1} \notin \av$;
        	\item[c.] Para o corpo de constantes $\Tilde{K}$ de $F/K$ temos que $\Tilde{K} \contido \av$ e $\Tilde{K} \inter P = {0}.$
	\end{enumerate}
\end{prop}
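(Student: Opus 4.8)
The plan is to establish the three parts in order, using throughout the two defining properties of a valuation ring: that $K \contidod \av \contidod F$, and that for every $z \in F$ at least one of $z$, $z^{-1}$ lies in $\av$. Everything rests on first understanding the structure of the non-units, so I would set $P := \av \setminus \av^*$ from the outset.

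For part (a), I would show $P$ is an ideal. Multiplicative absorption is the easy half: if $x \in P$ and $a \in \av$, then $ax \in \av^*$ is impossible, since it would give $x^{-1} = a(ax)^{-1} \in \av$, making $x$ a unit; hence $ax \in P$. Closure under addition is the delicate half and I expect it to be the main obstacle. Given $x, y \in P$ (the sum is trivially in $P$ if one of them vanishes, so assume both nonzero), I would apply the dichotomy to the quotient $x/y$: without loss of generality $x/y \in \av$, so that $x + y = y(1 + x/y)$ exhibits $x+y$ as a product of $y \in P$ with an element of $\av$, whence $x+y \in P$ by the absorption just proved. Once $P$ is an ideal it is automatically the unique maximal ideal: any ideal meeting $\av^*$ is all of $\av$, so every proper ideal is contained in $P$, while $P$ itself is proper because $1 \in \av^*$.

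Part (b) should then fall out cleanly. For the forward direction, $x \in P$ together with $x^{-1} \in \av$ would make $x$ a unit, a contradiction, so $x^{-1} \notin \av$. Conversely, if $x^{-1} \notin \av$ the dichotomy forces $x \in \av$, and $x$ cannot be a unit (otherwise $x^{-1} \in \av$), so $x \in P$. For part (c), to show $\tilde{K} \contidoi \av$ I would argue by contradiction: a nonzero $z \in \tilde{K}$ with $z \notin \av$ has $z^{-1} \in \av$, indeed $z^{-1} \in P$ by part (b). Using that $z$ is algebraic over $K$, I would take a monic relation $z^m + c_{m-1}z^{m-1} + \cdots + c_0 = 0$ with $c_i \in K \contidoi \av$ and multiply through by $z^{-m}$ to get $1 = -(c_{m-1}z^{-1} + \cdots + c_0 z^{-m})$; since $z^{-1} \in P$ and $P$ is an ideal, the right-hand side lies in $P$, forcing $1 \in P$, which is absurd. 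Finally, for $\tilde{K} \inter P = \{0\}$, a nonzero $z$ in the intersection would be a unit of $\av$, as its inverse lies in $\tilde{K} \contidoi \av$, contradicting $z \in P$.

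I expect the closure of $P$ under addition in part (a) to be the only genuinely non-routine step; after that, parts (b) and (c) follow directly from the ideal structure of $P$ together with the dichotomy property, the algebraicity of $z$ in part (c) being used solely to produce the polynomial relation that is then pushed into $P$.
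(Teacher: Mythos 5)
Your proof is correct; the paper itself omits the demonstration of this proposition (it is one of the Chapter 1 results whose proofs are deferred to the bibliography, namely Stichtenoth's \emph{Algebraic Function Fields and Codes}), and your argument --- multiplicative absorption plus the dichotomy applied to $x/y$ to get closure of $P=\mathcal{O}\setminus\mathcal{O}^{*}$ under addition, then the monic algebraic relation multiplied by $z^{-m}$ to push $1$ into $P$ for part (c) --- is precisely the standard one found there. Nothing is missing: the implicit use of $x\neq 0$ in the absorption step and of $\tilde{K}$ being a field (stated in the paper just after the definition of the constant field) in the final step $\tilde{K}\cap P=\{0\}$ are both harmless and justified.
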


Do item (b), segue que $\av$ é unicamente determinado por $P$. De fato, $\av_P:=\av=\{x\in F; x\notin P\}.$

\begin{thm}
    	Seja $\av$ um anel de valorização do corpo de funções $F/K$ e $P$ o seu ideal maximal. Então,
    	\begin{enumerate}
        	\item[a.] $P$ é principal;
        	\item[b.] Se $P= t\av$ então qualquer $0\neq z \in F$ tem uma única representação da forma $z=t^nu$, para algum $n\in \Z$ e $u \in \av^*$;
        	\item [c.] $\av$ é um domínio principal.
	\end{enumerate}
\end{thm}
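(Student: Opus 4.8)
The plan is to reduce all three assertions to a single finiteness statement and then unwind them in the order (a), (b), (c). First I would record two preliminary facts. Since $K\contidod\av\contidod F$, the ring $\av$ is not a field (if it were, $z\notin\av$ would force $z^{-1}\in\av$ and hence $z\in\av$), so $P\neq\{0\}$. Moreover every $0\neq x\in P$ is transcendental over $K$: by the proposition $\tilde K\inter P=\{0\}$ and $K=\tilde K$, so $x\notin K$, and since $K$ is the full constant field this makes $x$ transcendental; consequently $m:=[F:K(x)]<\infty$. The whole theorem will then follow from the

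\noindent\textbf{Finiteness Lemma.} For every $0\neq x\in P$ one has $\dim_{\av/P}(\av/x\av)\menori m<\infty$.

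Proving this lemma is the main obstacle. I would show that if $u_1,\dots,u_s\in\av$ have images in $\av/x\av$ that are linearly independent over the residue field $\av/P$, then $u_1,\dots,u_s$ are linearly independent over $K(x)$ inside $F$, which immediately gives $s\menori m$. Arguing by contraposition, suppose $\sum_i f_iu_i=0$ with $f_i\in K(x)$ not all zero; clearing denominators I may take $f_i\in K[x]$, and after dividing the relation by the largest possible power of $x$ I may assume that not every $f_i$ is divisible by $x$. Writing $f_i=a_i+xg_i$ with $a_i\in K$ the constant term and $g_i\in K[x]\contidoi\av$, the relation becomes $\sum_i a_iu_i=-x\sum_i g_iu_i\in x\av$, where the $a_i\in K$ are not all zero. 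Since $K\inter P=\{0\}$ the nonzero $a_i$ survive in $\av/P$, so the images of the $u_i$ are dependent over $\av/P$, the desired contradiction. This is exactly the step where $[F:K(x)]<\infty$ is used.

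With the lemma in hand, part (a) is a chain argument. I would use the valuation-ring observation that if $y\in P\anti x\av$ then $x/y\in\av$, and in fact $x/y\in P$ (otherwise $y/x\in\av$, contradicting $y\notin x\av$), so $x\av\contidod y\av\contidoi P$. Thus, starting from any $0\neq x_1\in P$, as long as $x_i\av\neq P$ I can strictly enlarge $x_i\av$ to some $x_{i+1}\av\contidoi P$. Because $\dim_{\av/P}(\av/x_1\av)$ is finite, the $\av$-submodules of $\av/x_1\av$ cannot form an infinite strictly increasing chain, so the process must stop; it can only stop when $x_n\av=P$, whence $P=t\av$ with $t=x_n$.

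Finally I would derive (b) and (c) from $P=t\av$. The lemma also yields $\bigcap_{n}t^n\av=\{0\}$: a nonzero $z$ in every $t^n\av$ would satisfy $\dim_{\av/P}(\av/z\av)\maiori\dim_{\av/P}(\av/t^n\av)=n$ for all $n$, contradicting finiteness. Hence for $0\neq z\in\av$ repeated division by $t$ terminates, giving $z=t^nu$ with $u\notin P$, i.e. $u\in\av^*$; a general $0\neq z\in F\anti\av$ is handled by applying this to $z^{-1}\in P$ and inverting. Uniqueness is immediate, since $t^nu=t^{n'}u'$ with $n<n'$ would place the unit $u=t^{n'-n}u'$ inside $P$. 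For (c), given $\{0\}\neq I\contidoi\av$ let $n$ be the least exponent $n(z)\maiori 0$ occurring among $0\neq z\in I$; a witness $z_0=t^nu\in I$ gives $t^n\av\contidoi I$, while every $z\in I$ has exponent $\maiori n$ and so lies in $t^n\av$. Therefore $I=t^n\av$, and $\av$, being a subring of the field $F$, is a principal ideal domain.
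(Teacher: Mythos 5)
The paper never proves this theorem: it sits in Chapter~1 among the classical results whose \emph{demonstra\c{c}\~oes} are explicitly omitted, with the reader deferred to Stichtenoth. So there is no in-paper proof to compare against; measured against the standard source, your route is essentially the classical one (Stichtenoth, Lemma~1.1.7 and Theorem~1.1.6). Stichtenoth packages the finiteness as a chain lemma --- elements $x_1,\dots,x_n\in P$ with $x_i\in x_{i+1}P$ are linearly independent over $K(x_1)$, whence $n\menori [F:K(x_1)]$ --- proved by exactly your trick: clear denominators, divide out the largest power of $x$, and use $K\inter P=\{0\}$ to see the constant terms survive. Your repackaging as a dimension bound on $\av/x\av$ is a mild improvement in economy, since the single lemma then yields both the termination of the ascending chain in (a) and $\bigcap_n t^n\av=\{0\}$ in (b); your deductions of (b) and (c) from $P=t\av$ are the standard ones and are correct, including the uniqueness and the minimal-exponent argument for ideals.

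There is, however, one technical slip you must repair: $\dim_{\av/P}(\av/x\av)$ is not defined, because $\av/x\av$ is \emph{not} a vector space over the residue field --- $P$ does not annihilate it (in $\av/t^2\av$ one has $t\cdot 1\neq 0$). Worse, if you try to interpret ``independence over $\av/P$'' via $\av$-coefficients not all in $P$, the notion degenerates: in a valuation ring any two elements are comparable under divisibility, so any two elements of $\av/x\av$ would be ``dependent'' and the lemma would be vacuous. The fix is already latent in your own proof: the relation you extract has coefficients $a_i\in K$, so what you actually establish is $K$-linear independence, i.e.\ $\dim_{K}(\av/x\av)\menori m$, where $\av/x\av$ \emph{is} an honest $K$-vector space. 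Everything downstream survives this substitution: the submodules $x_i\av/x_1\av$ are in particular $K$-subspaces, so finiteness of $\dim_K$ kills the infinite chain in (a); and instead of $\dim(\av/t^n\av)=n$ one uses $\dim_K(\av/t^n\av)\maiori n$ (each inclusion in $\av\contem t\av\contem\cdots\contem t^n\av$ is strict because $t\notin\av^*$; in fact the dimension is $n\cdot[\av/P:K]$), which is all that $\bigcap_n t^n\av=\{0\}$ requires. With ``$\dim_{\av/P}$'' replaced by ``$\dim_K$'' throughout, your proof is complete and correct.
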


Os anéis de valorarização nos levam a um outro conceito, a saber, o conceito de lugar.
\begin{defn}
    	\begin{enumerate}
        	\item Um lugar\index{Lugar} $P$ do corpo de funções $F/K$ é um ideal maximal de algum anel de valorização de $F/K$. Qualquer elemento $t$ que gera $P$ ($P=t\av$) é dito elemento principal de $P$;
        	\item $\pf=\{ P; \, \, P$ é lugar em $F/K\}$.
	\end{enumerate}
\end{defn}
Também  um lugar dá origem a uma função especial que definiremos a seguir:
\begin{defn}
	Uma valorização discreta\index{Valorização! Discreta} de $F/K$ é uma função $v:F\emm \Z \uniao \{ \infty \} $ com as seguintes propriedades:
    	\begin{enumerate}
        	\item $v(x)=\infty \sse x=0$;
        	\item $v(xy)=v(x)+v(y)$, para qualquer $x,y \in F$;
        	\item $v(x+y) \maiori min\{v(x),v(y)\}$, para qualquer $x,y, \in F$;
        	\item Existe $z\in F$ com $v(z)=1$;
        	\item $v(a)=0$, para todo $a \in K$.
	\end{enumerate}
\end{defn}

\begin{defn}
    	Para $P\in \pf$ associamos uma função de valorização, $v_P:F\emm \Z\uniao \{\infty\}$, da seguinte maneira: dado um elemento principal $t \in P$  temos que todo elemento $0\neq z$ em $F$ é escrito de maneira única na forma $z=t^nu$, com $n \in \Z$ e 
	 $u\in \av_P^*$. Assim fazemos $v_P(z) = n$ e $v_P(0)=\infty$.
\end{defn}
Um primeiro resultado importante a respeito de $v_P$ é dado no seguinte teorema:
\begin{thm}
    	Seja $F/K$ um corpo de funções então:
    	\begin{enumerate}
        	\item[a.] Para qualquer $P\in \pf$, a função  de valorização $v_P$ definida como acima é uma valorização discreta que satisfaz:
		\begin{enumerate}
			\item[a.1.] $ \av_P = \{z \in F ; v_P(z) \maiori 0\};$
        		\item[a.2.] $ \av_P^* = \{ z \in F ; v_P(z) = 0 \};$
        		\item[a.3.] $ P = \{z \in F ; v_P(z) > 0\};$
        		\item[a.4.] Um elemento $x\in F$ é principal de $P$ se, e somente se, $v_P(x)=1$.
		\end{enumerate}
        	\item[b.] Qualquer anel de valorização $\av$ de $F/K$ é um subanel maximal de $F$.
    	\end{enumerate}
\end{thm}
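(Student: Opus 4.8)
The plan is to handle the two parts separately, leaning throughout on the structure theorem just stated: $P = t\av_P$ is principal and every nonzero $z \in F$ is written uniquely as $z = t^n u$ with $n \in \Z$ and $u \in \av_P^*$. This normal form is precisely what makes $v_P$ well-defined, and essentially every assertion will reduce to it.

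For part (a) I would verify the five valuation axioms in turn. Axiom (1) is immediate from $v_P(0)=\infty$. For multiplicativity (2), writing $x = t^n u$ and $y = t^m w$ gives $xy = t^{n+m}(uw)$ with $uw \in \av_P^*$, so $v_P(xy) = n+m$. For the ultrametric inequality (3), assuming $n = v_P(x) \menori v_P(y) = m$ I would factor $x+y = t^n(u + t^{m-n}w)$; since $m-n \maiori 0$ the second factor lies in $\av_P$, whence $v_P(x+y) \maiori n = \min\{v_P(x), v_P(y)\}$ (the case $x+y=0$ yielding $\infty$). Axiom (4) is witnessed by $t$ itself, and axiom (5) follows because $K \contido \av_P$ together with $K \inter P = \{0\}$ forces every nonzero constant into $\av_P^*$, hence to have valuation $0$.

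The identities a.1--a.4 then fall out of the same normal form. For a.1 the only point needing an argument is that $z = t^n u \in \av_P$ forces $n \maiori 0$: were $n < 0$, then $z^{-1} = t^{-n}u^{-1} \in P$, and item (b) of the valuation-ring proposition says $z^{-1} \in P$ iff $z \notin \av_P$, a contradiction; the converse is clear. Identities a.2 and a.3 are read off directly ($v_P(z)=0$ iff the exponent is $0$ iff $z \in \av_P^*$; and $v_P(z)>0$ iff $z \in \av_P \setminus \av_P^* = P$), while a.4 uses that two generators of the principal ideal $P$ differ by a unit, so $x\av_P = P$ exactly when $x = tu$ with $u \in \av_P^*$, i.e. $v_P(x) = 1$.

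For part (b), let $\av$ be a valuation ring with maximal ideal $P$, so that $\av = \av_P$ and the valuation $v_P$ of part (a) is available. I would show that any ring $R$ with $\av \contidod R \contidoi F$ must equal $F$. Pick $y \in R \setminus \av$; by a.1 this means $v_P(y) = -k$ with $k \maiori 1$. Given an arbitrary nonzero $z \in F$ with $v_P(z) = m$, choose an integer $j$ with $jk \maiori -m$; then $v_P(z y^{-j}) = m + jk \maiori 0$, so $z y^{-j} \in \av \contidoi R$ by a.1, and since $y^j \in R$ we conclude $z = (z y^{-j})\,y^j \in R$. Hence $R = F$. I expect the genuinely substantive step to be this last one: recognising that a single element of strictly negative valuation in $R$ already suffices to clear the denominator of every element of $F$. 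The remainder is bookkeeping with the normal form, the one subtlety being the sign argument in a.1 that invokes item (b) of the proposition.
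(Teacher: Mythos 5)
Your proof is correct, and there is nothing in the paper to contrast it with: this theorem belongs to the first chapter, where the classical results are stated without demonstration (the paper defers to Stichtenoth \cite{stich}), and your argument is precisely the standard one from that source --- the normal form $z=t^nu$ delivers the valuation axioms and the identities a.1--a.4 (with the sign argument in a.1 correctly routed through item (b) of the valuation-ring proposition, avoiding circularity with the ultrametric inequality), and the denominator-clearing step with a single element of negative valuation gives maximality in part (b). The only point worth tightening is that in part (b) you must choose $j\maiori 0$ (not merely $jk\maiori -m$) so that $y^j\in R$; since $j=0$ already works when $m\maiori 0$, this is a one-word fix and not a gap.
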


Definiremos agora alguns elementos da teoria que serão muito utilizados em nossa explanação.
\begin{defn}\label{def:corpo_residuo}
    	Sejam $P \in \pf$ e  $F_P=\av_P /P$  o corpo de resíduos de $P$. A aplicação $x \Emm x(P)$  de $F$ em $F_P\uniao \{\infty\}$, onde $x(P)=x+P$ se $x\in \av_P$ e $x(P)=\infty$ se $x\notin \av_P$, é chamada de aplicação de resíduos\index{Aplicação! de Resíduos} com respeito a $P$.
\end{defn}
Pode-se provar que (veja em \cite{stich}) se $P\in \pf$ então $K\contido F_P$ e $F_P/K$ é uma extensão finita e assim definimos:
\begin{defn}    
	Se $P\in \pf$ definimos o grau de $P$ como sendo:
	$$gr(P) = [F_P: K].$$
\end{defn}
Os conceitos de zeros e pólos são de importância fundamental para a descrição dos códigos algébricos geométricos. 
\begin{defn}
    	Seja $z \in F$ e $P\in \pf$. Dizemos que $P$ é um zero\index{Zero} de $z$ se, e somente se, $v_P(z) >0$, e  $P$ é um pólo\index{Pólo} de $z$ se, e somente se, $v_P(z) <0$. Se $v_P(z) = m>0$ falamos que $P$ é um zero de ordem $m$ e se $v_P(z)=-m<0$ dizemos que é um pólo de ordem $m$.
\end{defn}

O próximo resultado, chamado de teorema da aproximação fraca, vai ser útil na descrição de um dos códigos que iremos definir.

\begin{thm}[Aproximação Fraca]\label{teo:fraca}
	Sejam $\,\,F/K\,\,$ um corpo $\,\,$ de $\,\,$ funções, \mbox{$P_1,\ldots, P_n \in \pf$} lugares de $F/K$, dois a dois distintos, $x_1,\ldots , x_n \in F$ e $r_1,\ldots , r_n \in \Z$. Então, existe $x\in F$ tal que:
	$$v_{P_i}(x-x_i)=r_i, \textrm{ para } i=1,\ldots , n.$$
\end{thm}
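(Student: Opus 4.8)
The plan is to proceed in three stages: first isolate a single distinguished place, then build ``bump'' functions that look like $1$ at one place and vanish to high order at all the others, and finally assemble the desired $x$ as a weighted sum of bumps. Stage~1 is a key lemma: for pairwise distinct places $Q_1,\ldots,Q_m\in\pf$ there is $w\in F$ with $v_{Q_1}(w)>0$ and $v_{Q_j}(w)<0$ for $2\menori j\menori m$, which I would prove by induction on $m$. For $m=2$, since $Q_1\neq Q_2$ the valuation rings $\av_{Q_1}$ and $\av_{Q_2}$ are distinct, and because every valuation ring is a maximal subring of $F$ neither is contained in the other; choosing $a\in\av_{Q_1}\setminus\av_{Q_2}$ and $b\in\av_{Q_2}\setminus\av_{Q_1}$ and setting $w=a/b$ gives $v_{Q_1}(w)=v_{Q_1}(a)-v_{Q_1}(b)>0$ and $v_{Q_2}(w)=v_{Q_2}(a)-v_{Q_2}(b)<0$. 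For the inductive step I take $y$ working for $Q_1,\ldots,Q_{m-1}$ and $z$ working for the pair $Q_1,Q_m$, and set $w=y+z^{s}$: then $v_{Q_1}(w)>0$ automatically, $v_{Q_m}(w)<0$ once $s$ is large enough that $z^{s}$ dominates, and $v_{Q_j}(w)=v_{Q_j}(y)<0$ for $2\menori j\menori m-1$ provided $s\,v_{Q_j}(z)\neq v_{Q_j}(y)$, which excludes only finitely many $s$.

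Stage~2 produces the bumps. Applying Stage~1 with $P_k$ in the role of $Q_1$, for each $k$ I obtain $w_k$ with $v_{P_k}(w_k)>0$ and $v_{P_i}(w_k)<0$ for $i\neq k$. Setting $u_k=1/(1+w_k^{s})$ and computing with the valuation axioms gives $v_{P_k}(u_k-1)=s\,v_{P_k}(w_k)$ and $v_{P_i}(u_k)=-s\,v_{P_i}(w_k)=s\,\abs{v_{P_i}(w_k)}$ for $i\neq k$, both of which exceed any prescribed bound once $s$ is large. Thus $u_k$ is congruent to $1$ at $P_k$ and vanishes to high order at every other $P_i$.

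Stage~3 is the assembly. Fix local parameters $t_i$ with $v_{P_i}(t_i)=1$ (these exist since each $P_i$ is principal) and set
$$x=\sum_{j=1}^{n}\coc{x_j+t_j^{r_j}}\,u_j,$$
choosing the exponents in the $u_j$ large enough. Expanding $x-x_i$ at $P_i$, the term $t_i^{r_i}u_i$ has valuation exactly $r_i$ (because $v_{P_i}(u_i)=0$), whereas $x_i(u_i-1)$ and the terms $\coc{x_j+t_j^{r_j}}u_j$ with $j\neq i$ all have valuation strictly larger than $r_i$ once the bumps are sharp enough. Since exactly one term attains the smallest valuation, the strict triangle inequality yields $v_{P_i}(x-x_i)=r_i$ for every $i$, as required.

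The main obstacle is the inductive step of Stage~1: the naive combinations $y+z^{s}$ or $y\,z^{s}$ may fail through cancellation of leading terms at the intermediate places $Q_j$ with $v_{Q_j}(y)<0$, so the crux is to check that such a coincidence $s\,v_{Q_j}(z)=v_{Q_j}(y)$ can occur for only finitely many $s$ and hence that a suitable $s$ exists. Everything after Stage~1 is bookkeeping with the valuation axioms, above all repeated use of the strict triangle inequality $v_P(\alpha+\beta)=\min\set{v_P(\alpha),v_P(\beta)}$ whenever $v_P(\alpha)\neq v_P(\beta)$.
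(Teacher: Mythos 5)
Your proof is correct and is essentially the classical argument of Stichtenoth (Theorem I.3.1 via Lemmas I.3.2 and I.3.3), which is precisely the source this paper defers to: the theorem is stated here without proof, as the chapter announces. Your only real departure is a small streamlining at the end --- rather than approximating twice (once for the $x_i$ and once for the local-parameter powers $t_i^{r_i}$) and summing the two results, you fold $t_i^{r_i}$ into the single sum $\sum_j (x_j + t_j^{r_j})u_j$ and read off the exact valuation $r_i$ directly from the strict triangle inequality, which works once the exponents in the $u_j$ are taken large enough.
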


\begin{cor}
	O corpo de funções possui infinitos lugares.
\end{cor}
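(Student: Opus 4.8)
The plan is to argue by contradiction using the Weak Approximation Theorem, Theorem \ref{teo:fraca}. Suppose, contrary to the claim, that $F/K$ had only finitely many places, say $\pf = \set{P_1, \ldots, P_n}$ with $n \maiori 1$ (at least one place exists, since every function field admits a valuation ring). I would then feed this finite list into Theorem \ref{teo:fraca}, choosing $x_i = 0$ and $r_i = 1$ for every $i$, which produces an element $x \in F$ with $v_{P_i}(x) = 1$ for all $i = 1, \ldots, n$.

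First I would record the consequences of this choice. Since $v_{P_i}(x) = 1 > 0$ at each of the (assumed) finitely many places, the element $x$ has a zero at every place of $F/K$ and a pole at none; in particular $x \neq 0$. Next I would check that $x$ is not a constant: by the valuation axiom $v(a) = 0$ for every $a \in K$, each element of $K$ has valuation zero at every place, whereas $v_{P_1}(x) = 1 \neq 0$, so $x \notin K$. Because $K = \widetilde{K}$ is algebraically closed in $F$, this forces $x$ to be transcendental over $K$.

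The decisive step is then to derive the contradiction: a transcendental element of $F$ must possess at least one pole, i.e.\ there must exist a place $Q \in \pf$ with $v_Q(x) < 0$. This contradicts the fact just established that $x$ has no poles, and the assumption that $\pf$ is finite collapses. Since the same construction applies to any finite subset of $\pf$, no finite family of places can exhaust $\pf$, so $\pf$ is infinite.

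I expect the main obstacle to be precisely this last step, namely justifying that a non-constant element must have a pole. This is not a formal consequence of the approximation theorem itself; it rests on the structure theory of the valuation rings of $F/K$ (equivalently, on the statement that the intersection of all valuation rings of $F/K$ equals $K$, so that a function lying in every $\av_{P}$ is necessarily constant). Once that fact is available the contradiction is immediate, and the rest of the argument is purely formal.
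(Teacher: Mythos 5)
Your argument is correct, and it is in fact the canonical proof of this corollary: the paper states it without proof, but your construction is exactly the one in Stichtenoth (the dissertation's main reference, Cor.\ 1.3.2 there) --- assume $\pf=\{P_1,\ldots,P_n\}$, apply the weak approximation theorem with $x_i=0$, $r_i=1$ to get $x$ with $v_{P_i}(x)=1$ at every place, conclude $x$ is a nonzero non-constant without poles, contradiction. The step you rightly flag as the crux --- that a transcendental element must have a pole --- is indeed not a consequence of the approximation theorem, but it is available within the paper's own toolkit: the theorem in the section on divisors asserts that for $x\in F\setminus K$ one has $gr\left((x)_{\infty}\right)=[F:K(x)]\maiori 1$, so $(x)_{\infty}\neq 0$ and $x$ has at least one pole (equivalently, in Stichtenoth this is Cor.\ 1.1.20, proved via the existence theorem for places before weak approximation). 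Quoting that result involves no circularity, since its proof nowhere uses the infinitude of $\pf$; the only cosmetic issue is that in the dissertation's ordering it appears one section later, so you should cite it explicitly rather than leave it as an expected obstacle. Two further small observations: the nonemptiness of $\pf$ that you assume also follows from the same fact (any transcendental element has a zero); and your appeal to $K=\widetilde{K}$ can be bypassed, since the paper's proposition that $\widetilde{K}\inter P=\{0\}$ already shows a nonzero element with $v_{P_1}(x)=1>0$ cannot lie in $\widetilde{K}$. Your closing remark that the construction rules out any finite subset exhausting $\pf$ is redundant but harmless --- the contradiction already suffices.
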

\section{O Corpo de Funções Racionais} \index{Corpo! de funções racionais}

Como definimos anteriormente, o corpo de funções racionais  é o próprio corpo de frações $K(x)/K$. Já tínhamos definido o anel de valorização
$$\av_{p(x)} = \left\lbrace  \dfrac{f(x)}{g(x)} ; f(x),g(x) \in K[x], \,\, p(x)\ndivide g(x) \right\rbrace, $$
para um polinômio $p(x)\in K[x]$ irredutível. Observamos agora que para esse anel, o ideal
$$P_{p(x)} = \left\lbrace  \dfrac{f(x)}{g(x)} ; f(x),g(x) \in K[x], \,\,p(x)|f(x), p(x)\ndivide g(x) \right\rbrace ,$$
é seu único ideal maximal. Consideremos agora um outro anel de valorização
$$\av_{\infty} = \left\lbrace  \dfrac{f(x)}{g(x)} ; f(x),g(x) \in K[x], \,\, gr(f(x))\menori gr(g(x)) \right\rbrace ,$$
que tem como ideal maximal
$$P_{\infty} = \left\lbrace  \dfrac{f(x)}{g(x)} ; f(x),g(x) \in K[x], \,\, gr(f(x))< gr(g(x)) \right\rbrace .$$

Na realidade estes são os únicos anéis de valorizações de $K(x)$ e tal fato é expresso no resultado:
\begin{thm}
	Os únicos lugares de $K(x)/K$ são os da forma $P_{p(x)}$  e $P_{\infty}$ como acima definidos.
\end{thm}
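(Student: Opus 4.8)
Este teorema afirma que os únicos lugares do corpo de funções racionais $K(x)/K$ são os lugares finitos $P_{p(x)}$ associados a polinômios irredutíveis $p(x)\in K[x]$, juntamente com o lugar no infinito $P_{\infty}$. O plano é mostrar que qualquer anel de valorização $\av$ de $K(x)/K$ coincide com algum $\av_{p(x)}$ ou com $\av_{\infty}$, dividindo a análise em dois casos conforme o comportamento de $\av$ sobre o anel de polinômios $K[x]$.

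Primeiro eu consideraria o caso em que $K[x]\contido \av$. Dado um lugar $P$ de $\av$, o ideal primo $P\inter K[x]$ é um ideal primo de $K[x]$. Como $K[x]$ é um domínio de ideais principais, esse ideal é ou $(0)$ ou gerado por um polinômio mônico irredutível $p(x)$. O caso $(0)$ é impossível: se $P\inter K[x]=(0)$, então todo polinômio não nulo seria uma unidade de $\av$ (pois nenhum estaria em $P$), o que forçaria $K(x)\contido \av$ e contradiria a condição $\av\contidod F$ da definição de anel de valorização. Assim $P\inter K[x]=(p(x))$ para algum irredutível $p(x)$, e eu mostraria que isso implica $\av=\av_{p(x)}$: por um lado, qualquer $f/g$ com $p(x)\ndivide g(x)$ satisfaz $g(x)\notin P$, logo $g(x)$ é unidade em $\av$ e $f/g\in\av$, dando $\av_{p(x)}\contidoi\av$; por outro lado, como $\av_{p(x)}$ é maximal entre os subanéis de $F$ (pelo teorema anterior sobre anéis de valorização) e $\av\neq F$, segue a igualdade.

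O segundo caso é $K[x]\not\contido \av$. Aqui existe algum polinômio $h(x)\in K[x]$ com $h(x)\notin\av$. Pela propriedade definidora do anel de valorização, $h(x)^{-1}\in\av$, e como $h(x)\notin\av$ o elemento $h(x)^{-1}$ pertence ao ideal maximal $P$, isto é, $v_P(h(x))<0$. Eu argumentaria que isso obriga $v_P(x)<0$: caso contrário teríamos $v_P(x)\maiori 0$, logo $x\in\av$, e então todo polinômio estaria em $\av$, contradizendo a hipótese do caso. Com $v_P(x)<0$ estabelecido, mostro que $\av=\av_{\infty}$ analisando uma fração arbitrária $f/g$ em termos de $v_P(x)$: como $v_P$ é uma valorização discreta e $v_P(x)=-m$ para algum $m>0$, o valor $v_P$ de um polinômio de grau $d$ é $-md$, de modo que $v_P(f/g)\maiori 0$ equivale exatamente a $gr(f)\menori gr(g)$, recuperando precisamente a descrição de $\av_{\infty}$.

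O principal obstáculo será o segundo caso, em particular justificar cuidadosamente que $v_P(x)<0$ força $v_P(p(x))=-m\cdot gr(p(x))$ para todo polinômio $p(x)$, e daí a caracterização via graus. Isso depende de escrever $p(x)=\sum a_i x^i$ e usar a desigualdade ultramétrica $v_P(x+y)\maiori min\{v_P(x),v_P(y)\}$, notando que quando os valores $v_P$ das parcelas são distintos a desigualdade vira igualdade; como $v_P(a_ix^i)=i\cdot v_P(x)$ são todos distintos (pois $v_P(x)\neq 0$ e $v_P(a_i)=0$ para constantes), o termo de maior grau domina e determina $v_P(p(x))$. Esgotados os dois casos, e lembrando que todo lugar é o ideal maximal de um anel de valorização, concluímos que não há outros lugares além dos listados.
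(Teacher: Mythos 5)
Your proposal is correct. Note that the paper itself omits the proof of this theorem (the first chapter states that most proofs are deferred to the bibliography, here Stichtenoth's book), so the natural comparison is with the classical argument there, which your Case 1 reproduces exactly: intersect $P$ with $K[x]$, rule out the zero ideal because it would make every nonzero polynomial a unit and force $\av = K(x)$, obtain $(p(x))$, deduce $\av_{p(x)} \contidoi \av$, and close with the maximality of valuation rings as proper subrings of $F$ (the paper's item b.\ of the theorem on valuation rings). In Case 2 you take a slightly different and arguably cleaner route: instead of passing to $K[x^{-1}]$, localizing at $(x^{-1})$ and again invoking maximality, you compute $v_P$ directly on polynomials, using that $v_P(a_ix^i) = i\,v_P(x)$ are pairwise distinct when $v_P(x) = -m < 0$, so the ultrametric inequality is an equality and $v_P(f) = -m\cdot gr(f)$; then $v_P(f/g) \maiori 0 \sse gr(f) \menori gr(g)$ identifies $\av$ with $\av_{\infty}$ on the nose, via the paper's characterization $\av_P = \{z \in F ; v_P(z) \maiori 0\}$. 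This buys you an explicit description of the valuation (it is $m$ times the degree valuation, and one could even note $m=1$ a posteriori) at the cost of a small computation, whereas the reference's argument is uniform across both cases. All the supporting facts you use (the unique maximal ideal $P = \av \setminus \av^*$, $v(a)=0$ for constants, the degree criterion being independent of the representation $f/g$) are available in the paper, so there is no gap.
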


\section{Divisores}
Nessa seção estaremos preocupados com a definição dos divisores de um corpo de funções, os quais serão fundamentais para a construção de alguns dos códigos que trataremos neste trabalho.
\begin{defn}
	O grupo aditivo abeliano livre que tem como base livre os lugares de $F/K$ é denotado por $\df$ e é chamado de grupo de divisores \index{Divisores} de $F/K$. Os elementos de $\df$ são chamados de divisores e são da forma
    	$$D = \sum_{P\in \pf} n_P P, \textrm{ com } n_P\in \Z\textrm{, único e quase sempre nulo}.$$
\end{defn}
Chamamos de {\bf suporte de um divisor} \index{Suporte! do divisor} $D$ o conjunto dos lugares $P$ tais que \mbox{$n_P \neq 0$} e denotamos por $supp(D)$. Dados dois divisores $D=\sum n_PP$ e $D'=\sum n'_PP$  a soma deles é dada por:
$$ D + D' = \sum_{P\in \pf} (n_P + n'_P)P.$$
E mais, para cada $P\in \pf$, definimos $v_P(D)= n_P$, assim também podemos definir uma ordem parcial de $\df$ por
$$D_1 \menori D_2 \sse v_P(D_1) \menori v_P(D_2), \,\, \ptodo P \in \pf.$$

Chamamos de grau de um divisor \index{Grau! do divisor} $D$ ao número $gr(D) = \sum_{P\in \pf} v_P(D) gr(P).$

Definiremos agora os divisores principais, os quais trazem importantes conseqüências para a teoria. Pode-se provar, veja em \cite{stich}, que se $x\in F$ então o número de zeros e pólos de $x$ é finito e esse fato nos leva a seguinte definição:

\begin{defn}
	Seja $0\neq x \in F$ e denotemos por $Z$ e $N$ o conjunto de zeros e pólos de $x$ em $\pf$, respectivamente, assim definimos:
    	\begin{enumerate}
        	\item $(x)_0=\sum_{P\in Z} v_P(x)P$, o divisor de zeros de $x$;
        	\item $(x)_{\infty} = \sum_{P\in N}(-v_P(x))P$, o divisor de pólos de $x$;
        	\item $(x) = (x)_0 -(x)_{\infty}$, o divisor principal de $x$.
    	\end{enumerate}
\end{defn}

\begin{defn}
    	Definimos o {\bf grupo de divisores principais} \index{Grupo! de divisores principais} de $F/K$ como sendo $\Pf = \{ (x) ;  0\neq x \in F\}$ (este é um subgrupo de $\df$ desde que para $0\neq x,y \in F, \, (xy)=(x)+(y)$). O quociente $\cf = \df/\Pf$ é chamado grupo de classe de divisores e dizemos que $D$ é equivalente a $D'$, denotando por $D\sim D'$, se $[D]=[D']$, isto é, se\index{Divisores! equivalentes} $D'=(x)+D$ para algum $x\in \F$.
\end{defn}
Agora vamos definir um espaço vetorial associado a um divisor, o qual nos dará a definição de  dimensão desse divisor e também de um invariante muito importante para a teoria.
\begin{defn}
    	Para um divisor $A\in \df$ definimos o conjunto
    	$$\LL(A) = \{ x \in F ; (x) \maiori -A\} \cup \{0\}.$$
\end{defn}
\begin{lem}
    	Seja $A \in \df$ então temos
    	\begin{enumerate}
        	\item[a.] $\LL(A)$ é um $K-$espaço vetorial, de dimensão finita;
        	\item[b.] Se $A' \sim A$ então $\LL(A') \iso \LL(A)$.
    	\end{enumerate}
\end{lem}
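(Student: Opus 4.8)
The plan is to treat the two parts separately: the $K$-vector-space structure in (a) and the whole of (b) are formal consequences of the valuation axioms and of the identity $(xy)=(x)+(y)$, whereas the finite-dimensionality assertion in (a) carries all the genuine content and is where I expect the real work.

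For the vector-space structure, take nonzero $x,y\in\LL(A)$ and $a\in K$. Since $v_P$ is a discrete valuation for every $P\in\pf$, axiom (3) (the ultrametric inequality) gives $v_P(x+y)\maiori\min\{v_P(x),v_P(y)\}\maiori -v_P(A)$ for all $P$, so $x+y\in\LL(A)$; and axiom (5), $v_P(a)=0$ for $a\in K$, gives $v_P(ax)=v_P(x)\maiori -v_P(A)$, so $ax\in\LL(A)$. Hence $\LL(A)$ is a $K$-subspace of $F$. For finiteness I would proceed in three steps. First reduce to effective divisors: writing $A=A_+-A_-$ with $A_+,A_-\maiori 0$ of disjoint support, one has $A\menori A_+$, whence $\LL(A)\contidoi\LL(A_+)$, so it suffices to bound $\dim_K\LL(A_+)$. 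Second, establish the comparison lemma: for any divisor $B$ and any place $P$,
$$\dim_K\big(\LL(B+P)/\LL(B)\big)\menori gr(P).$$
Third, telescope: writing $A_+$ as a sum of places (with multiplicity) and chaining the inequality from $\LL(0)$ upward, and invoking the standard fact that under our hypothesis $K=\tilde K$ the only functions without poles are constants (so $\LL(0)=K$ has dimension $1$), one gets $\dim_K\LL(A_+)\menori 1+gr(A_+)<\infty$.

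The comparison lemma is the crux. Fix $P$, let $t$ be a prime element of $P$ (so $v_P(t)=1$), and set $n=v_P(B)$. For $x\in\LL(B+P)$ one has $v_P(x)\maiori -(n+1)$, hence $v_P(t^{n+1}x)\maiori 0$, so $t^{n+1}x\in\av_P$ and has a well-defined residue $(t^{n+1}x)(P)\in F_P$ in the sense of Definition \ref{def:corpo_residuo}. The resulting map $\varphi\colon\LL(B+P)\to F_P$, $x\mapsto(t^{n+1}x)(P)$, is $K$-linear, and since the conditions at places $Q\neq P$ coincide for $\LL(B+P)$ and $\LL(B)$, one checks $\varphi(x)=0\sse v_P(x)\maiori -n\sse x\in\LL(B)$, i.e. $\ker\varphi=\LL(B)$. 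Thus $\varphi$ induces an injection $\LL(B+P)/\LL(B)\hookrightarrow F_P$, and since $\dim_K F_P=[F_P:K]=gr(P)$, the lemma follows. I expect the bookkeeping in this step — the exponent shift and the kernel computation — to be the only subtle point; everything else is bounded by it.

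For part (b), if $A'\sim A$ then by definition $A'=(z)+A$ for some nonzero $z\in F$. Multiplication by $z^{-1}$ is $K$-linear, and for $x\in\LL(A)$ one computes $(xz^{-1})=(x)-(z)\maiori -A-(z)=-A'$, so $xz^{-1}\in\LL(A')$; symmetrically, multiplication by $z$ sends $\LL(A')$ into $\LL(A)$, using $(yz)=(y)+(z)\maiori -A'+(z)=-A$. These two maps are mutually inverse, so they furnish the desired $K$-isomorphism $\LL(A)\iso\LL(A')$.
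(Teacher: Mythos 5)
Your proof is correct, and in fact the paper itself gives no proof of this lemma (it belongs to the Chapter 1 material whose demonstrations are explicitly omitted and deferred to Stichtenoth \cite{stich}). Your argument is precisely the standard one from that reference: part (b) by multiplication by $z$ with $A'=(z)+A$, and finiteness in (a) via the comparison bound $\dim_K\bigl(\LL(B+P)/\LL(B)\bigr)\menori gr(P)$ obtained from the residue map $x\mapsto(t^{n+1}x)(P)$, telescoped over the effective part $A_+$ starting from $\LL(0)=K$ (which uses $K=\Tilde{K}$), with all steps checking out.
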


\begin{defn}
    	Para $A\in \df$, definimos a dimensão do divisor \index{Dimensão! do divisor} $A$ como sendo $dim(A) = dim_K(\LL(A))$.
\end{defn}

\begin{thm}
	Qualquer divisor principal tem grau zero. Mais precisamente dado $x\in F \setminus K$ e $(x)_0,\, (x)_{\infty} $ denotando os divisores de zeros e pólos, respectivamente, do divisor de $x$, então:
	$$ gr((x)_0) = gr((x)_{\infty}) = [F:K(x)].$$
\end{thm}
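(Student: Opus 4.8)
O plano � reduzir tudo � igualdade $gr((x)_\infty)=[F:K(x)]$. De fato, aplicando essa igualdade a $1/x$, cujo corpo satisfaz $K(1/x)=K(x)$ e cujo divisor de p�los � exatamente $(x)_0$, obtenho $gr((x)_0)=[F:K(x)]$ tamb�m; subtraindo, $gr((x))=gr((x)_0)-gr((x)_\infty)=0$. Escrevo $n:=[F:K(x)]$ e $B:=(x)_\infty=\sum_{i=1}^r c_iP_i$, onde $c_i=-v_{P_i}(x)\maiori 1$ e $P_1,\ldots,P_r$ s�o os p�los de $x$, de modo que $gr(B)=\sum_{i=1}^r c_i\,gr(P_i)=:m$. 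A meta passa a ser mostrar $m=n$ por dupla desigualdade. Como ferramenta preliminar, estabeleceria a cota $dim(A)\menori gr(A)+1$ para $A\maiori 0$: partindo de $\LL(0)=K$ e somando um lugar $P$ de cada vez, a aplica��o $K$-linear $z\Emm$ (classe de res�duo de $t^{v_P(A)+1}z$ em $F_P$) tem n�cleo $\LL(A)$, logo $dim(A+P)-dim(A)\menori gr(P)$, e a cota segue por indu��o no grau.

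Para a desigualdade $n\menori m$, tomaria uma base $u_1,\ldots,u_n$ de $F$ sobre $K(x)$ e um divisor $C\maiori 0$ com $(u_i)\maiori -C$ para todo $i$. Para cada inteiro $l\maiori 0$, os $n(l+1)$ elementos $x^ju_i$ (com $0\menori j\menori l$ e $1\menori i\menori n$) pertencem a $\LL(lB+C)$, pois $(x^ju_i)\maiori -lB-C$, e s�o linearmente independentes sobre $K$, j� que os $u_i$ s�o independentes sobre $K(x)$ e $1,x,\ldots,x^l$ s�o independentes sobre $K$. Assim $dim(lB+C)\maiori n(l+1)$ e, pela cota preliminar, $n(l+1)\menori l\,gr(B)+gr(C)+1$ para todo $l$; isolando obtenho $l(n-m)\menori gr(C)+1-n$, e fazendo $l\To\infty$ concluo $n\menori m$.

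A desigualdade rec�proca $m\menori n$ � o ponto mais delicado e constitui o principal obst�culo. A ideia � exibir $m$ elementos de $F$ linearmente independentes sobre $K(x)$; como $dim_{K(x)}F=n$, isso for�aria $m\menori n$. Para constru�-los, em cada p�lo $P_i$ fixaria um elemento principal $t_i$ e elementos $b_{i,1},\ldots,b_{i,gr(P_i)}\in\av_{P_i}$ cujas classes de res�duo formam uma base de $F_{P_i}$ sobre $K$; em seguida, via o Teorema da Aproxima��o Fraca (Teorema \ref{teo:fraca}), para cada terno $(i,j,k)$ com $1\menori j\menori gr(P_i)$ e $1\menori k\menori c_i$, produziria $z_{ijk}\in F$ com a mesma parte principal que $t_i^{-k}b_{i,j}$ em $P_i$ (p�lo de ordem exatamente $k$, res�duo prescrito) e controlado nos demais $P_{i'}$. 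A verifica��o da independ�ncia sobre $K(x)$ � o cora��o do argumento: dada uma rela��o $\sum_{i,j,k}\varphi_{ijk}z_{ijk}=0$ com $\varphi_{ijk}\in K(x)$ n�o todos nulos, limparia denominadores para supor $\varphi_{ijk}\in K[x]$ --- cujos p�los em $F$ ficam ent�o concentrados em $P_1,\ldots,P_r$ --- e compararia, lugar a lugar, a ordem do termo de p�lo dominante e seu res�duo, extraindo uma combina��o $K$-linear nula n�o trivial dos $b_{i,j}$ em algum $F_{P_i}$, o que contradiz a escolha da base. � esse controle simult�neo de ordens de p�lo e de res�duos em todos os $P_i$ que torna a etapa tecnicamente exigente.

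Combinando $n\menori m$ e $m\menori n$ obtenho $gr((x)_\infty)=m=n=[F:K(x)]$; a redu��o inicial via $1/x$ fornece $gr((x)_0)=[F:K(x)]$ e, por subtra��o, $gr((x))=0$, completando a prova.
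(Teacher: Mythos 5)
Sua proposta est\'a correta e, como o texto omite a demonstra\c{c}\~ao deste teorema (o cap\'{\i}tulo 1 remete a \cite{stich}), a compara\c{c}\~ao natural \'e com a prova do Teorema I.4.11 daquela refer\^encia, que o seu esbo\c{c}o reconstr\'oi fielmente: a redu\c{c}\~ao via $1/x$, a cota $dim(A)\menori gr(A)+1$ somando um lugar de cada vez, a contagem dos $x^ju_i$ em $\LL(lB+C)$ para obter $n\menori m$, e a constru\c{c}\~ao por aproxima\c{c}\~ao fraca com bases de res\'{\i}duos para $m\menori n$. O passo delicado que voc\^e apenas esbo\c{c}a fecha exatamente como indicado --- normalizados os $\varphi_{ijk}$ em $K[x]$, a ordem de p\'olo dominante num $P_{i_0}$ adequado \'e da forma $c_{i_0}D+k$ com $1\menori k\menori c_{i_0}$, o que determina $k$ e $D$ de modo \'unico; multiplicando a rela\c{c}\~ao por $t_{i_0}^{c_{i_0}D+k}$ e tomando classes de res\'{\i}duo em $F_{P_{i_0}}$ obt\'em-se uma combina\c{c}\~ao $K$-linear nula n\~ao trivial dos $\ba{b}_{i_0,j}$, um absurdo --- e a \'unica hip\'otese a citar na base da indu\c{c}\~ao, $\LL(0)=K$, \'e que $K$ seja o corpo de constantes (todo $x\in F\setminus K$ tem p\'olo), fato estabelecido antes do teorema, de modo que n\~ao h\'a circularidade.
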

Como conseqüência imediata deste teorema temos: 
\begin{cor}
	\label{corolario_grau_1}
	\begin{enumerate}
		\item Seja $A,\,A'$ divisores tais que $A\sim A'$. Assim temos que $dim(A) = dim(A')$ e que $gr(A) = gr(A')$;
		\item Se $gr(A) < 0$ então $dim(A) = 0$;
		\item Para um divisor $A$ de grau zero , temos que são equivalentes:
		\begin{enumerate}
			\item $A$ é um divisor principal;
			\item $dim(A)>0;$
			\item $dim(A)=1.$
		\end{enumerate}
	\end{enumerate}
\end{cor}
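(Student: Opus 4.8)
The plan is to handle the three items in turn, leaning on the preceding theorem (every principal divisor has degree zero) together with part (b) of the lemma ($\LL(A') \iso \LL(A)$ whenever $A' \sim A$). For item 1, the equality $dim(A) = dim(A')$ is immediate from the lemma, since isomorphic $K$-vector spaces share the same dimension. For the degrees, I would write $A' = A + (x)$ for a suitable $0 \neq x \in F$ (this is precisely what $A \sim A'$ means) and then invoke additivity of $gr$ on $\df$ together with $gr((x)) = 0$, obtaining $gr(A') = gr(A) + gr((x)) = gr(A)$.

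For item 2 I would argue by contradiction. If $dim(A) > 0$ there is some $0 \neq x \in \LL(A)$, that is $(x) \maiori -A$, so that $A + (x) \maiori 0$ is effective. Since each place has positive degree, an effective divisor has non-negative degree; hence $0 \menori gr(A + (x)) = gr(A)$, contradicting $gr(A) < 0$. Therefore $dim(A) = 0$.

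For item 3 I would run the cycle (a) $\implica$ (c) $\implica$ (b) $\implica$ (a), the middle implication being trivial. For (a) $\implica$ (c), if $A = (x)$ is principal then $A \sim 0$, so by item 1 $dim(A) = dim(0) = dim_K \LL(0)$, and it remains to identify $\LL(0) = K$, giving $dim(A) = 1$. For (b) $\implica$ (a), from $dim(A) > 0$ I pick $0 \neq x \in \LL(A)$, so $A + (x) \maiori 0$ is effective of degree $gr(A) = 0$; but an effective divisor of degree zero is forced to be the zero divisor, whence $A = -(x) = (x^{-1})$ is principal.

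The main obstacle is the identification $\LL(0) = K$ used in (a) $\implica$ (c). I would prove it by observing that $K \contido \LL(0)$ because every nonzero constant has trivial divisor, while conversely any $0 \neq z \in \LL(0)$ has no poles; were $z \notin K$, the theorem would give $gr((z)_\infty) = [F : K(z)] > 0$, so $z$ would have a pole, a contradiction, forcing $z \in K$. The recurring subsidiary fact, that an effective divisor has degree $\maiori 0$ with equality only for the zero divisor, is read off from $gr(D) = \sum_P v_P(D)\, gr(P)$ with all $v_P(D) \maiori 0$ and each $gr(P) \maiori 1$.
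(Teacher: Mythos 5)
Your proof is correct and is exactly the standard argument that the paper leaves implicit (the corollary is stated as an immediate consequence of the theorem that principal divisors have degree zero, with no written proof, following \cite{stich}). All the ingredients you use --- the lemma $\LL(A')\iso\LL(A)$ for $A\sim A'$, additivity of $gr$, the observation that an effective divisor has degree $\maiori 0$ with equality only for the zero divisor, and the identification $\LL(0)=K$ (which, as you correctly note, needs $gr((z)_{\infty})=[F:K(z)]>0$ for $z\notin K$, valid here because the paper assumes $K=\Tilde{K}$) --- are precisely those the paper's phrasing presupposes.
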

A partir da proposição que enuciaremos a seguir definimos o gênero de um corpo de funções algébricas que é um invariante que depende apenas do corpo.

\begin{prop}
	Existe uma constante $\gamma\in \Z$ tal que, para todo divisor $A\in \df$ temos que:
	$$ gr(A) - dim(A) \menori \gamma.$$
\end{prop}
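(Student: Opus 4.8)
The plan is to reduce the statement about an arbitrary divisor to a single uniform estimate on the multiples of one pole divisor. First I would fix a transcendental element $x\in F\setminus K$ and set $B=(x)_{\infty}$, so that $gr(B)=[F:K(x)]=:n$ by the previous degree theorem. Choosing a basis $u_1,\dots,u_n$ of $F$ over $K(x)$ and noting that each $u_j$ has only finitely many poles, I would pick a divisor $C\maiori 0$ with $u_j\in\LL(C)$ for every $j$. The first key step is then the lower bound $dim(mB+C)\maiori (m+1)n$ for all $m\maiori 0$: I consider the $(m+1)n$ elements $x^{i}u_j$ with $0\menori i\menori m$ and $1\menori j\menori n$; since $(x^{i})_{\infty}=iB\menori mB$, each $x^{i}u_j$ lies in $\LL(mB+C)$, and they are linearly independent over $K$, because a relation $\sum_{i,j}c_{ij}x^{i}u_j=\sum_{j}\left(\sum_{i}c_{ij}x^{i}\right)u_j=0$ forces every $\sum_{i}c_{ij}x^{i}=0$ by independence of the $u_j$ over $K(x)$, hence $c_{ij}=0$ since $x$ is transcendental. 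As $gr(mB+C)=mn+gr(C)$, this gives the uniform estimate
$$gr(mB+C)-dim(mB+C)\menori (mn+gr(C))-(m+1)n=gr(C)-n=:\gamma,$$
valid for every member of the family $A_m:=mB+C$, with $\gamma$ independent of $m$.

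The second key step is a monotonicity lemma: if $D_1\menori D_2$ then $dim(D_2)-dim(D_1)\menori gr(D_2)-gr(D_1)$, equivalently $gr(D_1)-dim(D_1)\menori gr(D_2)-dim(D_2)$. I would prove it by induction on $gr(D_2-D_1)$, reducing to the case $D_2=D_1+P$ for a single place $P$; there the bound $dim(D_1+P)-dim(D_1)\menori gr(P)$ comes from exhibiting a $K$-linear injection of $\LL(D_1+P)/\LL(D_1)$ into the residue field $F_P$, whose dimension over $K$ is $gr(P)$.

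Finally, for an arbitrary divisor $A$ I would show it is equivalent to some divisor $\menori mB$. Taking $C'\maiori 0$ with $C'\maiori C$ and $C'+A\maiori 0$, and using $dim(mB+C')\maiori dim(mB+C)\maiori (m+1)n$, the monotonicity lemma gives $dim(mB-A)\maiori (m+1)n-gr(C')-gr(A)$, which is positive for $m$ large. Hence there is $0\neq w\in\LL(mB-A)$, and setting $A':=A-(w)$ we obtain $A'\sim A$ together with $A'\menori mB\menori A_m$. Invoking the invariance of $gr$ and $dim$ under equivalence (Corol\'ario \ref{corolario_grau_1}) and monotonicity once more,
$$gr(A)-dim(A)=gr(A')-dim(A')\menori gr(A_m)-dim(A_m)\menori \gamma,$$
which is the desired bound.

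The main obstacle is the monotonicity lemma of the second step: the inequality $dim(D+P)-dim(D)\menori gr(P)$ requires genuine local analysis at $P$ (the construction and well-definedness of the map $\LL(D+P)/\LL(D)\hookrightarrow F_P$), and this is the only point that is not purely formal. The reduction in the last step needs some care in the choice of $C'$ and in passing to the equivalent divisor $A'$, but it is mechanical once monotonicity is available.
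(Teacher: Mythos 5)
Your proof is correct. The paper itself omits a proof of this proposition (as it does for most results in Chapter 1, deferring to \cite{stich}), and your argument --- the uniform bound $gr(mB+C)-dim(mB+C)\menori gr(C)-n$ via the $K$-independent family $x^{i}u_j$, the monotonicity lemma $dim(D_2)-dim(D_1)\menori gr(D_2)-gr(D_1)$ obtained from the injection of $\LL(D_1+P)/\LL(D_1)$ into $F_P$, and the reduction of an arbitrary $A$ to an equivalent divisor $A'\menori mB$ by producing $0\neq w\in\LL(mB-A)$ --- is precisely the standard proof of Riemann's inequality in that reference, so it matches the intended argument with no gaps.
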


\begin{defn}
    	Definimos o gênero\index{Gênero} do corpo de funções $F/K$ como sendo
    	$$g=max\{gr(A) - dim(A) +1 ; A\in \df \}.$$
\end{defn}
O  primeiro resultado envolvendo o gênero de um corpo de funções é o seguinte:

\begin{thm}[Teorema de Riemann]\index{Teorema! de Riemann}
    	Seja $F/K $ uma corpo de funções algé\-bricas com gênero $g$ então:
    	\begin{enumerate}
        	\item Para qualquer divisor $A \in \df$, $dim(A) \maiori gr(A)+1-g$;
        	\item Existe um inteiro $c$, dependendo de $F/K$ tal que $dim(A) = gr(A) +1 - g$ sempre que $gr(A) \maiori c$.
    	\end{enumerate}
\end{thm}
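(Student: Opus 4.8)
The plan is to read part (1) directly off the definition of genus and to reserve the real work for part (2). Since $g = \max\{gr(A) - dim(A) + 1 : A \in \df\}$, every divisor $A$ satisfies $gr(A) - dim(A) + 1 \menori g$, and rearranging this gives $dim(A) \maiori gr(A) + 1 - g$. So (1) requires nothing beyond the definition, together with the preceding proposition, which guarantees that $gr(A)-dim(A)$ is bounded above by an integer $\gamma$ and hence that the maximum defining $g$ is actually attained.

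For (2), the first step I would isolate is a monotonicity lemma: for any $A \in \df$ and any place $P \in \pf$,
$$dim(A) \menori dim(A+P) \menori dim(A) + gr(P).$$
The left inequality is immediate from $\LL(A) \contidoi \LL(A+P)$. For the right inequality I would set $n = v_P(A)$, pick $t \in F$ with $v_P(t) = n+1$, and build the $K$-linear map $\LL(A+P) \emm F_P$ sending $x \Emm (tx)(P)$ into the residue field; its kernel is exactly $\LL(A)$, so the quotient embeds into $F_P$ and has dimension at most $gr(P) = [F_P:K]$. Rewriting the lemma as $gr(A+P) - dim(A+P) \maiori gr(A) - dim(A)$ shows that the quantity $gr(A) - dim(A)$ never decreases when places are added to a divisor.

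Next I would exploit that the genus is a maximum: choose $A_0$ with $dim(A_0) = gr(A_0) + 1 - g$, that is $gr(A_0) - dim(A_0) = g - 1$. For every $A \maiori A_0$ the monotonicity lemma gives $gr(A) - dim(A) \maiori g - 1$, while the definition of $g$ gives the reverse inequality; hence $dim(A) = gr(A) + 1 - g$ for all $A \maiori A_0$. It then remains to set $c := gr(A_0) + g$ and to treat an arbitrary $A$ with $gr(A) \maiori c$ by replacing it with an equivalent divisor dominating $A_0$: since $gr(A - A_0) \maiori g$, part (1) forces $dim(A - A_0) \maiori 1$, so there is $0 \neq z \in \LL(A - A_0)$, and $A' := A + (z) \sim A$ satisfies $A' \maiori A_0$. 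By Corollary \ref{corolario_grau_1} equivalent divisors share degree and dimension, so $dim(A) = dim(A') = gr(A') + 1 - g = gr(A) + 1 - g$.

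The step I expect to be the genuine obstacle is the upper bound $dim(A+P) \menori dim(A) + gr(P)$ in the monotonicity lemma: everything else is bookkeeping with the definition of $g$, the degree-zero property of principal divisors, and Corollary \ref{corolario_grau_1}, whereas this bound is where the arithmetic of the residue field $F_P$ genuinely enters. I would be careful to verify that the kernel of the residue map is precisely $\LL(A)$ and nothing larger, since that identification is what pins the codimension down to $gr(P)$.
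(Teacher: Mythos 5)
Your proposal is correct: the paper itself omits this proof (as it does for most results in Chapter 1, deferring to Stichtenoth's \emph{Algebraic Function Fields and Codes}), and your argument is essentially the standard one given there --- part (1) from the definition of $g$ as an attained maximum, and part (2) via the bound $dim(A+P) \menori dim(A) + gr(P)$ proved through the residue map $x \Emm (tx)(P)$ with kernel exactly $\LL(A)$, followed by the replacement of $A$ with an equivalent divisor dominating a divisor $A_0$ attaining the maximum, using part (1) to produce $z \in \LL(A-A_0)$ and the corollary on equivalent divisors to transfer degree and dimension. Your identification of the codimension bound as the one genuinely delicate step, and your care in checking that the kernel is precisely $\LL(A)$, are both well placed.
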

\section{O Teorema de Riemann-Roch} 
Aqui $F/K$ denota um corpo de funções algébricas com gênero $g$.

\begin{defn}
    	Para $A \in \df$ definimos o índice de especialidade \index{Índice! de especialidade} de $A$ como sendo:
    	$$ i(A) = dim(A) - gr(A) +g-1.$$
\end{defn}

Como se vê o Teorema de Riemann garante que o índice de especialidade de um divisor $A$ é inteiro não negativo e na verdade nós vamos apresentar $i(A)$ como a dimensão de certos espaços vetoriais. Assim para isto começamos com as definições abaixo:
\begin{defn}
    	Um adele \index{Adele} de $F/K$ é uma função
    	$$ \funcao{\alpha}{\pf}{F}{P}{\alpha_P} $$
    	tal que $\alpha_P \in \av_P$ para quase todos $P \in \pf$. Podemos  ver um adele como um elemento  do produto direto $\prod_{P\in \pf}F$ e usamos a notação $\alpha = (\alpha_P)_{P\in\pf}$, e para encurtar, $\alpha = (\alpha_P)$.
\end{defn}
Chamamos de espaço de adeles \index{Espaço! de adeles} de $F/K$ ao
conjunto:
$$\af = \{\alpha ;\,\, \alpha \textrm{ é adele de } F/K \}.$$

\begin{defn}
	\begin{enumerate}
		\item  	Definimos como adele principal de um elemento $x \in F$ como sendo o adele cujas as componentes são todas iguais a $x$ ou seja a seqüência constante $(x)$.
		\item Dado $P\in \pf$ e $x\in F$, o adele cuja as componentes são nulas a menos da componente $P$, a qual é $x$, será denotado por $\iota_P(x)$.
	\end{enumerate}
\end{defn}
\begin{defn}
	A um adele $\alpha$, associamos uma função de valorização discreta dada por: $v_P(\alpha):=v_P(\alpha_P)$.
\end{defn}

\begin{defn}
    	Para um divisor $A \in \df$ definimos o conjunto:
    	$$ \af(A) = \{\alpha \in \af ;\,\, v_P(\alpha) \maiori -v_P(A), \,\, \forall P \in \pf\}.   $$
\end{defn}
Facilmente pode se ver que o conjunto acima definido é um $K-$subespaço vetorial de $\af$.

No próximo teorema explicitamos um espaço vetorial cuja a dimensão é o índice de especialidade do divisor $A$.
\begin{thm}
    	Dado um divisor $A$, o seu índice de especialidade é dado por:
    	$$i(A) = dim_{K}(\af/(\af(A)+F)).$$
\end{thm}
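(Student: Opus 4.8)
O plano é reduzir o cálculo de $\dim_K(\af/(\af(A)+F))$ a comparações entre dois divisores e, em seguida, usar o Teorema de Riemann para controlar os de grau grande. Observo de início que, pela definição de adele principal, um elemento $x\in F$ está em $\af(A)$ se, e somente se, $v_P(x)\maiori -v_P(A)$ para todo $P$, ou seja, $\af(A)\inter F=\LL(A)$; essa identificação será usada o tempo todo. O primeiro passo é mostrar que, para $A_1\menori A_2$, vale $\af(A_1)\contidoi\af(A_2)$ e
$$\dim_K(\af(A_2)/\af(A_1)) = gr(A_2)-gr(A_1).$$
Por aditividade basta o caso $A_2=A_1+P$. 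Fixado um elemento principal $t$ de $P$ e posto $n=v_P(A_1)$, a aplicação $\alpha\Emm (t^{n+1}\alpha_P)(P)\in F_P$ (a aplicação de resíduos da Definição \ref{def:corpo_residuo}) é $K$-linear, sobrejetora e tem núcleo $\af(A_1)$, donde $\af(A_2)/\af(A_1)\iso F_P$ e a dimensão é $gr(P)$.

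O coração do argumento é o segundo passo: para $A_1\menori A_2$, provar que
$$\dim_K((\af(A_2)+F)/(\af(A_1)+F)) = i(A_1)-i(A_2).$$
A inclusão composta com a projeção dá uma sobrejeção $\af(A_2)\emm(\af(A_2)+F)/(\af(A_1)+F)$ cujo núcleo é $\af(A_2)\inter(\af(A_1)+F)$. A etapa que espero ser a mais delicada é estabelecer a igualdade
$$\af(A_2)\inter(\af(A_1)+F) = \af(A_1)+\LL(A_2),$$
que depende de $A_1\menori A_2$: se $\beta=\alpha_1+x$ com $\alpha_1\in\af(A_1)$ e $x\in F$, então $x=\beta-\alpha_1\in\af(A_2)\inter F=\LL(A_2)$. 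Daí segue o isomorfismo $(\af(A_2)+F)/(\af(A_1)+F)\iso\af(A_2)/(\af(A_1)+\LL(A_2))$, e contando dimensões com o primeiro passo e com $\LL(A_2)\inter\af(A_1)=\LL(A_1)$ obtenho $(gr(A_2)-gr(A_1))-(dim(A_2)-dim(A_1))$, que, por $i(A)=dim(A)-gr(A)+g-1$, é exatamente $i(A_1)-i(A_2)$.

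O terceiro passo invoca o Teorema de Riemann para garantir a existência de um divisor $B$ com $\af=\af(B)+F$. Escolho $B$ com $gr(B)\maiori c$ (a constante do Teorema de Riemann); para todo divisor efetivo $C\maiori 0$ tem-se $gr(B+C)\maiori c$, logo $i(B)=i(B+C)=0$ e o segundo passo dá $\af(B+C)+F=\af(B)+F$. Como todo adele $\alpha$ pertence a $\af(B+C)$ para algum $C\maiori 0$ (escolhendo $C$ suportado nos finitos lugares em que $v_P(\alpha)<-v_P(B)$), concluo $\af=\af(B)+F$.

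Para finalizar, dado $A$ arbitrário escolho $B\maiori A$ com $gr(B)\maiori c$, de modo que $i(B)=0$ e $\af(B)+F=\af$; aplicando o segundo passo ao par $A\menori B$ obtenho
$$\dim_K(\af/(\af(A)+F)) = \dim_K((\af(B)+F)/(\af(A)+F)) = i(A)-i(B) = i(A).$$
O principal obstáculo é, como já disse, a identidade $\af(A_2)\inter(\af(A_1)+F)=\af(A_1)+\LL(A_2)$ do segundo passo, junto com a verificação de que todas as dimensões em jogo são finitas — o que decorre do primeiro passo e da finitude de $dim(A)$.
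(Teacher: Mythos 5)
Your proof is correct and follows exactly the standard argument for this theorem (it is one of the results whose proof the paper omits, deferring to Stichtenoth, \emph{Algebraic Function Fields and Codes}, where this is precisely the proof given): the residue-map computation yielding $dim_K(\af(A_2)/\af(A_1))=gr(A_2)-gr(A_1)$, the key identity $\af(A_2)\inter(\af(A_1)+F)=\af(A_1)+\LL(A_2)$ giving $dim_K\left((\af(A_2)+F)/(\af(A_1)+F)\right)=i(A_1)-i(A_2)$ for $A_1\menori A_2$, and the Riemann theorem producing $B\maiori A$ with $i(B)=0$ and $\af=\af(B)+F$. All the delicate points --- the inclusion argument that needs $A_1\menori A_2$, the identification $\af(A)\inter F=\LL(A)$, and the finiteness of the dimensions involved --- are handled correctly, so nothing is missing.
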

E como conseqüência imediata temos:
\begin{cor}
    	$$ g = dim_{K}(\af /(\af(0)+F)).$$
\end{cor}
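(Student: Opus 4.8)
The plan is to obtain this corollary directly from the preceding theorem, which asserts that for every divisor $A \in \df$ the index of specialty satisfies $i(A) = dim_K(\af/(\af(A)+F))$. The idea is simply to specialize that identity to the zero divisor $A=0$: its right-hand side becomes exactly $dim_K(\af/(\af(0)+F))$, so it remains only to verify that $i(0) = g$.

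To evaluate $i(0)$, I would unwind the definition $i(A) = dim(A) - gr(A) + g - 1$ at $A = 0$. The degree term is immediate: $gr(0) = \sum_{P \in \pf} v_P(0)\,gr(P) = 0$, since every coefficient of the zero divisor vanishes. For the dimension term I need $dim(0) = dim_K(\LL(0))$, where by definition $\LL(0) = \{x \in F ; (x) \maiori 0\} \cup \{0\}$ is the set of functions whose principal divisor is effective, i.e.\ functions with no poles. The crucial point is that, under the standing hypothesis that $K$ equals its own algebraic closure in $F$ (the constant field), the only such functions are the constants, so $\LL(0) = K$ and hence $dim(0) = 1$.

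Combining the two computations gives $i(0) = 1 - 0 + g - 1 = g$, and substituting back into the theorem yields $g = dim_K(\af/(\af(0)+F))$, as desired.

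The entire argument is formal once $\LL(0) = K$ is known, so this identification is the only step that carries any real content. I would justify it from the earlier degree theorem: a nonconstant $x \in F$ is transcendental over $K$, and its pole divisor $(x)_{\infty}$ has degree $[F:K(x)] > 0$, so $x$ must possess at least one pole and therefore cannot belong to $\LL(0)$. Thus $\LL(0)$ contains only constants, and since the reverse containment $K \subseteq \LL(0)$ is clear, equality follows. I do not expect a genuine obstacle here; the only care needed is to cite the correct earlier result in order to exclude nonconstant regular functions, after which the corollary is pure bookkeeping with the definition of $i$.
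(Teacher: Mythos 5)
Your proof is correct and matches the paper's intent exactly: the paper states this corollary as an immediate consequence of the preceding theorem, obtained by specializing to $A=0$ and using $i(0) = dim(0) - gr(0) + g - 1 = 1 - 0 + g - 1 = g$. Your careful justification of $\LL(0)=K$ via the standing hypothesis $K=\Tilde{K}$ and the degree of the pole divisor is precisely the standard bookkeeping the paper leaves implicit.
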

Agora veremos o conceito de diferenciais de Weil, o qual nos dará mais informações sobre o indice de especialidade de um divisor.
\begin{defn}
    	Uma aplicação $K-$linear, $\omega:\af \emm K$, que se anula em $\af(A)+F$ para algum divisor $A\in \df$ é dita diferencial de Weil de $F/K$ \index{Diferencial! de Weil}. Denotamos por $\Omega_F $ ao    conjunto dos diferenciais de Weil de $F/K$  e por $\Omega_F(A)$   ao conjunto de diferenciais de Weil de $F/K$ que se anulam em    $\af(A) + F$.
\end{defn}

O conceito de divisor canônico é de fundamental importância para o Teorema de Riemann-Roch e ele é definido a partir da próxima definição e do lema seguinte.

\begin{defn}
    	Para um diferencial de Weil $\omega \neq 0$ definimos o conjunto   de divisores:
    	$$M(\omega) = \{ A \in \df ;\,\, \omega \textrm{ se anula em }    \af(A) +F\}.$$
\end{defn}
\begin{lem}
    	Seja $\omega\in \Omega_F,\,\, \omega \neq 0$, então existe um único divisor $W \in  M(\omega)$ tal que $A \menori W$ para todo $A \in M(\omega)$.
\end{lem}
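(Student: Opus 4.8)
The plan is to establish three facts about the set $M(\omega)$ and then assemble them: that $M(\omega)$ is nonempty, that it is closed under the coefficient-wise maximum of two divisors, and that the degrees of its members are bounded above. Nonemptiness is immediate from the definition of a Weil differential, since $\omega$ vanishes on $\af(A)+F$ for \emph{some} $A\in\df$, and that $A$ then lies in $M(\omega)$. The existence of a largest element will follow from closure under maximum together with the degree bound, and uniqueness is automatic once a maximal element is produced.

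The main technical step is closure under maximum. Given $A_1,A_2\in M(\omega)$, let $A$ be the divisor determined by $v_P(A)=\max\{v_P(A_1),v_P(A_2)\}$ for every $P\in\pf$. I would first prove the identity
$$\af(A) = \af(A_1) + \af(A_2).$$
The inclusion $\supseteq$ follows from $v_P(\beta_P+\gamma_P)\maiori\min\{v_P(\beta_P),v_P(\gamma_P)\}$ applied place by place. The inclusion $\subseteq$ is the nontrivial one: given $\alpha\in\af(A)$ I would split it component-wise, assigning at each place $P$ the component $\alpha_P$ to the first or the second summand according to whether $v_P(\alpha_P)\maiori -v_P(A_1)$ or $v_P(\alpha_P)\maiori -v_P(A_2)$ (at least one holds, since $-v_P(A)=\min\{-v_P(A_1),-v_P(A_2)\}$). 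Once this identity holds, the $K$-linearity of $\omega$ and its vanishing on both $\af(A_1)+F$ and $\af(A_2)+F$ force it to vanish on $\af(A_1)+\af(A_2)+F=\af(A)+F$, so $A\in M(\omega)$. I expect this component-wise splitting to be the fiddliest point, though it is essentially bookkeeping.

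The conceptual heart is the degree bound, and this is where I would invoke the earlier theory. If $A\in M(\omega)$, then $\omega$ descends to a $K$-linear functional on $\af/(\af(A)+F)$, and since $\omega\neq 0$ this quotient is nonzero; hence $i(A)=dim_K(\af/(\af(A)+F))\maiori 1$. Writing $i(A)=dim(A)-gr(A)+g-1$, Riemann's Theorem supplies a constant $c$ with $i(A)=0$ whenever $gr(A)\maiori c$. Consequently every $A\in M(\omega)$ satisfies $gr(A)<c$, so $\{gr(A);\,A\in M(\omega)\}$ is a nonempty subset of $\Z$ bounded above and therefore attains a maximum.

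Finally I would choose $W\in M(\omega)$ of maximal degree and verify that it dominates every member. Suppose some $A\in M(\omega)$ had $A\not\menori W$; then $v_P(A)>v_P(W)$ at some place $P$. Let $B$ be the divisor with $v_Q(B)=\max\{v_Q(A),v_Q(W)\}$ at every place $Q$. By the closure step $B\in M(\omega)$, while $B\maiori W$ with strict inequality at $P$ gives $gr(B)\maiori gr(W)+gr(P)>gr(W)$, contradicting the maximality of $gr(W)$. Hence $A\menori W$ for every $A\in M(\omega)$. Uniqueness is immediate, since two maximal divisors would each dominate the other and so coincide.
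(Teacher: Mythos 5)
Your proof is correct, and every ingredient you invoke is indeed available at the point where the lemma is stated: the identification $i(A)=dim_K(\af/(\af(A)+F))$ appears just before Weil differentials are introduced, Riemann's theorem supplies the constant $c$ with $i(A)=0$ once $gr(A)\maiori c$, so $\omega\neq 0$ forces $i(A)\maiori 1$ and hence $gr(A)<c$ for every $A\in M(\omega)$ exactly as you argue; and your component-wise splitting is legitimate, since an adele whose components are replaced by $0$ at some places is still an adele. Note, however, that the dissertation itself omits the proof of this lemma entirely (it sits in the introductory chapter whose demonstrations are deferred to Stichtenoth's book), so the honest comparison is with the classical argument there. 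That argument rests on the same two pillars as yours --- the degree bound via $i(A)\maiori 1$ plus Riemann, then a maximal-degree $W$ --- but it is leaner at the splitting step: given $A\in M(\omega)$ and a place $P$ with $v_P(A)>v_P(W)$, one shows directly that $W+P\in M(\omega)$ by writing any $\alpha\in\af(W+P)$ as $\iota_P(\alpha_P)$ plus the adele vanishing at $P$, the first lying in $\af(A)$ (because $v_P(\alpha_P)\maiori -v_P(W)-1\maiori -v_P(A)$) and the second in $\af(W)$, contradicting maximality of $gr(W)$ without ever forming the supremum at all places. Your identity $\af(A)=\af(A_1)+\af(A_2)$, where $v_P(A)=\max\{v_P(A_1),v_P(A_2)\}$, is a correct strengthening of that one-place trick: it establishes the reusable structural fact that $M(\omega)$ is closed under suprema, at the cost of slightly more bookkeeping, whereas the classical route needs only the special case $B=W+P$ of your closure lemma and is correspondingly shorter.
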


\begin{defn}[Divisor Canônico]
    	\begin{enumerate}
        	\item O divisor $(\omega)$ de um diferencial de Weil $\omega
        	\neq 0$ é o divisor de $F/K$ unicamente determinado por:
        	\begin{enumerate}
            		\item $\omega$ se anula em $\af((\omega))+F$;
            		\item Se $\omega$ se anula em $\af(A) + F$ então $A \menori (\omega).$
        \end{enumerate}
        \item Para $0\neq \omega \in \Omega_F$ e $P \in \pf$        definimos $v_P(\omega)=v_P((\omega))$;

        \item Um lugar $P$ é dito zero (resp. pólo) de $\omega$ se        $v_P(\omega) > 0$ (resp. $v_P(\omega)<0$). $\omega$ é        chamado de regular em $P$ se $v_P(\omega) \maiori 0$, e        simplesmente de regular se for regular em todos os lugares   em $\pf$;
        \item Um divisor $W$ é dito divisor canônico \index{Divisor!  canônico} de $F/K$ se $W=(\omega)$ para algum $\omega \in        \Omega_F$.
    \end{enumerate}
\end{defn}
\begin{prop}
	\begin{enumerate}
		\item Para $0\neq x \in F$ e $0\neq \omega \in \Omega_F$ temos que $(x\omega)=(x)+(\omega)$;
		
		\item Quaisquer dois divisores canônicos são equivalentes.
	\end{enumerate}
\end{prop}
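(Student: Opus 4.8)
The plan is to first make sense of the product $x\omega$ as an element of $\Omega_F$, and then read its divisor directly off the characterization of $(\omega)$ as the maximum of $M(\omega)$. Given $0\neq x \in F$ and $0\neq \omega \in \Omega_F$, I define $x\omega$ by $(x\omega)(\alpha):=\omega(x\alpha)$, where $x\alpha=(x\alpha_P)_{P\in\pf}$. This is again an adele: since $x$ has only finitely many poles, $x\in\av_P$ for almost all $P$, so $x\alpha_P\in\av_P$ for almost all $P$; and $x\omega$ is clearly $K$-linear. To see that $x\omega$ is a Weil differential I would check, for any divisor $B$, that $x\af(B)=\af(B-(x))$ (a one-line valuation computation using $v_P(x\alpha)=v_P(x)+v_P(\alpha)$) and that $xF=F$. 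Since multiplication by $x$ is a bijection of $\af$, it follows that $x\omega$ vanishes on $\af(B)+F$ exactly when $\omega$ vanishes on $x(\af(B)+F)=\af(B-(x))+F$, i.e. exactly when $B-(x)\in M(\omega)$. Taking $B=(x)+(\omega)$ shows $x\omega\in\Omega_F$.

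The same computation proves part (1). It gives $M(x\omega)=\{B\in\df : B-(x)\in M(\omega)\}=(x)+M(\omega)$. Since $(\omega)$ is the maximum of $M(\omega)$ and adding the fixed divisor $(x)$ is an order isomorphism of $\df$ for $\menori$, the maximum of $M(x\omega)$ is $(x)+(\omega)$. By the defining property of the divisor of a Weil differential this maximum is exactly $(x\omega)$, so $(x\omega)=(x)+(\omega)$.

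For part (2) the key structural input is that $\Omega_F$ is one-dimensional over $F$ under the action $\omega\mapsto x\omega$ above. Granting this, the conclusion is immediate: if $W_1=(\omega_1)$ and $W_2=(\omega_2)$ are two canonical divisors, then $\omega_2=z\omega_1$ for some $0\neq z\in F$, and part (1) yields $W_2=(z\omega_1)=(z)+W_1$; thus $W_2-W_1=(z)\in\Pf$ is principal, which is precisely $W_1\sim W_2$.

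It remains to establish that $\Omega_F$ is one-dimensional over $F$, and this is the step I expect to be the main obstacle. I would prove it by a dimension count resting on Riemann's theorem. First identify $\Omega_F(A)$ with the $K$-dual of the finite-dimensional quotient $\af/(\af(A)+F)$, so that $dim_K\Omega_F(A)=i(A)$; note also that $\omega\in\Omega_F(A)\sse(\omega)\maiori A$, whence $x\mapsto x\omega_1$ carries $\LL(W_1-A)$ injectively into $\Omega_F(A)$ (and likewise for $\omega_2$). Suppose, for contradiction, that $\omega_1,\omega_2$ are $F$-linearly independent; then $(x,y)\mapsto x\omega_1+y\omega_2$ embeds $\LL(W_1-A)\oplus\LL(W_2-A)$ into $\Omega_F(A)$, so
$$ dim(W_1-A)+dim(W_2-A)\menori i(A). $$
Now choose $A$ with $gr(A)$ very negative, so that $dim(A)=0$ and hence $i(A)=-gr(A)+g-1$, while Riemann's theorem gives $dim(W_j-A)=gr(W_j)-gr(A)+1-g$ for $j=1,2$. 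Substituting turns the displayed inequality into $gr(W_1)+gr(W_2)+3-3g\menori gr(A)$, which fails once $gr(A)$ is sufficiently negative --- a contradiction. Hence $\omega_1$ and $\omega_2$ are $F$-dependent, which is the one-dimensionality needed above and completes the proof.
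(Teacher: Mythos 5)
Your proposal is correct, and it is complete at a level of detail the paper never attempts: this proposition sits in Chapter 1, where the paper explicitly omits proofs and defers to Stichtenoth's book, so there is no in-paper argument to compare with --- what you wrote is essentially the standard proof from that reference. Part (1) via $x(\af(B)+F)=\af(B-(x))+F$, hence $M(x\omega)=(x)+M(\omega)$, followed by taking maxima of both sides, is exactly right; and your reduction of part (2) to the one-dimensionality of $\Omega_F$ as an $F$-module, proved by embedding $\LL(W_1-A)\oplus\LL(W_2-A)$ into $\Omega_F(A)$ and letting $gr(A)\to-\infty$ against $\dim_K\Omega_F(A)=i(A)$, is the classical dimension count (your final inequality $gr(W_1)+gr(W_2)+3-3g\menori gr(A)$ checks out: the left side is a fixed constant, so it fails for $gr(A)$ sufficiently negative). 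Two small remarks. First, before speaking of the divisor $(x\omega)$ you should state that $x\omega\neq 0$; this is immediate, since multiplication by $x$ is a bijection of $\af$ and $\omega\neq 0$, but it is needed for $(x\omega)$ to be defined at all. Second, there is no circularity with the surrounding material: your injection $x\mapsto x\omega_1$ of $\LL(W_1-A)$ into $\Omega_F(A)$ is precisely the map $\mu$ of the theorem the paper states right \emph{after} this proposition, but you use only its elementary injectivity together with the identification $\dim_K\Omega_F(A)=i(A)$ coming from $i(A)=\dim_K(\af/(\af(A)+F))$, which the paper records \emph{before} the proposition, alongside Riemann's theorem and the vanishing of $\dim(A)$ for $gr(A)<0$ --- all legitimately available inputs.
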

Esse próximo teorema vai nos permitir calcular o indice de especialidade de um divisor (mais uma vez como dimensão de um espaço vetorial).
\begin{thm}
	Seja $A$ um divisor arbitrário e $W=(\omega)$ um divisor canônico de $F/K$. Então a aplicação
	$$\funcao{\mu}{\LL(W-A)}{\Omega_F(A)}{x}{x\omega},$$
	é um isomorfismo de $K-$espaços Vetoriais, e mais $i(A)=dim(W-A)$.
\end{thm}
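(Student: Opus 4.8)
The plan is to exhibit $\mu$ as a well-defined $K$-linear map, verify injectivity and surjectivity separately, and then read off the dimension identity from the preceding characterization of $i(A)$. The two tools I would lean on are the product formula $(x\omega)=(x)+(\omega)$ for $0\neq x\in F$ and $0\neq\omega\in\Omega_F$, together with the fact that $\Omega_F$ is one-dimensional over $F$; the latter is the real engine behind surjectivity.

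First I would establish the divisor dictionary: for a nonzero Weil differential $\eta$, one has $\eta\in\Omega_F(A)$ if and only if $A\menori(\eta)$. Indeed, $\eta\in\Omega_F(A)$ means $\eta$ annihilates $\af(A)+F$, i.e. $A\in M(\eta)$; by the lemma defining $(\eta)$ as the maximum of $M(\eta)$ this forces $A\menori(\eta)$, and conversely $M(\eta)$ is downward closed, since $A\menori B$ gives $\af(A)\contido\af(B)$ and hence vanishing on the larger space implies vanishing on the smaller. With this in hand, well-definedness is immediate: for $0\neq x\in\LL(W-A)$ we have $(x)\maiori-(W-A)=A-W$, hence $(x\omega)=(x)+(\omega)=(x)+W\maiori A$, so $x\omega\in\Omega_F(A)$; and $\mu(0)=0$ lies in $\Omega_F(A)$ trivially. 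Linearity over $K$ is clear since $(x+y)\omega=x\omega+y\omega$ and $(cx)\omega=c(x\omega)$ for $c\in K$.

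For injectivity, suppose $x\neq0$ and $x\omega=0$. Writing an arbitrary adele as $\beta=x(x^{-1}\beta)$ gives $\omega(\beta)=(x\omega)(x^{-1}\beta)=0$, so $\omega=0$, contradicting $W=(\omega)$; hence $x=0$. For surjectivity I would invoke one-dimensionality: given $0\neq\eta\in\Omega_F(A)$, there is $x\in F$ with $\eta=x\omega$, and the dictionary above forces $A\menori(\eta)=(x)+W$, i.e. $(x)\maiori A-W=-(W-A)$, so $x\in\LL(W-A)$ and $\mu(x)=\eta$. I expect this surjectivity step, resting on $\dim_F\Omega_F=1$, to be the main obstacle: without that input, injectivity alone yields only the inequality $\dim(W-A)\menori\dim_K\Omega_F(A)$.

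Finally I would convert the isomorphism into the dimension statement. Every $K$-linear functional on $\af$ that kills $\af(A)+F$ is, by definition, a Weil differential vanishing on $\af(A)+F$, so $\Omega_F(A)$ is canonically the dual of the quotient $\af/(\af(A)+F)$; as that quotient is finite-dimensional, $\dim_K\Omega_F(A)=\dim_K(\af/(\af(A)+F))=i(A)$ by the preceding theorem. Combining this with the isomorphism $\mu$ gives $\dim(W-A)=\dim_K\LL(W-A)=\dim_K\Omega_F(A)=i(A)$, which is the asserted equality $i(A)=\dim(W-A)$.
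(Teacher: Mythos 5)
Your proof is correct, and there is nothing in the paper to compare it against: this theorem sits in the part of the text whose proofs are deliberately omitted (the paper defers to Stichtenoth's \emph{Algebraic Function Fields and Codes}). Your argument is precisely the standard one from that reference --- the dictionary $\eta\in\Omega_F(A)\sse A\menori(\eta)$ via the maximality and downward closure of $M(\eta)$, injectivity from $(x\omega)(\beta)=\omega(x\beta)$, surjectivity from $\dim_F\Omega_F=1$ (which you rightly flag as the essential external input, since the paper never states this one-dimensionality), and $i(A)=\dim_K\Omega_F(A)$ via the canonical identification of $\Omega_F(A)$ with the dual of $\af/(\af(A)+F)$ --- so it is complete and fills the gap the paper leaves open.
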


Em fim chegamos ao principal teorema da teoria até aqui apresentada.
\begin{cor}[Riemann-Roch]\index{Teorema! Riemann-Roch}\label{teo:rie_roc}
	Seja $W$ um divisor canônico de $F/K$. Assim para qualquer $A \in \df$ temos que:
	$$dim(A) = gr(A) +1 -g + dim(W-A).$$
\end{cor}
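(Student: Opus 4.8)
The plan is to derive the formula as an immediate algebraic consequence of the theorem that precedes it, combined with the definition of the index of speciality; no new geometric or analytic input is needed at this point, since all of the real work has already been absorbed into the construction of the Weil-differential isomorphism $\mu$.

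First I would recall the definition of the index of speciality of a divisor $A$, namely
$$i(A) = dim(A) - gr(A) + g - 1.$$
Next I would invoke the preceding theorem, which produces the $K$-linear isomorphism $\mu\colon \LL(W-A) \to \Omega_F(A)$, $x \mapsto x\omega$, and whose stated conclusion is exactly the identity $i(A) = dim(W-A)$. Setting the two expressions for $i(A)$ equal gives
$$dim(A) - gr(A) + g - 1 = dim(W-A),$$
and isolating $dim(A)$ yields
$$dim(A) = gr(A) + 1 - g + dim(W-A),$$
which is precisely the asserted equality.

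The proof is therefore essentially a one-line substitution, and I do not expect any genuine obstacle here: the entire substance of Riemann--Roch is upstream, concentrated in the theorem that establishes $i(A) = dim(W-A)$ via the canonical divisor $W=(\omega)$ and the isomorphism $\mu$. The only point deserving care is the bookkeeping of signs in the definition of $i(A)$, so that the term $+g$ and the constant $-1$ are correctly transported across the equality; once that is checked, the rearrangement is forced and the result follows.
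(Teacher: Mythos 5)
Your proposal is correct and is exactly the derivation the paper intends: the statement is labeled a corollary precisely because it follows by substituting the preceding theorem's conclusion $i(A)=dim(W-A)$ into the definition $i(A)=dim(A)-gr(A)+g-1$ and rearranging. The paper leaves this one-line substitution implicit, and your bookkeeping of the signs matches it.
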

\begin{thm}\label{teorema_dimensao_1}
	Sendo $A$ um divisor de $F/K$ de grau maior ou igual  que $2g-1$, temos:
	$$ dim(A) = gr(A) +1-g.$$
\end{thm}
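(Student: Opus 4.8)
The plan is to deduce the statement directly from the Riemann-Roch theorem (Corollary \ref{teo:rie_roc}), reducing everything to the vanishing of the correction term. By Riemann-Roch, for a canonical divisor $W=(\omega)$ one has $dim(A) = gr(A) + 1 - g + dim(W-A)$, so the desired equality $dim(A) = gr(A)+1-g$ is equivalent to showing $dim(W-A)=0$. By Corollary \ref{corolario_grau_1}(2), it suffices to prove that $gr(W-A) < 0$, and since $gr(W-A) = gr(W) - gr(A)$, everything hinges on knowing the degree of a canonical divisor.

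So the first step, and the technical heart of the argument, is to establish that $gr(W) = 2g-2$. I would do this by applying Riemann-Roch twice. Applying it to $A = 0$ gives $dim(0) = gr(0) + 1 - g + dim(W)$; since $gr(0)=0$ and the zero divisor is a principal divisor of degree zero, Corollary \ref{corolario_grau_1}(3) yields $dim(0)=1$, whence $dim(W)=g$. Applying Riemann-Roch instead to $A = W$ gives $dim(W) = gr(W) + 1 - g + dim(W-W) = gr(W) + 1 - g + dim(0)$, that is $g = gr(W) + 2 - g$, so that $gr(W) = 2g-2$.

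With this in hand the conclusion is immediate: if $gr(A) \maiori 2g-1$, then $gr(W-A) = 2g-2 - gr(A) \menori -1 < 0$, so $dim(W-A)=0$ by Corollary \ref{corolario_grau_1}(2), and Riemann-Roch collapses to $dim(A) = gr(A) + 1 - g$, as claimed.

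The main obstacle is the computation $gr(W)=2g-2$, and within it the one point that needs care is the justification that $dim(0)=1$; this rests on recognizing the zero divisor as principal (being $(a)$ for any nonzero constant $a\in K$) and invoking Corollary \ref{corolario_grau_1}(3). Once the degree of the canonical class is pinned down, the remainder is a one-line degree comparison feeding into the vanishing criterion for $\LL$ of a negative-degree divisor.
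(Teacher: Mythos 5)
Your proof is correct and is exactly the standard argument: the paper itself omits the proof of this theorem (deferring, like most of Chapter 1, to Stichtenoth), and the route it relies on there is precisely yours --- compute $gr(W)=2g-2$ and $dim(W)=g$ by applying Riemann--Roch to $A=0$ and $A=W$ (using $dim(0)=1$, justified as you do via Corol\'ario \ref{corolario_grau_1}), then observe that $gr(A)\maiori 2g-1$ forces $gr(W-A)<0$, hence $dim(W-A)=0$, collapsing Riemann--Roch to $dim(A)=gr(A)+1-g$. No gaps; the one delicate point, $dim(0)=1$, is handled properly by your appeal to the zero divisor being principal of degree zero.
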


A próxima definição nos será bastante útil para o entendimento de um de nossos códigos.

\begin{defn}[Componente Local]\index{Componente! Local}
	Seja um diferencial de Weil $\omega \in \Omega_F$. Definimos a componente local de $\omega$ como sendo a aplicação $K-linear$
	$$\funcao{\omega_P}{F}{K}{x}{\omega(\iota_P(x))}.$$
\end{defn}

Sobre as componentes locais temos os seguintes resultados:
\begin{prop}\label{prop:weil_soma}
	Sejam um diferencial de Weil $\omega \in \Omega_F$ e um adele $\alpha=(\alpha_P)\in \af$. Então $\omega_P(\alpha_P)\neq 0$  no máximo em finitos lugares $P$, e mais,
	$$\omega(\alpha)=\sum_{P\in \pf} \omega_P(\alpha_P).$$
	Em particular,
	$$\sum_{P\in \pf} \omega_P(1)=0.$$
\end{prop}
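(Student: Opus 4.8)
The plan is to exploit the single defining property of a Weil differential: by definition $\omega \in \Omega_F$ vanishes on $\af(A) + F$ for some divisor $A \in \df$, so in particular $\omega$ vanishes both on the subspace $\af(A)$ and on $F$ (the constant adeles). Everything then follows by isolating the finitely many places that can contribute, decomposing $\alpha$ into a finite sum of concentrated adeles plus a remainder that lands in $\af(A)$, and invoking the $K$-linearity of $\omega$.

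First I would settle the finiteness assertion. For each $P$ the concentrated adele $\iota_P(\alpha_P)$ has $P$-component $\alpha_P$ and all other components $0$, so $v_Q(\iota_P(\alpha_P)) = \infty$ for $Q \neq P$ while $v_P(\iota_P(\alpha_P)) = v_P(\alpha_P)$. Hence $\iota_P(\alpha_P) \in \af(A)$ precisely when $v_P(\alpha_P) \maiori -v_P(A)$. Now $\alpha$ being an adele forces $v_P(\alpha_P) \maiori 0$ for all but finitely many $P$, and $supp(A)$ is finite; thus for every $P$ outside the finite set
$$S = \{P \in \pf : v_P(\alpha_P) < 0\} \cup supp(A)$$
we have $v_P(\alpha_P) \maiori 0 = -v_P(A)$, so $\iota_P(\alpha_P) \in \af(A)$ and therefore $\omega_P(\alpha_P) = \omega(\iota_P(\alpha_P)) = 0$. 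This shows $\omega_P(\alpha_P) \neq 0$ only for $P \in S$, a finite set.

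For the summation formula I would then introduce the remainder adele
$$\beta = \alpha - \sum_{P \in S} \iota_P(\alpha_P),$$
which has component $0$ at every $P \in S$ and component $\alpha_P$ at every $P \notin S$. By the very choice of $S$, for $P \notin S$ one has $v_P(\beta) = v_P(\alpha_P) \maiori 0 = -v_P(A)$, while $v_P(\beta) = \infty$ for $P \in S$, so $\beta \in \af(A)$ and hence $\omega(\beta) = 0$. Applying $K$-linearity to $\alpha = \beta + \sum_{P \in S}\iota_P(\alpha_P)$ gives
$$\omega(\alpha) = \omega(\beta) + \sum_{P \in S}\omega(\iota_P(\alpha_P)) = \sum_{P \in S}\omega_P(\alpha_P) = \sum_{P \in \pf}\omega_P(\alpha_P),$$
the last equality holding because all omitted terms vanish by the finiteness step.

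Finally, the particular case drops out by applying the formula to the principal adele $(1)$, the diagonal image of $1 \in F$, whose every component equals $1$. Since $(1) \in F$ and $\omega$ vanishes on $F$, we get $\omega((1)) = 0$, and therefore $\sum_{P \in \pf}\omega_P(1) = 0$. I expect the only genuinely delicate point to be the verification that the remainder $\beta$ really lies in $\af(A)$: this is where one must have chosen $S$ large enough to absorb both the poles of $\alpha$ and the support of $A$ simultaneously, after which the linearity of $\omega$ does all the remaining work.
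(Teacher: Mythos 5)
Your proof is correct: decomposing $\alpha = \beta + \sum_{P\in S}\iota_P(\alpha_P)$ with $S = \{P : v_P(\alpha_P)<0\}\cup supp(A)$ finite, checking that $\beta$ and each $\iota_P(\alpha_P)$ with $P\notin S$ lie in $\af(A)$, and then using the $K$-linearity of $\omega$ and its vanishing on $\af(A)+F$ (applied to the principal adele $(1)\in F$ for the last claim) is exactly the standard argument. The paper states this proposition without proof, deferring as announced to Stichtenoth, and your argument coincides with the proof given there, so there is nothing to correct or add.
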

 
O próximo resultado nos mostra que um diferencial de Weil é unicamente determinado em relação aos seus componentes locais.
 
\begin{prop}\label{prop:weil1}
	\begin{enumerate}
		\item Seja $\omega\neq 0$ um diferencial de Weil de $F/K$, $P\in \pf$ e $W=(\omega)$. Então:
		$$v_P(W) = max\{r\in \Z ; \, \omega_P(x)=0 \textrm{ para todo } x\in F \textrm{ com } v_P(x)\maiori -r\}.$$
		
		\item Se $\omega, \, \omega' \in \Omega_F$ e $\omega_P = \omega_p'$ para algum $P\in \pf$, então $\omega=\omega'$.
	\end{enumerate}

\end{prop}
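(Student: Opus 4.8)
O plano � provar as duas partes separadamente, apoiando-me na caracteriza��o do divisor can�nico $W=(\omega)$ e na f�rmula de soma local da Proposi��o \ref{prop:weil_soma}, a saber, $\omega(\alpha)=\sum_{Q}\omega_Q(\alpha_Q)$.

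Para a parte (1), denoto por $s$ o lado direito da igualdade e escrevo $t=v_P(W)$. Observo primeiro que o conjunto $\{r\in\Z;\,\omega_P(x)=0 \textrm{ para todo } x \textrm{ com } v_P(x)\maiori -r\}$ � fechado para baixo, de modo que basta comparar seu m�ximo com $t$. Mostrarei que $s\maiori t$ verificando que $r=t$ pertence ao conjunto: dado $x\in F$ com $v_P(x)\maiori -t$, o adele $\iota_P(x)$ tem sua �nica componente n�o nula em $P$ e satisfaz $v_P(\iota_P(x))=v_P(x)\maiori -v_P(W)$, logo $\iota_P(x)\in\af(W)$; como $\omega$ se anula em $\af(W)+F$, segue $\omega_P(x)=\omega(\iota_P(x))=0$.

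Para a desigualdade oposta $s\menori t$ procederei por contradi��o, supondo $s>t$ e considerando o divisor $A=W+(s-t)P$, de sorte que $v_P(A)=s$ e $v_Q(A)=v_Q(W)$ para $Q\neq P$. O passo central, e que espero ser o mais delicado, � verificar que $\omega$ se anula em $\af(A)+F$: dado $\alpha\in\af(A)$, aplico a f�rmula de soma; para cada $Q\neq P$ tem-se $v_Q(\alpha_Q)\maiori -v_Q(W)$, donde $\iota_Q(\alpha_Q)\in\af(W)$ e portanto $\omega_Q(\alpha_Q)=0$; j� para $Q=P$ tem-se $v_P(\alpha_P)\maiori -s$, e a pr�pria defini��o de $s$ garante $\omega_P(\alpha_P)=0$. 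Assim $\omega(\alpha)=0$, e como $\omega$ tamb�m se anula em $F$, conclui-se que $\omega$ se anula em $\af(A)+F$. A propriedade de maximalidade de $W=(\omega)$ for�a ent�o $A\menori W$, o que contradiz $A>W$. Logo $s\menori t$ e, combinando as duas desigualdades, $s=t=v_P(W)$.

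Para a parte (2), tomarei $\eta=\omega-\omega'$, que � um diferencial de Weil com componente local $\eta_P=\omega_P-\omega_P'=0$. Se fosse $\eta\neq 0$, o divisor $W=(\eta)$ estaria bem definido e $v_P(W)$ seria um inteiro finito; por�m, sendo $\eta_P$ identicamente nula, a condi��o $\eta_P(x)=0$ vale para \emph{todo} $x\in F$ sem qualquer restri��o sobre $v_P(x)$, de modo que o conjunto descrito na parte (1) seria todo $\Z$, que n�o possui m�ximo. Isso contradiz a parte (1), e portanto $\eta=0$, isto �, $\omega=\omega'$.
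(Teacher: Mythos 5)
Your proof is correct: the paper states this proposition without proof (the demonstrations of Chapter 1 are deferred to \cite{stich}), and your argument is precisely the standard one from that reference --- you show $v_P(W)$ belongs to the set via $\iota_P(x)\in\af(W)$, rule out any larger element by applying the maximality property of the canonical divisor $(\omega)$ to $A=W+(s-t)P$ together with the local sum formula of the proposition \ref{prop:weil_soma}, and deduce part (2) by applying part (1) to $\eta=\omega-\omega'$, whose local component at $P$ vanishes identically, so the set in (1) would be all of $\Z$ and would have no maximum. The only cosmetic point is that you invoke the maximum $s$ before knowing the set is bounded above; but since the set is downward closed and your own contradiction argument shows that no element of it exceeds $v_P(W)$, the existence of the maximum is retroactively justified, so nothing essential is missing.
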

\section{Curvas Algébricas}
Nessa seção apresentaremos as curvas algébricas e sua ligação com os corpos de funções algébricas.
\begin{defn}
	Seja $I$ um ideal em $\F[x_1,\cdots,x_n]$ (anel de polinômios em $n$ indeterminadas com coeficientes sobre um corpo  $\F$). Definimos o conjunto algébrico\index{Conjunto! algébrico} obtido a partir de $I$ como sendo:
	$$V(I) = \{a=(a_1,\cdots,a_n) \in \F^n ; \,\, f(a) = 0 ,\, \ptodo f \in I\}.$$
\end{defn}
Dizemos que um conjunto algébrico $B$ é irredutível se não pode ser escrito como união de dois outros conjuntos algébricos próprios. Temos que $V(I)$ é irredutível quando o radical de $I$ é ideal primo. Tal resultado nos leva as duas  definições abaixo:

\begin{defn}
	\begin{enumerate}
		\item Dado um ideal primo $I\,\contido \,\F[x_1,\, \cdots\,,x_n]$ o conjunto \mbox{$\X\, =\, V(I)$} é dito variedade afim\index{Variedade! afim}.
		\item O anel $\F[x_1,\cdots,x_n]/I = \F[\X]$ é dito anel de coordenadas de $\X$.\index{Anel! de coordenadas}
	\end{enumerate}
\end{defn}

\begin{defn}
	Dada uma variedade algébrica $\X$ definimos  o corpo de funções racionais de $\X$ \index{Corpo! de funções racionais de $\X$} como sendo o corpo de frações do anel de coordenadas $\F[\X]$ que é denotado por $\F(\X)$.
\end{defn}
Um resultado clássico da álgebra comutativa (Teorema de normalização de Noether) nos diz que podemos dar a dimensão da variedade algébrica da seguinte forma:
\begin{defn}
	Definimos a dimensão da variedade\index{Dimensão! da variedade} $\X$ como sendo o grau de transcendência de $\F(\X)/\F$.
\end{defn}
Para um ponto $P\in \X$, o conjunto 
$$\av_P(\X) = \left\lbrace  f \in \F(\X) ; f = \dfrac{g}{h} , \,\, g,h \in \F[\X] \textrm{ e } h(P)\neq 0  \right\rbrace ,$$
é um anel local que tem como corpo de  frações o próprio $\F(\X)$ e mais o seu ideal maximal é dado por:
$$M_P(\X)=\left\lbrace  f \in \F(\X) ; f = \dfrac{g}{h} , \,\, g,h \in \F[\X], g(P)=0 \textrm{ e } h(P)\neq 0    \right\rbrace .$$

\subsection{Variedades Projetivas}

\begin{defn}
	Dado um corpo $\F$, no conjunto $\F^{n+1} \setminus \{0\}$ definimos a  relação de equivalência $\sim$ definida por:  dados dois vetores $v=(v_0,v_1,\cdots,v_n)$ e $w=(w_0,w_1,\cdots,w_n) \in \F^{n+1}\setminus \{0\}$ 
	eles são equivalentes se forem linearmente dependentes sobre $\F$, ou seja, $v \sim w$ se, e somente se,  existe $\lambda \in \F$ tal que $v=\lambda w$.
\end{defn}
O conjunto quociente $(\F^{n+1} \setminus \{0\})/\sim$ das classes de equivalência segundo a relação $\sim$,   é chamado {\bf espaço projetivo\index{Espaço! projetivo} de dimensão $n$} e denotado por $\proj{n}(\F)$ e seus elementos são denotados por $(a_0:a_1:\cdots :a_n)$.

\begin{defn}
	Dizemos que um polinômio $F \in \F[x_1,\cdots,x_n]$ é homogêneo\index{Polinômio! Homogêneo} se esse for soma de monômios de mesmo grau. Um ideal gerado por polinômios homogêneos é chamado de {\bf ideal homogêneo}\index{Ideal! homogêneo}.
\end{defn}

Observe que dados um ponto $P=(a_0:a_1:\cdots:a_n)=(b_0:b_1:\cdots:b_n)\in \proj{n}(\F)$ e um polinômio homogêneo $F\in \F[x_0,\cdots,x_n]$ podemos definir  $F(P)=0$ se $F(a_0,a_1,\cdots,a_n)=0$, já que $F(a_0,a_1,\cdots,a_n)=0$ se, e somente se, $F(b_0,b_1,\cdots,b_n)=0$. Assim, podemos definir o que seja uma variedade algébrica projetiva.

\begin{defn}
	Um subconjunto $\X\contido \proj{n}(\F)$ é dito uma {\bf variedade algébrica projetiva}\index{Variedade! algébrica projetiva} se for o conjunto de zeros de um ideal homogêneo $I\contido \F[x_0,x_1,\cdots,x_n]$, ou seja:
	$$ \X=V(I) = \{ P \in \proj{n}(\F); \,\, F(P)=0,\, \ptodo F \in I\}.$$
\end{defn}
Uma variedade algébrica projetiva $\X =V(I)$ é irredutível se, e somente se, o ideal $I$ for um ideal homogêneo e o seu radical for primo.

O anel de coordenadas $\F_h[\X]=\F[x_0,\cdots,x_n]/I$ é dito anel de coordenadas homogêneas e os  elementos que são do formato $f=F+I$ com $F \in \F[x_0,\cdots,x_n]$ e $F$ homogenea,são chamados de {\bf forma de grau $d$} onde $d=gr(F)$.

Faremos agora a associação entre as curvas algébricas e os corpos de funções.
\begin{defn}
	Se $\X$ é uma variedade algébrica projetiva definimos o corpo de funções de $\X$ como sendo:
	$$\F(\X) = \left\lbrace   \frac{g}{h} ; \,\, g,h \in \F_h[\X] \textrm{ formas de mesmo grau e } h\neq 0 \right\rbrace .$$
\end{defn}
A dimensão da variedade $\X$ é dada pelo grau de transcendência de $\F(\X)/\F$.
\begin{defn}
	Dado um ponto $P\in \X$ e $f=\frac{g}{l} \in \F(\X)$ com $g,l \in \F_h[\X]$, dizemos que $f$ está definida em $P$ se $l(P)\neq 0$, $f(P)$ é dito valor de $f$ em $P$.
\end{defn}
O anel $\av_P(\X)=\{f\in\F(\X); \,\, f \textrm{ é definida em } P \} \contido \F(\X)$ é um anel local com ideal maximal $M_P(\X)=\{f\in \av_P(\X) ; \,\, f(P)=0\}$.
\begin{defn}
	Seja $F\in \F[x_1,\cdots,x_n]$ com grau total de $F$ igual a $l$, definimos a homogeneização de $F$ como sendo:
	$$F^* = x_0^l F\left( \frac{x_1}{x_0},\cdots,\frac{x_n}{x_0}\right) .$$
\end{defn}
\begin{defn}
	 Uma variedade projetiva (afim) $\X$ de dimensão 1 é chamada de curva algébrica irredutivel projetiva (afim). Desse modo o corpo de funções racionais em $\X$, $\F(\X)$, é um corpo de funções algébricas de uma variável como na definição \ref{corpo_funcao}. Mais ainda, dizemos que a curva algébrica projetiva (afim) é plana se $\X \contido \proj{2}(\F)$ (ou $\F^2$).
\end{defn}

\subsection{Curvas Não Singulares}

Levando-se em conta que os códigos geométricos de Goppa são gerados pelas  curvas planas  não singulares, a seguir apresentaremos a definição de curvas não singulares.

\begin{defn}
	Seja $\X$ uma curva algébrica definida pelo polinômio $F\in \F(x,y)$ e $P$ um ponto em $\X$. Dizemos que o ponto $P$ é um ponto não singular \index{Ponto! não singular} se pelo menos uma das derivadas parciais de $F$ aplicadas nesse ponto é não nula, ou seja, $F_x(p) \neq 0$ ou $F_y(P)\neq 0$. Se todos os pontos da curva forem não singulares dizemos apenas que a curva é não singular (ou regular).\index{Curva! não singular}
\end{defn}

Observamos que aqui que a derivada parcial de um polinômio é a sua derivada formal.


\chapter{Códigos Algébricos Geométricos}
 Neste capítulo estamos interessados em construir os códigos algébricos geomé\-tricos, conhecidos como Códigos de Goppa Geométricos, e então extrair  uma cota para a sua distância mínima e calcular sua dimensão. Aqui $\F_q$ denotará um corpo com $q$ elementos.
\section{Códigos Algébricos Geométricos}
\begin{defn}
	Seja $\F_q=\{\alpha_0,\, \cdots,\, \alpha_{q-1}\} $ e considere o conjunto, $\LL_k \contido \F_q[x]$, dos polinômios com grau menor que $k$ e $k\menori q$, definimos um código de Reed-Solomon de tamanho $n=q$ como sendo:
	$$C_k = \{ c(f)=(f(\alpha_0),\cdots,f(\alpha_{q-1}));\, f\in \LL_k \}.$$
\end{defn}
\begin{prop}
	Como acima definido, $C_k$ é um código MDS (Maximum Distance Separable), ou seja, tem distância mínima $d=n-k+1$.
\end{prop}
\begin{proof}
	Seja $f\in \LL_k$, temos que $f$ tem grau no máximo $k-1$, assim $f$ admite no máximo $k-1$ raízes distintas em $\F_q$, desse modo o peso de $c(f)$ é no mínimo $n-k+1$, ou seja, $d\maiori n-k+1$ no entanto temos que o polinômio $g(x) = (x-\alpha_{i_1}) \cdots (x-\alpha_{i_{k-1}}) \in \LL_k$  tem grau $k-1$ e tem exatamente $k-1$ raízes distintas em $\F_q$, desse modo $d(c(g))=n-k+1$.
\end{proof}

A seguir utilizaremos as seguintes notações:
\begin{enumerate}
	\item[$\bigstar$]  $F/\F_q$, um corpo de funções algébricas de gênero $g$;
	\item[$\bigstar$]  $P_1,P_2,\cdots, P_n$, lugares de grau 1 dois a dois distintos em $F/\F_q$;
	\item[$\bigstar$]  $D=\sum_{i=1}^n P_i$, um divisor em $F/\F_q$;
	\item[$\bigstar$]  $G$, um divisor em $F/\F_q$ tal que $supp(G) \inter supp(D) = \emptyset$.
\end{enumerate}
A partir das notações dadas acima e da definição \ref{def:corpo_residuo} estamos em condição de definir o que vem a ser um código gemétrico de Goppa.
\begin{defn}[Códigos Geométricos de Goppa]\label{cod:goppa}
	Definimos o código algébrico  geométrico (código geométrico de Goppa) associado aos divisores $D$ e $G$ como sendo:
	$$C(D,G) = \{ c(x) = (x(P_1),x(P_2),\cdots,x(P_n)) ; x \in \LL(G)\}.$$
\end{defn}

\begin{thm}\label{teo:goppaparametro}
	O código de Goppa $C(D,G)$ é um $[n,k,d]-$código tal que:
	$$k=dim(G)-dim(G-D) \textrm{ e } d\maiori n-gr(G).$$
\end{thm}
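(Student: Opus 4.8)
The plan is to realize $C(D,G)$ as the image of an explicit $\F_q$-linear map and then read off both parameters from the associated rank--nullity relation together with Corollary \ref{corolario_grau_1}. First I would introduce the evaluation map $ev\colon \LL(G)\to\F_q^n$ sending $x\mapsto(x(P_1),\dots,x(P_n))$. This map is well defined because each $P_i$ has degree one, so $F_{P_i}=\F_q$ and the residue $x(P_i)$ lands in $\F_q$; moreover, since $supp(G)\inter supp(D)=\emptyset$ we have $v_{P_i}(G)=0$, hence $x\in\LL(G)$ forces $v_{P_i}(x)\maiori 0$, i.e. $x\in\av_{P_i}$, so that $x(P_i)$ is genuinely defined. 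Linearity of $ev$ is immediate, since $x\mapsto x(P_i)$ is a ring homomorphism on $\av_{P_i}$. By construction $C(D,G)=ev(\LL(G))$, so $C(D,G)$ is an $\F_q$-subspace of $\F_q^n$, that is, a linear $[n,k]$-code.

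For the dimension $k$, the plan is to compute $\ker(ev)$ and apply rank--nullity. An element $x\in\LL(G)$ lies in the kernel exactly when $x(P_i)=0$ for every $i$, i.e. when $v_{P_i}(x)>0$ for all $i$. Combining these conditions with $(x)\maiori -G$ and $v_{P_i}(G)=0$ shows that they amount to $(x)\maiori -G+D=-(G-D)$, so $\ker(ev)=\LL(G-D)$. Hence $k=\dim\LL(G)-\dim\LL(G-D)=dim(G)-dim(G-D)$, the first claimed formula.

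For the bound $d\maiori n-gr(G)$ I would first invoke the theorem that in a linear code the minimum distance equals the minimum weight, so it suffices to bound the weight of an arbitrary nonzero codeword $c(x)$ with $x\in\LL(G)$. If $c(x)$ has weight $w$, then $x$ vanishes at the $n-w$ places where the codeword is zero; writing $D'$ for the sum of precisely those places yields a divisor with $0\menori D'\menori D$ and $gr(D')=n-w$, since each $P_i$ has degree one. The vanishing conditions say exactly that $x\in\LL(G-D')$, and because $c(x)\neq 0$ we have $x\neq 0$, so $\LL(G-D')\neq\{0\}$ and therefore $dim(G-D')>0$. By Corollary \ref{corolario_grau_1} a divisor of negative degree has dimension zero, so this forces $gr(G-D')\maiori 0$, that is $gr(G)\maiori n-w$, equivalently $w\maiori n-gr(G)$. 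Since this holds for every nonzero codeword, $d\maiori n-gr(G)$.

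None of the steps is genuinely difficult; the whole argument is bookkeeping with the valuations $v_{P_i}$ and one application of Corollary \ref{corolario_grau_1}. The point that deserves the most care, and where I would slow down, is the minimum-distance estimate: one must check that the set of vanishing coordinates really produces a divisor $D'$ with $0\menori D'\menori D$ and integer degree $n-w$, and that the implication ``$\LL(G-D')\neq\{0\}$ implies $gr(G-D')\maiori 0$'' is used in the right direction, being the contrapositive of the negative-degree vanishing statement. Everything else is forced by the linearity and well-definedness of the evaluation map.
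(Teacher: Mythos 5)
Your proposal is correct and follows essentially the same route as the paper's own proof: the same evaluation map $ev_D\colon \LL(G)\to C(D,G)$, the same identification $Ker(ev_D)=\LL(G-D)$ giving $k=dim(G)-dim(G-D)$ by rank--nullity, and the same distance estimate via a divisor $D'$ of vanishing places with $0\menori D'\menori D$, using that $\LL(G-D')\neq\{0\}$ forces $gr(G-D')\maiori 0$ by Corollary \ref{corolario_grau_1}. Your extra checks (well-definedness of the residues at the degree-one places $P_i$, and bounding every nonzero codeword rather than only one of minimum weight) are harmless refinements of the identical argument.
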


\begin{proof}
	Seja a aplicação de avaliação
	$$\funcao{ev_D}{\LL(G)}{C(D,G)}{x}{c(x)=((x(P_1),\cdots,x(P_n))}.$$
	Sabemos que $ev_D$ é sobrejetiva logo $\dfrac{\LL(G)}{Ker(ev_D)}\iso C(D,G)$. Queremos agora mostrar que $Ker(ev_D) = \LL(G-D)$. Seja $x\in Ker(ev_D)$, temos que para todo $Q\in \pf$ $v_Q(x) \maiori -v_Q(G)$ e $v_{P_i}(x) > 0 $ para $i=1, \cdots, n$ e mais, como $supp(G) \inter supp(D) = \emptyset$, temos que, $v_Q(x) \maiori -v_Q(G) + v_Q(D)$ assim $Ker(ev_D) \contido \LL(G-D)$.
	
	Tomemos agora $x \in \LL(G-D)$, temos que $v_{P_i}(x) \maiori -v_{P_i}(G) + v_{P_i}(D) = 1$ para $i=1,\cdots, n$, logo $x \in \LL(G)$ e $v_{P_i}(x)>0$ ou seja $x(P_i)=0$ em $F_{P_i}$ o que implica que $\LL(G-D) \contido Ker(ev_D)$ e pelo teorema dos isomorfismos (álgebra linear) concluímos que $k=dim(G)-dim(G-D).$
	
	Agora acharemos uma cota para a distância mínima do código. Assumiremos que $C(D,G) \neq \{0\}$, pois caso contrário não teriamos uma distância não nula.
	
	Seja $x\in C(D,G)$ tal que $w(x)=d$, assim temos que existe $y\in \LL(G)$ tal que $w(ev_D(y))=d$, mas $w(ev_D(y)) = d = \#\{ i ; y(P_i)\neq 0\}$. Suponha agora que $y(P_1)=y(P_2)=\cdots = y(P_{n-d})=0$, reordenando os $P_i$'s caso necessário, logo
	$$ 0\neq y \in \LL \left( G - \sum_{i=1}^{n-d}P_i \right) \implica dim\left( G - \sum_{i=1}^{n-d}P_i \right) \neq 0 \implica$$
	$$ 0 \menori gr \left( G - \sum_{i=1}^{n-d}P_i \right) = gr(G) -n +d \implica$$
	$$ d \maiori n-gr(G).$$
\end{proof}
\begin{cor}
	Suponha que $gr(G)<n$ então $ev_D:\LL(G) \emm C(D,G)$ é injetora e se $2g-2 < gr(G) <n$ teremos que $k=gr(G) +1-g$.
\end{cor}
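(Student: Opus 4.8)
The plan is to reduce everything to the identities already established in Theorem \ref{teo:goppaparametro}, together with the vanishing criterion in Corollary \ref{corolario_grau_1} and the Riemann--Roch consequence recorded in Theorem \ref{teorema_dimensao_1}. The starting observation is that, since $D=\sum_{i=1}^n P_i$ is a sum of $n$ places each of degree $1$, one has $gr(D)=n$, and therefore $gr(G-D)=gr(G)-n$.

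First I would settle injectivity. In the proof of Theorem \ref{teo:goppaparametro} it was shown that the kernel of the evaluation map is exactly $Ker(ev_D)=\LL(G-D)$, so it suffices to prove that this space is trivial. Under the hypothesis $gr(G)<n$ we get $gr(G-D)=gr(G)-n<0$, and Corollary \ref{corolario_grau_1} (item 2) then forces $dim(G-D)=0$, i.e. $\LL(G-D)=\{0\}$. Hence $Ker(ev_D)=\{0\}$ and $ev_D$ is injective.

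Next I would compute the dimension $k$ under the stronger hypothesis $2g-2<gr(G)<n$. Injectivity still holds (it only used $gr(G)<n$), so $dim(G-D)=0$, and Theorem \ref{teo:goppaparametro} gives $k=dim(G)-dim(G-D)=dim(G)$. It remains to evaluate $dim(G)$. Since divisor degrees are integers, the strict inequality $gr(G)>2g-2$ is equivalent to $gr(G)\maiori 2g-1$, which is precisely the hypothesis of Theorem \ref{teorema_dimensao_1}; applying that theorem yields $dim(G)=gr(G)+1-g$, and therefore $k=gr(G)+1-g$.

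I do not anticipate any genuine obstacle: the statement is a clean corollary of the three earlier results. The only point requiring a moment of care is the passage from the strict bound $gr(G)>2g-2$ to $gr(G)\maiori 2g-1$, which rests on integrality of divisor degrees and is exactly what lets Theorem \ref{teorema_dimensao_1} apply without loss.
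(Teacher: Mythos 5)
Your proof is correct and follows essentially the same route as the paper: identify $Ker(ev_D)=\LL(G-D)$, use $gr(G-D)=gr(G)-n<0$ with Corol\'ario \ref{corolario_grau_1}(2) to get injectivity, then apply Teorema \ref{teorema_dimensao_1} to compute $k=gr(G)+1-g$. Your extra remark that $gr(G)>2g-2$ means $gr(G)\maiori 2g-1$ by integrality is a welcome clarification the paper leaves implicit, but it does not change the argument.
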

\begin{proof}
	Temos que $gr(G-D)=gr(G) - n <0$, logo  (por \ref{corolario_grau_1}-2) $ \{0\} = \LL(G-D) = Ker(ev_D) $ logo $ev_D$ é injetora. Pelo teorema \ref{teorema_dimensao_1} temos que $k=gr(G)+1-g$.
\end{proof}

Esses resultados dão uma motivação para a seguinte definição:

\begin{defn}
	No código geométrico de Goppa $C(D,G)$ chamamos de distân\-cia designada ao inteiro $d^*=n-gr(G)$.
\end{defn}

Ao ver essa definição surge uma pergunta: Quando a distância designada é igual a distância mínima do código? Responderemos essa pergunta na próxima proposição.

\begin{prop}
	Seja $C(D,G)$ um código com distância designada $d^*$, suponha que $dim(G)>0$ e que $d^*>0$. então $d=d^*$ se, e somente se, existe um divisor $D'$ com $0\menori D' \menori D$, $gr(D')=gr(G)$ e $dim(G-D')>0$.
\end{prop}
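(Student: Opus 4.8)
The plan is to transfer the entire question to the function field through the evaluation map $ev_D$ constructed in the proof of Theorem \ref{teo:goppaparametro}. First I would record the standing inequality $d \maiori d^* = n - gr(G)$, which that theorem already supplies, together with the observation that the hypothesis $d^* > 0$ forces $gr(G) < n$, hence $gr(G-D) = gr(G) - n < 0$ and $\LL(G-D) = \{0\}$ by Corollary \ref{corolario_grau_1}. Consequently $ev_D$ is injective and, since $dim(G) > 0$, the code is nonzero, so $d$ is well defined; moreover nonzero functions of $\LL(G)$ correspond bijectively to nonzero codewords. The useful reformulation is then: $d = d^*$ holds if and only if there is some $0 \neq y \in \LL(G)$ vanishing at exactly $n - d^* = gr(G)$ of the places $P_i$.

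For the direction ($\implica$), I would take a minimal-weight codeword, realised by some $y \in \LL(G)$ with $w(ev_D(y)) = d = d^*$, and set $S = \{i ; y(P_i) = 0\}$ and $D' = \sum_{i \in S} P_i$. Then $0 \menori D' \menori D$ and $gr(D') = \#S = n - d^* = gr(G)$, because each $P_i$ has degree one. It remains to check $y \in \LL(G-D')$, that is $(y) \maiori D' - G$; this is a place-by-place verification, splitting $\pf$ into the places $P_i$ with $i \in S$, the places $P_i$ with $i \notin S$, and the places outside $supp(D)$, and using $supp(G) \inter supp(D) = \emptyset$ to read off $v_{P_i}(G) = 0$. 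Since $y \neq 0$, this yields $dim(G-D') > 0$.

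For ($\volta$), given such a $D'$ I would first note that $0 \menori D' \menori D$ together with the fact that the $P_i$ have degree one forces $D' = \sum_{i \in S} P_i$ for a subset $S$ with $\#S = gr(D') = gr(G)$. Picking $0 \neq y \in \LL(G-D') \contido \LL(G)$, the inequality $(y) \maiori D' - G$ gives $v_{P_i}(y) \maiori 1$ for each $i \in S$, so $ev_D(y)$ vanishes on $S$ and has at most $n - gr(G) = d^*$ nonzero coordinates. Because $\LL(G-D) = \{0\}$ we have $y \notin \LL(G-D)$, so $ev_D(y) \neq 0$; thus $d \menori w(ev_D(y)) \menori d^*$, which together with the standing bound $d \maiori d^*$ closes the argument.

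The step I expect to be the most delicate is the place-by-place verification in the forward direction, where the disjointness of supports is exactly what makes $v_{P_i}(G) = 0$ and turns the statement ``$y$ vanishes at $P_i$'' into the sharp inequality $v_{P_i}(y) \maiori v_{P_i}(D' - G) = 1$. Everything else is bookkeeping with degrees of degree-one places and the vanishing of $\LL$ of a negative-degree divisor.
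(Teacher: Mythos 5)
Your proof is correct and takes essentially the same route as the paper's: in the forward direction you build $D'$ from the vanishing set of a minimal-weight codeword and verify $y \in \LL(G-D')$ (the paper does the same via the kernel identification $Ker(ev_{D'}) = \LL(G-D')$ from Theorem \ref{teo:goppaparametro}), and in the converse you evaluate a nonzero element of $\LL(G-D')$ and squeeze $d^* \menori d \menori d^*$. You are in fact slightly more careful than the paper, which leaves implicit the check that the resulting codeword is nonzero (your observation that $d^*>0$ forces $\LL(G-D)=\{0\}$, hence $ev_D(y)\neq 0$).
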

\begin{proof}
	Suponha que $d=d^*$. Seja $0\neq x \in \LL(G)$ tal que $w(ev_D(x))=d$. Assim, $ev_D(x)=(x(P_1),\ldots,x(P_n))$, reordenando os $P_i's$ caso necessário, temos $x(P_1)=\cdots=x(P_{gr(G)})=0.$ Seja $D'=\sum_{i=1}^{gr(G)} P_i$, logo $gr(D')=\sum_{i=1}^{gr(G)} v_{P_i}(D')=gr(G).$ Assim temos que $0\menori D' \menori D.$ Observe que $x \in Ker(ev_{D'})$ logo como na demonstração do teorema \ref{teo:goppaparametro}, tem-se $dim(G-D')>0.$
	
	Queremos agora demonstrar a segunda parte da proposição. Seja $D'\in \df$ tal que $0\menori D' \menori D$, $gr(D')=gr(G)$ e $dim(G-D')>0$, então temos que existe $y\in \LL(G-D')$, assim sendo, temos que o peso da palavra código $(y(P_1),\ldots,y(P_n))$ é no máximo $n-gr(G)=d^*$, contudo $d$ é a distância mínima no código e mais, como provado anteriormente $d^*\menori d$, logo $d=d^*$.
\end{proof}

Agora definiremos o código que originalmente foi introduzido por V. D. Goppa em 1981 no artigo ``Codes on Algebraic Curves''.

\begin{defn}
	Seja $G$ e $D=P_1 + \cdots + P_n$ divisores de $F/\F_q$ com os $P_i's$ dois a dois disjuntos e $supp(G) \inter supp(D) = \emptyset$. Definimos o código $C_{\Omega}(D,G)$ por:
	$$C_{\Omega}=\{(\omega_{P_1}(1),\ldots,\omega_{P_n}(1))\in \F_q^n;\,\, \omega\in \Omega(G-D)\}.$$
\end{defn}

O próximo resultado tem como objetivo caracterizar o código $C_{\Omega}(D,G)$. Mas antes disso apresentaremos um lema técnico.

\begin{lem}\label{lema:1}
	Sejam $F/\F_q$ um corpo de funções, $ P\in \pf$ com $gr(P)=1$ e $\omega \in \Omega_F$ um diferencial de Weil tal que $v_P(\omega)\maiori -1$. Então $\omega_P(1)=0$ se, e somente se, $v_P(\omega)\maiori 0 $.
\end{lem}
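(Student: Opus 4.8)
The plan is to recast the statement in the language of Proposition \ref{prop:weil1}, which characterizes $v_P(\omega)$ through the vanishing of the local component $\omega_P$. Write $r_0 := v_P(\omega) = v_P((\omega))$, and assume $\omega \neq 0$ (otherwise $v_P(\omega)$ is not even defined and there is nothing to prove). By that proposition $r_0$ is the largest integer $r$ for which $\omega_P(x) = 0$ holds for every $x \in F$ with $v_P(x) \maiori -r$. This hands us two facts: first, $\omega_P(x) = 0$ whenever $v_P(x) \maiori -r_0$; second, by the maximality of $r_0$, there is some $x_0 \in F$ with $v_P(x_0) \maiori -r_0 - 1$ and $\omega_P(x_0) \neq 0$. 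I will also use repeatedly that $v_P(1) = 0$, since $1$ is a unit of $\av_P$. With this, the direction $(\volta)$ is immediate: if $r_0 \maiori 0$ then $-r_0 \menori 0 = v_P(1)$, so $1$ falls in the range $v_P(x) \maiori -r_0$ and the first fact forces $\omega_P(1) = 0$.

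For $(\implica)$ I would argue by contradiction, noting that the hypothesis $v_P(\omega) \maiori -1$ leaves only $r_0 = -1$ to be excluded. Assume $\omega_P(1) = 0$ and $r_0 = -1$. The second fact then supplies $x_0 \in F$ with $v_P(x_0) \maiori 0$, i.e. $x_0 \in \av_P$, yet $\omega_P(x_0) \neq 0$. Here the hypothesis $gr(P) = 1$ enters decisively: the residue field satisfies $[F_P : \F_q] = gr(P) = 1$, hence $F_P = \F_q$, so the residue $a = x_0(P)$ lies in $\F_q$. Then $x_0 - a \in P$, that is $v_P(x_0 - a) \maiori 1 = -r_0$, and the first fact gives $\omega_P(x_0 - a) = 0$. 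Since $\omega_P$ is $\F_q$-linear and $a \in \F_q$, I compute $\omega_P(x_0) = \omega_P(a) + \omega_P(x_0 - a) = a\,\omega_P(1) + 0 = 0$, contradicting $\omega_P(x_0) \neq 0$. Thus $r_0 = -1$ is impossible and $v_P(\omega) \maiori 0$.

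The crux, and the only place the degree-one hypothesis is spent, is the step writing $x_0 = a + (x_0 - a)$ with $a \in \F_q$: it is exactly the identification $F_P = \F_q$ that lets me subtract a constant to push $x_0$ into the deeper piece $v_P \maiori 1$, where $\omega_P$ already vanishes, and thereby collapse the whole computation onto the single value $\omega_P(1)$. For a place of degree $> 1$ no such constant $a \in \F_q$ need exist and the equivalence can genuinely fail, so I expect the careful use of $[F_P : \F_q] = 1$ to be the delicate and essential point of the argument.
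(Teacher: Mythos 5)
Your proof is correct and takes essentially the same route as the paper: both directions rest on Proposition \ref{prop:weil1}, and the heart of the forward implication is the same degree-one decomposition $x = a + y$ with $a \in \F_q$ and $y \in P$ (i.e. $F_P = \F_q$), combined with $\F_q$-linearity of $\omega_P$ and its vanishing on $\{v_P \maiori 1\}$ granted by $v_P(\omega) \maiori -1$. The only difference is organizational --- you argue by contradiction on a maximality witness $x_0$ extracted from Proposition \ref{prop:weil1}, whereas the paper verifies $\omega_P(x) = 0$ directly for $x \in \av_P$ --- and if anything your framing is slightly cleaner, since the paper's appeal to the weak approximation theorem to produce a single $x$ with $v_P(x) \maiori 0$ is superfluous (the conclusion requires the vanishing for all $x \in \av_P$, which is what the decomposition in fact delivers).
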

\begin{proof}
	$\volta)$ Primeiro vamos supor que $v_P(\omega)\maiori 0$, logo pela proposição \ref{prop:weil1}, temos que $\omega_P(x)=0$ para todo $x\in F$ com $v_P(x)\maiori 0$, temos que $1\in \F_q\contido F$ logo $v_P(1)=0$ assim $\omega_P(1)=0$.
	
	$\implica)$ Suponha agora que $\omega_P(1)=0$, assim temos que $\omega_P(a)=a\omega_P(1)=0$ para todo $a\in \F_q$. Pelo teorema \ref{teo:fraca}, existe $x\in F$ tal que $v_P(x)\maiori 0$, ou seja $x\in \av_P$. Como $gr(P)=1$ temos que $\av_P / P = \F_q$, assim existem $y\in P$ e $a \in \F_q$ tais que $x-y=a$ e mais, $v_P(y)\maiori 1$ e $v_P(a)=0$. Por hipótese temos que $v_P(\omega)\maiori-1$ e como $v_P(y)\maiori 1 $, pela proposição \ref{prop:weil1}, temos que $\omega_P(y)=0$. Assim temos que $\omega_P(x)=\omega_P(a+y) = \omega_P(a)+\omega_P(y) = 0$, ou seja, $v_P(\omega)\maiori 0$ (novamente pela proposição \ref{prop:weil1}).
\end{proof}

\begin{obs}\label{obs:1}
	A proposição \ref{prop:weil1} nos garante que $v_P(\omega)\maiori r$ se, e somente se, $\omega(x)=0$ para todo $x\in F$ com $v_P(x) \maiori -r$.
\end{obs}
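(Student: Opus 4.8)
The plan is to obtain this remark as an immediate reformulation of the first part of Proposition \ref{prop:weil1}. Since by definition $v_P(\omega) = v_P((\omega)) = v_P(W)$, that proposition identifies $v_P(\omega)$ with the maximum of the set
$$S = \{r \in \Z ;\, \omega_P(x) = 0 \textrm{ para todo } x \in F \textrm{ com } v_P(x) \maiori -r\}$$
(reading the $\omega(x)$ of the statement as the local component $\omega_P(x)$, consistently with Proposition \ref{prop:weil1}). Thus the whole content of the remark is the elementary fact that, for this particular $S$, an integer $r$ belongs to $S$ precisely when $r \menori \max S$.

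First I would verify that $S$ is closed downward in $\Z$. Suppose $r \in S$ and $r' \menori r$; for any $x \in F$ with $v_P(x) \maiori -r'$ we have $-r' \maiori -r$, hence $v_P(x) \maiori -r$, and because $r \in S$ this yields $\omega_P(x) = 0$. Therefore $r' \in S$, so $S$ is an initial segment of $\Z$. The only subtle point here is the sign bookkeeping: \emph{decreasing} $r$ \emph{strengthens} the hypothesis $v_P(x) \maiori -r$, which is exactly what lets membership propagate downward.

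Given that $S$ is a down-set with maximum $\max S = v_P(\omega)$ by Proposition \ref{prop:weil1}, we get $r \in S \sse r \menori \max S \sse v_P(\omega) \maiori r$. Unwinding the definition of $S$, the condition $r \in S$ says exactly that $\omega_P(x) = 0$ for every $x \in F$ with $v_P(x) \maiori -r$, which is precisely the asserted equivalence. I expect no genuine obstacle: the argument is purely the observation that the maximum of a downward-closed set of integers controls membership, together with the identification $v_P(\omega) = v_P(W)$ that makes Proposition \ref{prop:weil1} directly applicable.
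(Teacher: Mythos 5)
Your proposal is correct and matches the paper's (implicit) reasoning: the paper states this remark without proof as an immediate consequence of Proposition \ref{prop:weil1}, and your argument simply makes explicit the two details being glossed over --- that the set $S$ is downward closed in $\Z$ (so membership is equivalent to being $\menori \max S$) and that $\omega(x)$ is to be read as the local component $\omega_P(x)$. The sign bookkeeping you verify (decreasing $r$ strengthens the condition $v_P(x)\maiori -r$) is exactly the right point to check, and nothing further is needed.
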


Agora enuciaremos o teorema que caracteriza o código $C_{\Omega}(D,G)$.

\begin{thm}\label{teo:para2}
	O código $C_{\Omega}(D,G)$ é um $[n,k',d']-$código com parâmetros:
	$$k'=i(G-D)-i(G) \textrm{ e } d'\maiori gr(G) - 2g +2.$$
	E mais, se $gr(G) > 2g-2$, temos que $k'=i(G-D)\maiori n+g-1-gr(G)$ e se $2g-2<gr(G)<n$ então $k'=n+g-1-gr(G).$
\end{thm}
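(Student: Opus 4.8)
The plan is to realize $C_{\Omega}(D,G)$ as the image of a single $\F_q$-linear map on Weil differentials and then pin down its kernel, exactly mirroring the strategy used for $C(D,G)$ in Theorem \ref{teo:goppaparametro}. First I would introduce the evaluation map
$$\funcao{res_D}{\Omega_F(G-D)}{\F_q^n}{\omega}{(\omega_{P_1}(1),\ldots,\omega_{P_n}(1))},$$
whose image is exactly $C_{\Omega}(D,G)$ by definition. Since $res_D$ is $\F_q$-linear and onto the code, the isomorphism theorem gives $k' = dim\,\Omega_F(G-D) - dim\,\ker(res_D)$. Recalling that the map $\mu$ of the Riemann-Roch section yields $dim\,\Omega_F(A) = i(A)$ for every divisor $A$, the first term is $i(G-D)$, so the formula $k' = i(G-D) - i(G)$ reduces entirely to proving $\ker(res_D) = \Omega_F(G)$.

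This identification is the crux. For $\omega \in \Omega_F(G-D)$ one has $(\omega)\maiori G-D$; since $supp(G)\inter supp(D)=\emptyset$ and each $P_i$ has degree one, this forces $v_{P_i}(\omega)\maiori -1$ at the places of $D$ and $v_P(\omega)\maiori v_P(G)$ away from $supp(D)$. By Lemma \ref{lema:1} the coordinate $\omega_{P_i}(1)$ vanishes precisely when $v_{P_i}(\omega)\maiori 0$, so $res_D(\omega)=0$ is equivalent to $v_{P_i}(\omega)\maiori 0$ for all $i$; combined with the inequalities off $supp(D)$ this says exactly $(\omega)\maiori G$, i.e. $\omega\in\Omega_F(G)$. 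As $G\maiori G-D$ gives $\Omega_F(G)\contido\Omega_F(G-D)$, I conclude $\ker(res_D)=\Omega_F(G)$ and hence $k'=i(G-D)-i(G)$.

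For the distance bound I would argue as before: take $0\neq\omega\in\Omega_F(G-D)$ with $w(res_D(\omega))=d'$ and assume, after reordering, $\omega_{P_i}(1)=0$ for $i=1,\ldots,n-d'$. Lemma \ref{lema:1} promotes these to $v_{P_i}(\omega)\maiori 0$, so with $D'=\sum_{i=1}^{n-d'}P_i$ one verifies place by place that $\omega\in\Omega_F(G-D+D')$; as $\omega\neq 0$ this space is nonzero, giving $i(G-D+D')>0$. By the contrapositive of Theorem \ref{teorema_dimensao_1}, a divisor with positive index of speciality has degree at most $2g-2$, and $gr(G-D+D')=gr(G)-n+(n-d')=gr(G)-d'$, so $gr(G)-d'\menori 2g-2$, that is $d'\maiori gr(G)-2g+2$.

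The refinements are then bookkeeping with the dimension theorems. If $gr(G)>2g-2$, Theorem \ref{teorema_dimensao_1} gives $i(G)=0$, so $k'=i(G-D)$; plugging $dim(G-D)\maiori 0$ into $i(G-D)=dim(G-D)-gr(G-D)+g-1$ yields $k'\maiori n+g-1-gr(G)$. If in addition $gr(G)<n$, then $gr(G-D)=gr(G)-n<0$, so by Corollary \ref{corolario_grau_1} $dim(G-D)=0$ and the identity collapses to $k'=i(G-D)=n+g-1-gr(G)$. I expect the kernel computation, where the degree-one hypothesis, the disjoint-support condition, and Lemma \ref{lema:1} must be combined correctly, to be the main obstacle; the remaining steps are routine applications of Riemann-Roch.
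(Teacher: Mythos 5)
Your proposal is correct and matches the paper's own proof essentially step for step: the same evaluation map $\omega \mapsto (\omega_{P_1}(1),\ldots,\omega_{P_n}(1))$ on $\Omega_F(G-D)$, the same kernel identification $\ker = \Omega_F(G)$ via Lemma \ref{lema:1} combined with the degree-one and disjoint-support hypotheses, the same weight bound through the contrapositive of Theorem \ref{teorema_dimensao_1} applied to $G-D+D'$, and the same Riemann--Roch bookkeeping for the cases $gr(G)>2g-2$ and $2g-2<gr(G)<n$. The only difference is cosmetic: your kernel computation is written somewhat more carefully than the paper's (whose text at this point contains minor slips, e.g.\ writing $v_{P_1}=\cdots=v_{P_{n-m}}$ where it means the vanishing of the coordinates $\omega_{P_i}(1)$), but the mathematical content is identical.
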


\begin{proof}
	Seja $\phi$ a seguinte aplicação:
	$$\funcao{\phi}{\Omega_F(G-D)}{C_{\Omega}(D,G)}{\omega}{(\omega_{P_1}(1),\ldots,\omega_{P_n}(1))},$$
	obviamente $\phi$ é sobrejetiva, logo $C_{\Omega}(D,G)$ é isomorfo a $\Omega_F(G-D)/Ker(\phi)$.
	
	Seja $\omega \in Ker(\phi)$, temos que $\omega_{P_i}(1)=0$ para $i=1\ldots n$, assim pelo lema \ref{lema:1}, $v_{P_i}(\omega)\maiori 0$. Como $\omega \in \Omega_F(G-D)$, temos que $(\omega)\maiori G-D$. Observe que se $P\in \{P_1,\ldots,P_n\}$ temos que $v_P(G)=0$ e $v_P(-d)<0$ e mais, para $P\notin \{P_1,\ldots,P_n\}$ temos que $v_P(-D)=0$ assim $\omega \in \Omega_F(G)$.
	
	Seja agora $\omega \in \Omega_F(G)$, novamente pelo lema \ref{lema:1} temos que $v_{P_i}(\omega) \maiori v_{P_i}(G) = 0$, logo $\omega \in Ker(\phi).$
	
	Assim pelo isomorfismo visto acima temos que
	$$k'=dim_{\F_q}(\Omega_F(G-D))-dim_{\F_q}(\Omega_F(G)) = i(G-D) - i(G).$$

	Seja $\phi(\omega)\in C_{\Omega}(D,G)$, uma palavra com peso $m >0$,  sem perda de generalidade podemos supor que $v_{P_1}=\cdots=v_{P_{n-m}}$, desse modo temos que $\omega\in \Omega_F(G-(D-\sum_{i=1}^{n-m} P_i))$.
	
	Observe que $i(G-(D - \sum_{i=1}^{n-m}))=dim_{\F_q}(G-(D - \sum_{i=1}^{n-m}))$ e como $\Omega(G-(D - \sum_{i=1}^{n-m}))\neq \{0\}$ temos que $i(G-(D - \sum_{i=1}^{n-m}))>0$, desse modo $dim(G-(D - \sum_{i=1}^{n-m})) >gr(G-(D - \sum_{i=1}^{n-m})) - g +1$ o que é a contra positíva do teorema \ref{teorema_dimensao_1}, assim 
	$$2g-2 \maiori gr\left( G-\left(D - \sum_{i=1}^{n-m}\right) \right) = gr\left(G\right) - gr\left(D - \sum_{i=1}^{n-m}\right) = gr\left(G\right)-m$$
	concluimos então que $m\maiori gr(G)-2g+2$, ou seja, $d'\maiori gr(G)-2g+2$.
	
	Assuma agora que $gr(G)>2g-2$. Pelo teorema \ref{teorema_dimensao_1}, temos que $i(G)=0$, assim $k'=i(G-D)=dim(G-D)-gr(G-D)+g-1$, como $dim(G-D)\maiori 0$ temos que $k'\maiori n+g-1-gr(G)$.
	
	Agora se $gr(G)<n$ temos que $gr(G-D) = gr(G)-n < 0$ logo pelo corolário \ref{corolario_grau_1} $dim(G-D)=0$ assim $k'=n+g-1-gr(G)$.
\end{proof}

O próximo teorema mostra a ligação entre o código geométrico de Goppa ($C(D,G)$) e o código que acabamos de definir $C_{\Omega}(D,G)$. Novamente antes do teorema apresentaremos mais um lema técnico.

\begin{lem}\label{lema:2}
	Sejam $P\in \pf$ tal que $gr(P)=1$, $\omega $ um diferencial de Weil com $v_P(\omega)\maiori -1$ e $x\in F$ com $v_P(x)\maiori 0$, então 
	$$ \omega_P(x) = x(P)\omega_P(1).$$
\end{lem}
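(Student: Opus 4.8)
The plan is to split $x$ into its residue (a constant, since $P$ has degree one) plus a term of positive valuation, and then to annihilate that term using the hypothesis $v_P(\omega)\maiori -1$.

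First I would exploit that $gr(P)=1$ forces the residue field $F_P=\av_P/P$ to equal $K=\F_q$. Since $v_P(x)\maiori 0$ means $x\in\av_P$, its image under the residue map is some element $a:=x(P)\in K$, and by the definition of that map $x-a\in P$, i.e.\ $v_P(x-a)\maiori 1$. Writing $y:=x-a$ I obtain $x=a+y$ with $a=x(P)\in K$ and $v_P(y)\maiori 1$. This is exactly the normalization already used in the forward implication of Lemma \ref{lema:1}, of which the present statement is the generalization obtained by replacing the constant $1$ with an arbitrary $x\in\av_P$.

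Next, using the $K$-linearity of the local component $\omega_P$, I would compute
$$\omega_P(x)=\omega_P(a+y)=\omega_P(a)+\omega_P(y)=a\,\omega_P(1)+\omega_P(y)=x(P)\,\omega_P(1)+\omega_P(y),$$
so that everything reduces to proving $\omega_P(y)=0$. For this I would invoke Observation \ref{obs:1} (read off from Proposition \ref{prop:weil1}): the hypothesis $v_P(\omega)\maiori -1$ says, upon taking $r=-1$, that $\omega_P(z)=0$ for every $z\in F$ with $v_P(z)\maiori 1$. Since $v_P(y)\maiori 1$, this yields $\omega_P(y)=0$ and hence the claimed identity.

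The argument requires no new machinery, so there is no genuine obstacle; the only point demanding care is the sign bookkeeping when applying the characterization of $v_P(\omega)$, where one must match $r=-1$ so that the threshold $-r=1$ agrees with the valuation bound $v_P(y)\maiori 1$ on the piece that is to be killed.
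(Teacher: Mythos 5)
Your proof is correct and follows essentially the same route as the paper's: the decomposition $x=a+y$ with $a=x(P)\in\F_q$ (using $gr(P)=1$ so that $\av_P/P=\F_q$) and $y\in P$, then killing $\omega_P(y)$ via Proposition \ref{prop:weil1} with the threshold $v_P(y)\maiori 1$ matching $v_P(\omega)\maiori -1$. The only cosmetic difference is that you name $a$ as the residue $x(P)$ from the outset, whereas the paper identifies $a$ with the residue class only at the end.
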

\begin{proof}
	Como $gr(P)=1$ temos que $\av_P/P = \F_q$, o fato de $v_P(x)\maiori 0$ nos dá que $x\in \av_P$, logo existe $y\in P$ e $a \in \F_q$ tais que $x-y=a$, observe que $v_P(a)=0$ e $v_P(y)\maiori 1$. Como $v_P(y)\maiori 1$ temos que $\omega_P(y)=0$, assim $\omega_P(x)=\omega_P(a+y)=\omega_P(a)+\omega_P(y) = a\omega_P(1).$
	
	Observe agora que $a$ é a classe de resíduos de $x$ em relação a $P$, logo $\omega_P(x)=x(P)\omega_P(1).$
\end{proof}

Em fim o teorema que nos dá a relação entre os códigos.

\begin{thm}
	Os códigos $C(D,G)$ e $C_{\Omega}(D,G)$ são duais entre si, ou seja,
	$$C_{\Omega}(D,G)=C(G,D)^{\bot}.$$
\end{thm}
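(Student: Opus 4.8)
The plan is to establish the asserted equality $C_{\Omega}(D,G)=C(D,G)^{\bot}$ by proving one inclusion directly and then matching dimensions. Concretely, I would first show $C_{\Omega}(D,G)\contidoi C(D,G)^{\bot}$, and afterwards verify that $dim(C_{\Omega}(D,G)) = n - dim(C(D,G)) = dim(C(D,G)^{\bot})$. Since a subspace of $\F_q^n$ contained in another subspace of the same dimension must coincide with it, the two inclusions follow at once from these two facts.

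For the inclusion, take $x\in\LL(G)$ and $\omega\in\Omega_F(G-D)$, so that the associated codewords are $ev_D(x)=(x(P_1),\ldots,x(P_n))$ and $(\omega_{P_1}(1),\ldots,\omega_{P_n}(1))$. I must show the canonical inner product $\gera{ev_D(x),\,(\omega_{P_1}(1),\ldots,\omega_{P_n}(1))}=\sum_{i=1}^n x(P_i)\omega_{P_i}(1)$ vanishes. Each $P_i$ has degree $1$; moreover $v_{P_i}(\omega)\maiori v_{P_i}(G-D)=-1$ since $(\omega)\maiori G-D$, $P_i\notin supp(G)$ and $v_{P_i}(D)=1$; and $v_{P_i}(x)\maiori -v_{P_i}(G)=0$. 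These are precisely the hypotheses of Lemma \ref{lema:2}, which yields $\omega_{P_i}(x)=x(P_i)\omega_{P_i}(1)$. Hence $\sum_{i=1}^n x(P_i)\omega_{P_i}(1)=\sum_{i=1}^n \omega_{P_i}(x)$, and it remains to identify this finite sum with the full global sum of local components.

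To do so I would apply Proposition \ref{prop:weil_soma} to the principal adele $\alpha$ all of whose components equal $x$. Since this adele lies in the image of $F$ and $\omega\in\Omega_F(G-D)$ annihilates $\af(G-D)+F$, we obtain $0=\omega(\alpha)=\sum_{P\in\pf}\omega_P(x)$. It then suffices to check that $\omega_P(x)=0$ for every $P\notin\{P_1,\ldots,P_n\}$. For such $P$ one has $v_P(D)=0$, hence $v_P(\omega)\maiori v_P(G)$, while $v_P(x)\maiori -v_P(G)$; by Observation \ref{obs:1} (equivalently Proposition \ref{prop:weil1}) the condition $v_P(\omega)\maiori v_P(G)$ forces $\omega_P(y)=0$ for all $y$ with $v_P(y)\maiori -v_P(G)$, in particular $\omega_P(x)=0$. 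Thus the global sum collapses to $\sum_{i=1}^n\omega_{P_i}(x)=0$, proving orthogonality and hence the inclusion.

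For the dimension count I would invoke the parameters already computed: $dim(C(D,G))=dim(G)-dim(G-D)$ from Theorem \ref{teo:goppaparametro} and $dim(C_{\Omega}(D,G))=i(G-D)-i(G)$ from Theorem \ref{teo:para2}. Expanding the index of speciality through $i(A)=dim(A)-gr(A)+g-1$ and using $gr(G)-gr(G-D)=gr(D)=n$ (each $P_i$ has degree $1$), one finds $i(G-D)-i(G)=dim(G-D)-dim(G)+n=n-\big(dim(G)-dim(G-D)\big)$, which is exactly $n-dim(C(D,G))$. I expect the genuine obstacle to lie in the middle step of the inclusion, namely passing from the finite inner-product sum to a vanishing global sum of local components; this rests on combining the residue-type summation formula of Proposition \ref{prop:weil_soma} with the local vanishing criterion of Proposition \ref{prop:weil1}, after which the dimension comparison is a routine substitution.
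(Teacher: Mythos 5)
Your proposal is correct and follows essentially the same route as the paper: the inclusion $C_{\Omega}(D,G)\contidoi C(D,G)^{\bot}$ via the principal adele of $x$, Proposi\c{c}\~ao \ref{prop:weil_soma}, the local vanishing criterion of Proposi\c{c}\~ao \ref{prop:weil1}/Observa\c{c}\~ao \ref{obs:1}, and Lema \ref{lema:2}, combined with the dimension count $i(G-D)-i(G)=n-\big(dim(G)-dim(G-D)\big)$ from Teoremas \ref{teo:goppaparametro} and \ref{teo:para2}. If anything, your write-up is slightly more careful than the paper's, since you explicitly verify the hypotheses $v_{P_i}(\omega)\maiori -1$ and $v_{P_i}(x)\maiori 0$ before invoking Lema \ref{lema:2}, and you state the vanishing at places outside $supp(D)$ correctly as $\omega_P(x)=0$ (the paper's text has a typo, ``$v_P(x)=0$'', at that step).
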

\begin{proof}
	Primeiro vamos mostrar que a dimensão dos códigos ($C(D,G) ^{\bot}$ e $C_{\Omega}(D,G)$) são iguais.
	
	Pelo teorema \ref{teo:para2} temos que $dim(C_{\Omega}(D,G)) = i(G-D)-i(G)$, agora pelo corolário \ref{teo:rie_roc} (Riemann-Rock) temos que $i(G-D)-i(G) = dim(G-D)+g-1-gr(G-D) -(dim(G)+g-1-gr(G))=dim(G-D) +gr(D) -dim(G)= n+dim(G-D)-dim(G)$, pelo teorema \ref{teo:goppaparametro} temos que $dim(C(D,G))=dim(G)-dim(G-D)$, assim $dim(C_{\Omega}(D,G)) = dim( C(D,G)^{\bot})$.
	
	Agora mostraremos que $C_{\Omega}(D,G) \contido C(D,G)^{\bot}$, o que vai nos garantir a igualdade desejada.
	
	Seja $\omega \in \Omega_F(G-D)$  e $x\in \LL(G)$. Identificando $x$ como um adele principal temos que $0=\omega(x)= \sum _{P\in\pf}\omega_P(x)$ (pela proposição \ref{prop:weil_soma}). Para $P\in \pf \setminus \{P_1,\ldots,P_n\}$ temos que $v_P(x)\maiori -v_P(\omega)$, então pela observação \ref{obs:1} temos que $v_P(x)=0$, assim $\sum_{P\in\pf}\omega_P(x) =\sum _{i=1}^n\omega_{P_i}(x)$. Agora pelo lema \ref{lema:2} temos que $\sum _{i=1}^n\omega_{P_i}(x) = \sum_{i=1}^n x(P_i)\omega_{P_i}(1) = \gera{(\omega_{P_1},\ldots,\omega_{P_n}),(x(P_1),\ldots,x(P_n))}$, concluindo então que $C_{\Omega}(D,D) \contido C(D,G)^{\bot}$.
\end{proof}

\section{Teorema de Bézout}
Nesta seção ficaremos a par do teorema de Bézout, o qual fala sobre o número de interseções entre curvas algébricas e também o ligaremos a teoria de códigos,  foco deste trabalho.

\begin{thm}[Teorema de Bézout]\index{Teorema! de Bézout}
	Sejam $\X$ e $\Y$ duas curvas algébricas planas irredutíveis de grau $l$ e $m$ respectivamente sobre um corpo algebricamente fechado $\F$ tais que elas não tenham uma componente em comum, então o número de pontos da interseção entre as curvas é exatamente $lm$ (contanto os pontos com suas  multiplicidades).
\end{thm}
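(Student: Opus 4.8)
The plan is to follow the classical elimination-theoretic (resultant) argument, which matches the elementary spirit of this text. Write $F,G\in\F[x_0,x_1,x_2]$ for the homogeneous forms of degrees $l$ and $m$ cutting out $\X$ and $\Y$. Since the two curves share no component, $F$ and $G$ have no common irreducible factor, so their common locus $\X\inter\Y$ in $\proj{2}(\F)$ is finite; denote its points by $P_1,\ldots,P_s$. For each $P_i$ I would define the local intersection multiplicity $I_{P_i}(\X,\Y):=\dim_{\F}\mathcal{O}_{P_i}/(F,G)$, where $\mathcal{O}_{P_i}$ is the local ring of $\proj{2}(\F)$ at $P_i$ read in an affine chart containing it; with this convention the theorem becomes the clean statement $\sum_{i=1}^{s}I_{P_i}(\X,\Y)=lm$.

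First I would put everything in general position by a linear change of coordinates. Because $\F$ is algebraically closed, hence infinite, and $\X\inter\Y$ is finite, I can arrange that the point $E=[0:0:1]$ lies on neither curve, that no line through $E$ meets two distinct $P_i$, and that no $P_i$ lies on the line $x_0=0$. The first condition forces the coefficients of $x_2^{l}$ in $F$ and of $x_2^{m}$ in $G$ to be nonzero constants, so that, viewed as polynomials in $x_2$ over $\F[x_0,x_1]$, they have $x_2$-degree exactly $l$ and $m$; the second makes the projection away from $E$ separate the intersection points; the third keeps all intersections in the affine chart $x_0=1$.

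Next I would form the resultant $R(x_0,x_1)=\mathrm{Res}_{x_2}(F,G)$. The standard formula for the resultant of two forms of $x_2$-degrees $l$ and $m$ with coefficients homogeneous in $(x_0,x_1)$ shows that $R$ is homogeneous of degree $lm$, and $R\not\equiv0$ exactly because $F$ and $G$ are coprime in $\F(x_0,x_1)[x_2]$. As $\F$ is algebraically closed, $R$ splits completely,
$$R(x_0,x_1)=c\prod_{i=1}^{s}(\beta_i x_0-\alpha_i x_1)^{e_i},\qquad \sum_{i=1}^{s}e_i=lm,$$
the $[\alpha_i:\beta_i]$ being distinct. Now $R$ vanishes at $[\alpha:\beta]$ iff $F(\alpha,\beta,x_2)$ and $G(\alpha,\beta,x_2)$ have a common root in $\F$, i.e.\ iff some point of $\X\inter\Y$ sits on the line through $E$ labelled by $[\alpha:\beta]$; by the separation property each linear factor corresponds to exactly one $P_i$. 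Thus the degree computation already yields $\sum_i e_i=lm$, and it remains only to identify each exponent with an intersection multiplicity.

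The hard part will be this last identification: proving $e_i=I_{P_i}(\X,\Y)$. Up to here I have merely counted factors of a one-variable elimination polynomial, so the genuine content is the local statement that the order of vanishing of $R$ at $[\alpha_i:\beta_i]$ equals $\dim_{\F}\mathcal{O}_{P_i}/(F,G)$. I would isolate this as a lemma: passing to the chart $x_0=1$ and localizing $\F[x_1]$ at $x_1-\alpha_i$, one shows that $\mathcal{O}_{P_i}/(f,g)$ (with $f,g$ the dehomogenizations) is a finite-dimensional $\F$-vector space whose dimension is exactly the $(x_1-\alpha_i)$-adic valuation recorded by the resultant, using that over a point with a single preimage the resultant's local factor measures the length of the quotient. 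I expect this resultant-versus-length comparison to be the main obstacle, and once it is settled, summing over the $P_i$ turns $\sum_i e_i=lm$ into $\sum_i I_{P_i}(\X,\Y)=lm$, completing the argument. As an alternative I would keep in mind the Koszul/Hilbert-polynomial route, where coprimality makes $F,G$ a regular sequence and the Hilbert polynomial of $\F[x_0,x_1,x_2]/(F,G)$ is the constant $lm$, but that needs graded-module machinery not developed here.
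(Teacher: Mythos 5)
Your proposal cannot be measured against a proof in the paper, because the paper has none: the ``demonstração'' of this theorem is a one-line deferral to Fulton, whose own argument is the graded one you mention only as an alternative at the end (coprimality makes $F,G$ a regular sequence, one computes that $\dim_{\F}\bigl(\F[x_0,x_1,x_2]/(F,G)\bigr)_d=lm$ for $d\gg 0$, and then transfers this global count to the sum of local multiplicities). So your resultant route is a genuinely different and classically legitimate path. The parts you actually carry out are sound: the general-position reductions are achievable because $\F$ is algebraically closed hence infinite and one only needs to avoid the two curves and finitely many lines; $R=\mathrm{Res}_{x_2}(F,G)$ is homogeneous of degree $lm$ by the Sylvester determinant once $E\notin\X\uniao\Y$ forces full $x_2$-degrees; and $R\not\equiv 0$ follows from coprimality. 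Supplying the definition $I_P(\X,\Y)=\dim_{\F}\mathcal{O}_P/(F,G)$ is also the right move, since the statement's ``multiplicidades'' is otherwise undefined in this text.

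The genuine gap is exactly the step you flag: the identification $e_i=\dim_{\F}\mathcal{O}_{P_i}/(f,g)$ is not a detail to be isolated, it is the entire content of the theorem, and as written you have only proven that the exponents of a one-variable eliminant sum to $lm$. To close it you need a concrete local lemma that your sketch gestures at but does not establish: for the discrete valuation ring $A=\F[x_1]_{(x_1-\alpha_i)}$ (in the chart $x_0=1$), if $f,g\in A[x_2]$ have unit leading coefficients in $x_2$, then $\mathrm{ord}_A\,\mathrm{Res}_{x_2}(f,g)=\mathrm{length}_A\bigl(A[x_2]/(f,g)\bigr)$; this is proved, e.g., via the Sylvester matrix and Smith normal form (the resultant generates the $0$th Fitting ideal of $A[x_2]/(f,g)$, a finite torsion $A$-module precisely because the leading coefficients are units --- this is where $E\notin\X\uniao\Y$ is used essentially, not just for degree bookkeeping). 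One must then decompose $A[x_2]/(f,g)$ into its localizations at the maximal ideals over $(x_1-\alpha_i)$ and invoke your separation condition to see that only the factor at $P_i$ is nonzero, and use that the residue field of $A$ is $\F$ to convert length over $A$ into $\dim_{\F}\mathcal{O}_{P_i}/(f,g)$. Without this lemma the argument proves a statement about resultant multiplicities, not about intersection multiplicities; with it, your plan is complete and is a fully valid elementary alternative to the graded-ring proof the paper cites.
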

\begin{proof}
	A demonstração desse resultado não está no objetivo desse texto, contudo ela se encontra em \cite{fulton}.
\end{proof}

\begin{prop}\label{teor:irre_fecho}
	Consideremos um polinômio $G\in \F_q[x,y]$ com grau total $m$ tal que sua forma homogênea $G^*$ define uma curva irredutível não singular $\X$, então $G$ é  irredutível em $\F[x,y]$, onde $\F$ é o fecho algébrico de $\F_q$.
\end{prop}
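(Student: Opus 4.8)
The plan is to argue by contradiction, producing a singular point of $\X$ whenever $G$ factors and thereby contradicting non-singularity. Suppose $G = A \cdot B$ with $A,B \in \F[x,y]$ both of positive degree. Since homogenization is multiplicative, i.e. $(AB)^* = A^* \cdot B^*$ (because over the domain $\F[x,y]$ the total degree of a product is the sum of the degrees), the homogeneous polynomial $G^*$ factors as $G^* = A^* \cdot B^*$ with $A^*, B^*$ homogeneous of positive degree. It therefore suffices to show that the smooth projective plane curve $\X = V(G^*)$ cannot have $G^*$ reducible over $\F$, and the crucial hypothesis used throughout will be non-singularity rather than the assumed irreducibility of $\X$.

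First I would reorganize the factorization into two cases by writing $G^*$ as a product of irreducible homogeneous factors over $\F$. If some irreducible factor $H$ of positive degree occurs with multiplicity $\maiori 2$, say $G^* = H^2 M$, then every point $P \in V(H)$ satisfies $G^*(P)=0$, and the product rule gives $\partial_{x_i} G^*(P) = 0$ for every homogeneous variable $x_i$, since every summand of $\partial_{x_i}(H^2 M)$ retains a factor of $H$ (this holds in every characteristic). Thus every point of $V(H)$ would be singular on $\X$, contradicting non-singularity. Otherwise $G^*$ possesses at least two distinct irreducible factors $H_1, H_2$ of positive degree, defining distinct irreducible plane curves sharing no common component.

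Next I would apply the Theorem of B\'ezout above to $V(H_1)$ and $V(H_2)$: being irreducible of degrees $\maiori 1$ with no common component, they meet in $\proj{2}(\F)$ in exactly $gr(H_1)\cdot gr(H_2) \maiori 1$ points, so there is a point $P$ with $H_1(P) = H_2(P) = 0$. Writing $G^* = H_1 H_2 N$, the product rule again yields $\partial_{x_i} G^*(P) = 0$ for every homogeneous variable $x_i$, because each summand of $\partial_{x_i}(H_1 H_2 N)$ contains $H_1$ or $H_2$ as a factor. Together with $G^*(P)=0$, this makes $P$ a singular point of $\X$, the desired contradiction. In either case the assumption that $G$ is reducible over $\F$ fails, so $G$ is irreducible in $\F[x,y]$.

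The main obstacle I anticipate is bookkeeping rather than depth: the version of B\'ezout at our disposal requires both curves to be irreducible and to have no common component, so the single factorization $G^* = A^* B^*$ must first be split into the two cases above (a repeated irreducible factor versus two distinct ones) before B\'ezout can legitimately be invoked. A secondary point demanding care is that non-singularity must be interpreted geometrically: the point $P$ furnished by B\'ezout lives in $\proj{2}(\F)$ over the algebraic closure, so I am implicitly using that $\X$ is smooth at all of its geometric points and that the projective Jacobian criterion — vanishing of all homogeneous partials together with $G^*(P)=0$ — correctly detects singularity there.
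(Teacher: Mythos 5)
Your proof is correct, but it takes a genuinely different route from the paper's. The paper argues in essentially one line from the \emph{irreducibility} hypothesis: since $\X$ is an irreducible curve, its defining ideal $I=\gera{G^*}\contido \F[x,y,z]$ is prime, hence $G^*$ is irreducible in the UFD $\F[x,y,z]$, and then multiplicativity of homogenization, $(fh)^*=f^*h^*$, turns any factorization of $G$ into a factorization of $G^*$, a contradiction. You never touch irreducibility of $\X$ and instead extract the contradiction from \emph{non-singularity}: a repeated irreducible factor $H^2$ of $G^*$ forces every point of $V(H)$ to be singular (here you should add one sentence noting $V(H)\neq \emptyset$, which holds because $H$ is a form of positive degree in three variables over the algebraically closed field $\F$), while two distinct irreducible factors $H_1,H_2$ must meet by the paper's B\'ezout theorem, and the product rule kills all homogeneous partials of $G^*$ at a common point. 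What your route buys: besides being logically independent of the irreducibility of $\X$, your Case 1 handles a situation the paper's argument quietly glosses over --- if $G^*=H^k$ with $k\maiori 2$, the variety $V(G^*)$ is still irreducible (the paper's definition only guarantees that the \emph{radical} of $I$ is prime, and $\sqrt{\gera{H^k}}=\gera{H}$ is prime) yet $G^*$ is reducible, so smoothness is exactly what excludes this; in that sense your proof repairs a small gap in the paper's step ``$I=\gera{G^*}$ is prime.'' What the paper's route buys is brevity: no B\'ezout, no case split, and no discussion of the Jacobian criterion in positive characteristic. Your use of that criterion is sound in every characteristic, by the way, and can be justified without Euler's relation in the direction you need: at a point of the curve normalized to have last coordinate $1$, the affine partials of the dehomogenization are the specializations $G^*_x(x,y,1)$ and $G^*_y(x,y,1)$ of the homogeneous ones, so vanishing of all homogeneous partials yields a singular point in the paper's affine sense as well.
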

\begin{proof}
	Pela nossa definição $\X$ é uma curva gerada por um ideal primo $I\contido \F[x,y,z]$ e mais $I=\langle G^* \rangle$. Logo $G^*$ é irredutível em $\F[x,y,z]$. Agora sendo $G^*$ irredutível e supondo, por absurdo, que $G$ seja redutível temos que $G=fh$ com $gr(f)<gr(G)$ e $gr(h)<gr(G)$, logo temos que $G^* = (fh)^* = f^*g^*$ que é redutível, absurdo, assim temos que $G$ é irredutível em $\F[x,y]$.
\end{proof}

\begin{defn}
	Seja $\X$ uma curva definida sobre $\F_q$, isto é, as equações que a define tem seus coeficientes em $\F_q$. Os pontos de $\X$ que tem todas as coordenadas em $\F_q$ são ditos pontos racionais.
\end{defn}

Seja $V_l$ o espaço vetorial de polinômios de grau total  no máximo $l$,  em duas variáveis $x,y$ e com coeficientes em $\F_q$. Considere $G$ um polinômio como o da proposição \ref{teor:irre_fecho} (em particular, o grau total de $G$ é $m$), $P_1,P_2,\cdots,P_n$ pontos racionais  da curva definida por $G$. Definimos o código $C$ por:
$$C=\{ (f(P_1),f(P_2),\cdots,f(P_n)) ; \, \, f\in V_l\}.$$

\begin{thm}\label{teo:cotarefe}
	Se no código $C$, definido  acima tem-se que $n>lm$, então para sua distância mínima $d$ e sua dimensão $k$ são dadas por:
	$$ d \maiori n-lm;$$
	$$k=\left\lbrace\begin{array}{ll}
				\combinacao{l+2}{2}, & \textrm{ se } l<m;\\
				lm+1 - \combinacao{m-1}{2}, & \textrm{ se } l \maiori m.
	                 \end{array}\right.$$
\end{thm}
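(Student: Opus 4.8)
The plan is to study the $\F_q$-linear evaluation map $ev : V_l \to \F_q^n$, $f \mapsto (f(P_1),\ldots,f(P_n))$, whose image is exactly $C$; by rank-nullity, $k = dim_{\F_q}(V_l) - dim_{\F_q}(Ker(ev))$. The dimension of $V_l$ is immediate: the monomials $x^ay^b$ with $a+b\menori l$ form a basis, and there are exactly $\combinacao{l+2}{2}$ of them. Everything thus reduces to computing $Ker(ev)$, the space of $f\in V_l$ vanishing simultaneously at all of $P_1,\ldots,P_n$.

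The key tool is B\'ezout's Theorem, applied to the curve $\X$ defined by $G$ (irreducible of degree $m$, and irreducible even over the algebraic closure $\F$ by Proposition \ref{teor:irre_fecho}) and to the curve defined by a nonzero $f\in V_l$. First I would show that if a nonzero $f$ vanishes at all $n$ points, then $f$ and $G$ must share a common component. Otherwise, factoring $f$ into irreducibles (none associate to $G$) and summing the intersection numbers supplied by B\'ezout over these factors, the curves $\{f=0\}$ and $\X$ would meet in at most $lm$ points of $\F^2$, hence in at most $lm$ rational points; but $f$ would vanish at $n>lm$ of the $P_i$, a contradiction. Since $G$ is irreducible, a common component forces $G\mid f$.

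I would then split into the two cases. If $l<m$, then $G\mid f$ with $gr(f)\menori l<m=gr(G)$ forces $f=0$, so $Ker(ev)=\{0\}$ and $k=\combinacao{l+2}{2}$. If $l\maiori m$, then every $f=Gh$ with $h\in V_{l-m}$ vanishes on all of $\X$ and hence lies in the kernel, while the argument above shows conversely that every kernel element is such a multiple; thus $Ker(ev)=G\cdot V_{l-m}$. Multiplication by the nonzero $G$ is injective on the domain $\F_q[x,y]$, so $dim_{\F_q}(Ker(ev))=dim_{\F_q}(V_{l-m})=\combinacao{l-m+2}{2}$, giving $k=\combinacao{l+2}{2}-\combinacao{l-m+2}{2}$. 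Expanding both binomials, a direct computation shows this equals $lm+1-\combinacao{m-1}{2}$, as claimed.

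For the distance bound I would take any nonzero codeword $ev(f)$: then $f$ does not vanish at all the $P_i$, so $G\ndivide f$, and the same B\'ezout count shows $f$ vanishes at at most $lm$ of them, whence the weight of $ev(f)$ is at least $n-lm$ and $d\maiori n-lm$. The main obstacle is the careful use of B\'ezout: as stated it concerns irreducible curves and counts intersections with multiplicity over $\F$, whereas here $\{f=0\}$ may be reducible and I only wish to bound the number of $\F_q$-rational common zeros. This is resolved by passing to the irreducible factors of $f$ (none associate to the irreducible $G$, so B\'ezout applies factorwise) and by noting that the count of rational intersection points never exceeds the total multiplicity-counted intersection number, which is $\menori lm$.
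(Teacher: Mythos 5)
Your proposal is correct and follows essentially the same route as the paper: rank--nullity for the evaluation map on $V_l$, the monomial count $\combinacao{l+2}{2}$, B\'ezout (via the irreducibility of $G$ over the algebraic closure, Proposi\c{c}\~ao \ref{teor:irre_fecho}) to show the kernel is exactly $G\cdot V_{l-m}$, and the same B\'ezout count for the weight bound $d\maiori n-lm$. If anything you are more careful than the paper at two points it glosses over: you explicitly factor a possibly reducible $f$ into irreducibles before invoking B\'ezout, and you justify $dim_{\F_q}(G\cdot V_{l-m})=\combinacao{l-m+2}{2}$ by injectivity of multiplication by $G$, where the paper misstates this dimension as ``$l-m$''.
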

\begin{proof}
	Primeiro queremos encontrar a dimensão do espaço vetorial $V_l$ que tem como base o conjunto formado pelo monômios de grau menor ou igual a $l$, conjunto esse que tem cardinalidade igual a $\sum_{i=0}^l (l+1)-i = \dfrac{2(l+1)^2-(l+1)l}{2}\! = \dfrac{(l+1)(l+2)}{2} = \combinacao{l+2}{2}$, assim a dimensão de $V_l$ é $\combinacao{l+2}{2}$.
	
	Seja $F\in V_l$, se $G$ for um fator de $F$ temos que a palavra correspondente a $F$ no código é zero. Agora dada uma palavra nula no código e $F\in V_l$ o polinômio que a gera temos que a curva $\Y$ definida por $F=0$ e $G=0$ tem grau $l'\menori l$ e $l'\menori m$ e mais, temos que $P_1,\cdots,P_n$ estão na interseção de $\Y$ com $\X$. O teorema de Bézout nos garante que se $\Y$ e $\X$ não tem um fator em comum o número de pontos na intersessão é menor ou igual a $l'm \menori lm$, mas por hipótese $n>lm$, logo $F$ tem $G$ como seu fator. Assim temos que as funções em $V_l$ que geram a palavra zero é um subespaço vetorial de dimensão $l-m$ dado  por $GV_{l-m}=\{GH ; \, H \in V_{l-m}\}$. 
	
	Se $l<m$ temos que $V_{l-m} = \emptyset$ logo a dimensão do código é dada por $k=\combinacao{l+2}{2}$. Caso contrário, teremos que $k=\combinacao{l+2}{2} - \combinacao{l-m+2}{2} = lm+1-\combinacao{m-1}{2}.$
	
	Agora queremos demonstrar que a distância mínima do código é $d\maiori n-lm$. De fato, seja $w\in C$ uma palavra não nula, suponha que $w$ tem mais que $lm$ coordenadas nulas. Seja $F\in V_l$ um polinômio que gere $w$, tomemos a curva $\Y$ definida por $F=0$, como $gr(F)\menori l$ temos que $gr(\Y)\menori l$ logo $\#\Y \inter \X \menori lm$, pelo teorema de Bézout. Com um possível reordenamento das coordenadas de $w$ podemos supor que $F(P_1)=F(P_2)=\cdots = F(P_{lm})=\cdots=F(P_j)=0$, pela nossa construção temos que $\{P_1,\cdots,P_j\} \contido \Y \inter \X$. Logo $j\menori lm$, assim $d\maiori n-lm$, como queríamos demonstrar.
\end{proof}

\chapter{Códigos de Avaliação}

Este capítulo está dedicado a construção dos códigos de avaliação.

\section{Funções  Peso, Grau e Ordem}

\begin{defn}

	Seja $R=\F[x_1,\cdots,x_m]$ o anel de polinômios a $m$ variáveis sobre o corpo $\F$, suponha que exista uma ordem total $\ordem$ no conjunto de monômios de $R$ tal que para quaisquer monômios $M_1,\,\,M_2$ e $M$ temos que:
	\begin{enumerate}
		\item \label{ordem:1} Se $M\neq 1$, então $1 \ordem M$;
		\item \label{ordem:2} Se $M_1 \ordem M_2$, então $MM_1\ordem MM_2$.
	\end{enumerate}
	Assim dizemos que $\ordem$ é uma ordem de admissão ou ordem de redução em monô\-mios.
\end{defn}
	No que segue $R$ é uma $\F-$álgebra, ou seja, um anel comutativo com unidade tal que $\F \contido R$ como subanel. E mais, o simbolo $-\infty$ é tal que para todo $n\in \N_0\uniao \{-\infty\}$, $-\infty +n = -\infty$.
\begin{defn}\label{fun:ordem}\index{Função!  ordem} Uma função $\rho : R \emm \N_0\uniao \{-\infty\}$, que satisfaz as propriedades abaixo é chamada de {\bf função  ordem}.
	
	\begin{enumerate}
		\item $\rho(f)=-\infty$ se, e somente se, $f=0$;
		\item $\rho(\lambda f) = \rho(f)$ para todo $\lambda \in \F \setminus \{0\}$;
		\item $\rho(f+g)\menori max\{\rho(f),\rho(g)\}$ e a igualdade é válida quando $\rho(f) \neq \rho(g)$;
		\item Se $\rho(f) < \rho(g)$ e $h\neq 0$, então $\rho(fh)<\rho(gh)$;
		\item Se $\rho(f)=\rho(g)\neq 0$, então existe $\lambda \in \F \setminus \{0\}$ tal que $\rho(f-\lambda g)<\rho(g)$.
		
		Se além dessas propriedades $\rho$ também satisfizer a próxima a chamaremos de função peso\index{Função! peso}.
		\item $\rho(fg) = \rho(f) + \rho(g).$\label{fun:peso}
	\end{enumerate}
\end{defn}

\begin{exemp}
	Um primeiro exemplo de uma função peso é a função grau de polinômios no anel de polinômios em uma variável $\F[x]$.	
\end{exemp}

\begin{defn}
	Uma função grau em\index{Função! grau} $R$ é uma função que satisfaz as propriedades 1, 2, 3, 4 e 6 da definição \ref{fun:ordem}.
\end{defn}

Vejamos agora um resultado que nos traz propriedades para as funções  ordem.
\begin{lem}
	Seja $\rho$ uma função ordem em $R$, então temos:
	\begin{enumerate}
		\item Se $\rho(f)=\rho(g),$ então $\rho(fh)=\rho(gh)$ para todo $h\in R$;
		\item Se $f\in R\setminus \{0\},$ então $\rho(1) \menori \rho(f)$;
		\item $\F = \{f \in R ; \rho(f) \menori \rho(1)\}$;
		\item Se $\rho(f) = \rho(g)$, então existe um único escalar não nulo $\lambda \in \F$ tal que $\rho(f - \lambda g )< \rho(g)$.
	\end{enumerate}
\end{lem}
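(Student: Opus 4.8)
The plan is to first isolate a structural fact on which all four items lean: that $R$ is an integral domain. If $fh=0$ with $f,h\neq 0$, then $\rho(0)=-\infty<\rho(f)$, and since $h\neq 0$ axiom (4) of the definition gives $\rho(0\cdot h)<\rho(fh)$; but $0\cdot h=fh=0$, so this reads $-\infty<-\infty$, absurd. Hence a product of nonzero elements is nonzero, and every nonzero $g$ satisfies $\rho(g)\maiori 0$ (its value lies in $\N_0$, not $-\infty$). I would then prove the items in the order (1), (2), (4), (3), since (3) consumes both (2) and (4).

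For (1) the cases $h=0$ and $f=g=0$ are immediate (both sides equal $-\infty$), so I assume $f,g,h\neq 0$. Applying axiom (5) to the nonzero elements $f,g$, which share the finite value $\rho(f)=\rho(g)$, produces $\lambda\neq 0$ with $\rho(f-\lambda g)<\rho(g)$. Multiplying by $h$ through axiom (4) gives $\rho(fh-\lambda gh)<\rho(gh)$ (this is valid even if $f-\lambda g=0$, since $gh\neq 0$). As $\rho(\lambda gh)=\rho(gh)$ by (2) and $fh=(fh-\lambda gh)+\lambda gh$, the equality clause of (3) forces $\rho(fh)=\rho(gh)$.

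Item (2) is the step I expect to be the real obstacle: it is the only genuinely non-formal argument, and it is exactly where the domain property earns its keep. Suppose, for contradiction, that $\rho(f)<\rho(1)$ for some $f\neq 0$. Since $R$ is a domain, $f^{n}\neq 0$ for every $n$, so multiplying the inequality $\rho(f)<\rho(1)$ by $h=f^{n}$ in axiom (4) yields $\rho(f^{n+1})<\rho(f^{n})$. This produces an infinite strictly decreasing chain $\rho(1)>\rho(f)>\rho(f^{2})>\cdots$ inside $\N_0$, which is impossible; hence $\rho(1)\menori\rho(f)$.

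For (4), existence of $\lambda$ is precisely axiom (5); for uniqueness I would suppose $\lambda,\mu\neq 0$ both work, and observe that $(f-\lambda g)-(f-\mu g)=(\mu-\lambda)g$ has order strictly below $\rho(g)$ by (3), whereas $\mu\neq\lambda$ would force $\rho((\mu-\lambda)g)=\rho(g)$ by (2), a contradiction; thus $\lambda=\mu$. Finally (3) drops out by combining (2) and (4): the inclusion $\F\contido\{f:\rho(f)\menori\rho(1)\}$ holds because $\rho(\lambda)=\rho(1)$ for $\lambda\neq 0$ (axiom (2)) and $\rho(0)=-\infty$; conversely, if $f\neq 0$ with $\rho(f)\menori\rho(1)$, then (2) upgrades this to $\rho(f)=\rho(1)$, so (4) supplies a unique $\lambda\neq 0$ with $\rho(f-\lambda)<\rho(1)$, and since (2) forbids any nonzero element of order below $\rho(1)$ we must have $f-\lambda=0$, i.e. $f=\lambda\in\F$.
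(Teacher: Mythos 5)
Your proof is correct, and item by item it runs along the same lines as the paper's: for (1), axiom (5) plus multiplication by $h$ and the equality clause of axiom (3); for (4), existence from axiom (5) and uniqueness via $\rho\bigl((\lambda-\nu)g\bigr)<\rho(g)$; for (3), the combination of (2) and (4) with $g=1$. The one genuine divergence is your preliminary: you first prove, directly from axiom (4) applied to the pair $0,f$, that $R$ is an integral domain, and you use this to guarantee $f^n\neq 0$ in the descending chain for item (2). The paper leaves exactly this point implicit --- it simply asserts the chain $\rho(1)>\rho(f)>\rho(f^2)>\cdots$ --- and establishes the domain property only afterwards (Proposi\c{c}\~ao \ref{pro:dominio}), using the present lemma; your reordering therefore removes any appearance of circularity, and your direct domain argument is slicker than the paper's later one. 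It is worth noting, though, that the detour is avoidable: if each step of the chain is obtained by multiplying the previous inequality $\rho(f^{n+1})<\rho(f^n)$ by $h=f\neq 0$ (rather than multiplying $\rho(f)<\rho(1)$ by $h=f^n$), axiom (4) applies without knowing that $f^n\neq 0$, and the strictly decreasing sequence then lives in the well-ordered set $\N_0\cup\{-\infty\}$, yielding the contradiction with no appeal to the domain property at all. So what your route buys is rigor and a self-contained domain lemma usable elsewhere; what the leaner route buys is independence of item (2) from any multiplicative structure beyond axiom (4).
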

\begin{proof}
	A demonstração desse lema sairá diretamente da definição de funções ordem.
	
	(1) Seja $\rho(f)=\rho(g)$, temos que existe $\lambda \in \F$ tal que $\rho(f-\lambda g) < \rho(g)$, logo $\rho(fh - \lambda gh) < \rho(gh)$. Podemos escrever $fh=(fh-\lambda gh)+ \lambda gh$, assim $\rho(fh)=\rho(\lambda gh)=\rho(gh).$
	
	(2) Suponha por absurdo que $f\in R$ é um elemento não nulo tal que $\rho(f) < \rho(1)$, então a cadeia $\rho(1)>\rho(f)>\rho(f^2)>\cdots$ é estritamente decrescente, absurdo, pois $\N_0\uniao \{-\infty\}$ é bem ordenado.
	
	(3) Claramente $\F \contido H=\{f\in R ; \rho(f)\menori \rho(1)\}$, agora seja $f\neq 0$ tal que $\rho(f)\menori \rho(1)$ então $\rho(f)=\rho(1)$, assim existe  $\lambda$ tal que $\rho(f-\lambda)<\rho(1)$, logo $f -\lambda = 0$ ou seja $f\in \F$.
	
	(4) Pela definição de função ordem, temos a existência do $\lambda$, falta assim mostrar a unicidade. Suponha que $\lambda, \nu \in \F$ são tais que $\rho(f-\lambda g)< \rho(g)$ e $\rho(f - \nu g)<\rho(g)$. Então temos que $\rho(f-\lambda g -(f- \nu g)) \menori max\{\rho(f-\lambda g) , \rho(f -\nu g)\} < \rho(g)$. Assim temos que $\rho((\lambda - \nu)g) < \rho(g)$ o que implica que $\lambda - \nu = 0$. Assim $\lambda = \nu$.
\end{proof}

Uma primeira conseqüência sobre a estrutura de um anel $R$ com uma função ordem é dada na seguinte proposição:
\begin{prop}\label{pro:dominio}
	Se existe uma função ordem, $\rho$, em $R$, então $R$ é um domínio de integridade.
\end{prop}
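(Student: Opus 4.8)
O plano \'e observar que $R$ j\'a \'e, por hip\'otese, um anel comutativo com unidade que cont\'em o corpo $\F$ como subanel, de modo que $1\neq 0$ em $R$; resta portanto apenas verificar que $R$ n\~{a}o possui divisores de zero n\~{a}o triviais, isto \'e, que $fg=0$ obriga $f=0$ ou $g=0$. A ideia central \'e aplicar a propriedade 4 da defini\c{c}\~{a}o \ref{fun:ordem} (a monotonicidade estrita do produto) comparando o elemento $0$ com um elemento n\~{a}o nulo, de modo a for\c{c}ar o produto a n\~{a}o se anular.

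Primeiro eu argumentaria por absurdo: suponho $f,g\in R$ com $fg=0$ e, para obter contradi\c{c}\~{a}o, que $f\neq 0$ e $g\neq 0$. Como $g\neq 0$, a propriedade 1 d\'a $\rho(g)\in\N_0$, ao passo que $\rho(0)=-\infty$; logo $\rho(0)<\rho(g)$. Agora aplico a propriedade 4 tomando para o fator $h$ justamente $f$, que \'e n\~{a}o nulo por hip\'otese: de $\rho(0)<\rho(g)$ segue $\rho(0\cdot f)<\rho(g\cdot f)$.

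O passo decisivo \'e ent\~{a}o apenas reescrever os dois membros: $0\cdot f=0$ e $g\cdot f=fg=0$, de sorte que a desigualdade acima vira $\rho(0)<\rho(0)$, ou seja $-\infty<-\infty$, um absurdo. Conclui-se que necessariamente $f=0$ ou $g=0$, e portanto $R$ \'e um dom\'inio de integridade.

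Vale observar que a \'unica parte n\~{a}o imediata \'e perceber que a propriedade 4 pode ser usada com o pr\'oprio $0$ no papel do elemento de menor valor, e que nem a propriedade de peso (item 6) \'e necess\'aria. N\~{a}o antevejo obst\'aculo t\'ecnico real: n\~{a}o h\'a c\'alculo algum, e toda a dificuldade reduz-se a identificar qual das propriedades fornece a monotonicidade estrita que impede o produto de dois fatores n\~{a}o nulos de ser zero.
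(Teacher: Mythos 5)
Sua demonstra\c{c}\~{a}o est\'a correta, por\'em segue uma rota genuinamente diferente da do texto. A prova do trabalho sup\~{o}e $f,g\in R\setminus\{0\}$ com $fg=0$, assume sem perda de generalidade $\rho(f)\menori\rho(g)$, multiplica por $f$ para obter $\rho(f^2)\menori\rho(fg)=\rho(0)=-\infty$, donde $f^2=0$, e conclui o absurdo com o item 2 do lema que a precede ($\rho(1)\menori\rho(f)$ para $f\neq 0$, cuja prova usa a boa ordena\c{c}\~{a}o de $\N_0$), via a cadeia $\rho(1)\menori\rho(f)\menori\rho(f^2)=-\infty$. Voc\^e, em vez disso, aplica a propriedade 4 da defini\c{c}\~{a}o \ref{fun:ordem} uma \'unica vez, com o pr\'oprio $0$ no papel do elemento de menor valor e $h=f$: de $\rho(0)=-\infty<\rho(g)$ segue $-\infty=\rho(0\cdot f)<\rho(gf)$, logo $gf\neq 0$, contradizendo $fg=0$ (a rigor nem precisaria do formato por absurdo, pois o argumento mostra diretamente que o produto de dois elementos n\~{a}o nulos \'e n\~{a}o nulo). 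Sua leitura da propriedade 4 com primeiro argumento nulo \'e leg\'{\i}tima: a defini\c{c}\~{a}o s\'o imp\~{o}e $h\neq 0$, sem nenhuma restri\c{c}\~{a}o sobre os elementos comparados, e esse \'e, ali\'as, exatamente o argumento de van Lint--Pellikaan--H\o{}hold em \cite{vanlint2}. Comparando o que cada caminho rende: o seu \'e mais curto e autocontido, usando apenas as propriedades 1 e 4 e dispensando o lema auxiliar (e, com ele, o argumento de boa ordena\c{c}\~{a}o embutido na sua prova); o do texto tem a particularidade de s\'o comparar valores de $\rho$ em elementos n\~{a}o nulos no lado da hip\'otese da propriedade 4, ao custo do desvio por $f^2=0$ e da depend\^encia do lema anterior.
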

\begin{proof}
	Suponha que existam $f, \,g \in R\setminus \{0\}$ tais que $fg=0$, sem perda de generalidade assumiremos que $\rho(f)\menori \rho(g)$, assim $\rho(f^2)\menori \rho(fg)=\rho(0)=-\infty$ logo $\rho(f^2)=-\infty$, isto é, $f^2=0$. Como $f\neq 0 $, temos que $\rho(1)\menori \rho(f)\menori \rho(f^2)$, absurdo. Logo $fg\neq 0$ assim $R$ é um domínio.
\end{proof}
Agora apresentaremos um exemplo com o qual mostraremos que a recíproca da proposição \ref{pro:dominio} é falsa.

\begin{exemp}
	A $\F-$álgebra $R=\F[x,y]/\langle xy-1\rangle$ é um domínio, mas não tem uma função ordem. De fato, denotando por $\barra{x}$ a classe de equivalência  $x+ \langle xy-1\rangle$ e por $\barra{y}$ a classe $y+ \langle xy-1\rangle$. Como $R$ é um domínio temos que $\barra{x}\neq 0 $ e $\barra{y}\neq 0$. Sendo $\rho$ uma função ordem em $R$ temos que $\rho(1)\menori \rho(\barra{x})$ e assim $\rho(\barra y)\menori \rho(\barra {xy})=\rho(1)$, ou seja, $\rho(\barra{y}) = \rho(1)$, analogamente achamos que $\rho(\barra{x})=\rho(1)$. Observe que $R=\F[\barra{x}]+\F[\barra{y}]$ e assim para todo $f \in R$ concluimos que $\rho(f)\menori \rho(1)$, ou seja, $R = \F$, no entanto $\barra{x} \notin \F$, absurdo.
\end{exemp}

A seguir mostraremos que dada uma $\F-$álgebra $R$ com uma função ordem, essa admite uma $\F-$base com ``boas'' propriedades.

\begin{thm}\label{prop:f-algebra}
	Seja $R$ uma $\F-$álgebra com uma função ordem $\rho$, $\F\neq R$. Então:
	\begin{enumerate}
		\item Existe uma $\F-$base, $\{f_i, i \in \N\}$, para $R$ tal que $\rho(f_i)<\rho(f_{i+1})$ para todo $i\in \N$;
		\item Se $f=\sum_{i=1}^m \lambda_i f_i$ com $\lambda_i\in \F$ e $\lambda_m \neq 0$, temos que $\rho(f) = \rho(f_m)$;
		\item Seja $l(i,j):=l$ o inteiro tal que $\rho(f_if_j) = \rho(f_l)$. Assim, $l(i,j)<l(i+1,j)$ para todo $i,j$;
		\item Seja $\rho_i := \rho(f_i)$. Se $\rho$ é uma função peso então $\rho_{l(i,j)} = \rho_i + \rho_j$.
	\end{enumerate}
\end{thm}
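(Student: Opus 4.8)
The plan is to construct the basis from the values taken by $\rho$ and then read each of the four claims off the order-function axioms of Definition~\ref{fun:ordem}.

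First I would show the value set $\Gamma=\rho(R\setminus\{0\})\subseteq\N_0$ is infinite. Since $\F\neq R$, choose $x\in R\setminus\F$; the preceding lemma gives $\rho(1)<\rho(x)$, and applying axiom~(4) repeatedly with $h=x$ yields $\rho(1)<\rho(x)<\rho(x^2)<\cdots$, so $\Gamma$ is infinite. Enumerating $\Gamma$ increasingly as $\gamma_1<\gamma_2<\cdots$ and picking any $f_i$ with $\rho(f_i)=\gamma_i$ gives $\rho(f_i)<\rho(f_{i+1})$, as required in (1).

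Claims (1) and (2) then reduce to a leading-term argument. By axiom~(3), a sum of finitely many elements with pairwise distinct $\rho$-values has $\rho$ equal to the maximum of those values (induction on the number of terms); applied to the nonzero terms of $f=\sum_{i=1}^m\lambda_if_i$, whose values $\gamma_i$ are distinct and largest at $i=m$ because $\lambda_m\neq0$, this gives $\rho(f)=\rho(f_m)$, which is (2). In particular no such $f$ vanishes, so the $f_i$ are linearly independent. For spanning I would reduce: given $f\neq0$ with $\rho(f)=\gamma_m$, axiom~(5) supplies $\lambda\neq0$ with $\rho(f-\lambda f_m)<\gamma_m$, and iterating strictly decreases the $\rho$-value inside the well-ordered $\N_0$, so the process terminates and writes $f$ as a finite $\F$-combination of the $f_i$. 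This completes (1).

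Claims (3) and (4) are immediate. Since $R$ is a domain (Proposition~\ref{pro:dominio}), $f_if_j\neq0$ and $l(i,j)$ is well defined; from $\rho(f_i)<\rho(f_{i+1})$ and axiom~(4) with $h=f_j\neq0$ we get $\rho(f_if_j)<\rho(f_{i+1}f_j)$, i.e. $\gamma_{l(i,j)}<\gamma_{l(i+1,j)}$, whence $l(i,j)<l(i+1,j)$ by monotonicity of the enumeration. For (4), by definition $\rho_{l(i,j)}=\rho(f_if_j)$, and when $\rho$ is a weight function axiom~(6) gives $\rho(f_if_j)=\rho(f_i)+\rho(f_j)=\rho_i+\rho_j$. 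The only genuinely delicate step is the spanning argument in (1), where termination of the reduction must be justified; the rest is a direct reading of the axioms.
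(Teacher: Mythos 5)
Your proposal is correct and follows essentially the same route as the paper's proof: enumerate the value set $\rho(R\setminus\{0\})$ increasingly, choose representatives $f_i$, get linear independence and item (2) from the maximum property of axiom (3), prove spanning by the same axiom-(5) reduction with the same termination argument (only finitely many $\rho$-values below a given one, i.e.\ well-ordering of $\N_0$), and read items (3) and (4) directly off axioms (4) and (6). Your explicit remarks that $R$ is a domain, so that $f_if_j\neq 0$ and $l(i,j)$ is well defined, only make precise a point the paper leaves implicit.
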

\begin{proof}
	(1) Temos que existe $f\in R$ com $f\notin \F$, pois $R\neq \F$, assim $\rho(1)<\rho(f)$ o que nos dá que $\rho(f^n)< \rho(f^{n+1})$ para todo $n\in \N_0$. Mais ainda, o conjunto dos valores de $\rho$ é infinito. Seja $(\rho_i)_{i\in\N}$ a seqüência crescente de inteiros não negativos tais que os $\rho_i's$  são todos os valores da função ordem, isto é, $\rho(R\setminus \{0\})=\{\rho_i;\, i\in \N\} $. Por definição, para todo $i\in \N$ existe um $f_i \in R$ tal que $\rho(f_i)=\rho_i$ assim $\rho(f_i)<\rho(f_{i+1})$. E mais, pela nossa construção, para todo $f\in R\setminus \{0\} $ existe um $f_i$ tal que $\rho(f)=\rho(f_i)$. Observamos que $\rho_1=\rho(1)$. Agora falta mostrar que $B=\{f_i ; i \in \N\}$ é uma base.  Claramente $B$ é um conjunto linearmente independente, mostremos que ele gera $R$. 
	
	Seja $f\in R$ temos que existe um $f_k\in B$ tal que $\rho(f_k)=\rho(f)$ o que nos dá que existe $\lambda_k \in \F$ de modo que, $\rho(f-\lambda_k f_k)<\rho(f_k)$, novamente temos que existe $f_h$ com $h<k$ tal que $\rho(f-\lambda_k f_k)=\rho(f_h)$ e conseqüentemente existe $\lambda_h$ tal que $\rho(f-\lambda_k f_k -\lambda_h f_h)<\rho(f_h)$,  esse processo deve acabar em no máximo $k$ vezes pois temos apenas $k-1$ $\rho_i's$  menores que $\rho_k$.  Assim chegaremos em $\rho(f-\sum_{i=0}^{k-1} \lambda_{k-i} f_{k-1}) < \rho(1)$,  observe que alguns dos $\lambda_i's$ que aparecem aqui podem ser nulos, $\rho(f-\sum_{i=0}^{k-1} \lambda_{k-i} f_{k-1}) = -\infty $ o que nos dá que $f=\sum_{i=0}^{k-1} \lambda_{k-i} f_{k-1}$, assim $B$ é uma $\F-$base de $R$.
	
	(2) Pela existência da base acima e o fato de que $\rho(f+g) = max\{\rho(f),\rho(g)\}$ se $\rho(f)\neq \rho(g)$, tem-se (2).
	
	(3) Como $\rho(f_i)<\rho(f_{i+1})$ temos que $\rho(f_i f_j) < \rho(f_{i+1}f_j)$ logo $l(i,j)<l(i+1,j).$
	
	(4) Sendo $\rho$ uma função peso temos que $\rho(fg)=\rho(f)+\rho(g)$. Assim $\rho_{l(i,j)} = \rho_i + \rho_j$.
\end{proof}

\section{Códigos de Avaliação}

Nessa seção introduziremos o conceito de código de avaliação. Trabalharemos aqui com um corpo finito com $q$ elementos, $\F_q$. Além disso, aqui $R$  é uma $\F_q-$álgebra com função ordem $\rho$ tal que admite uma base, $\{f_i;i\in \N\}$, de modo que $\rho(f_i)<\rho(f_{i+1})$ para todo $i\in \N$.

Queremos  transformar o espaço vetorial $\F_q^n$ em uma álgebra e para isso precisamos definir uma multiplicação de vetores, assim definiremos a multiplicação $*$ como sendo a multiplicação usual de coordenadas, ou seja, dados $a=(a_1,\cdots,a_n),b=(b_1,\cdots,b_n) \in \F_q^n$ temos que $a*b=(a_1b_1,\cdots,a_nb_n)$.
\begin{defn}
	Definimos por $L_l$ o espaço vetorial gerado por $f_1,\cdots,f_l$.
\end{defn}
\begin{defn}
	Chamaremos de {\bf morfismo de $\F_q-$álgebras},\index{Morfismo! de $\F_q-$álgebras} a uma função $\F_q-$li\-near, \mbox{$\varphi:{R}\emm{\F_q^n}$}, tal que $\varphi(fg)=\varphi(f)*\varphi(g)$.
\end{defn}
Agora definiremos um código de avaliação e o seu código dual.

\begin{defn}
	Seja $L_l$ o espaço vetorial como acima definido e $\varphi$ um morfismo entre $\F_q-$álgebras, assim definimos o {\bf código de avaliação\index{Código! de Avaliação} $E_l$} determinado por $\varphi$, como sendo a imagem de $L_l$ por meio da função $\varphi$, ou seja,
	$$E_l=\varphi(L_l)=\gera{\varphi(f_1),\cdots,\varphi(f_l)}.$$
	Denotaremos por $C_l$ o código dual de $E_l$.
\end{defn}
Observe que a seqüência de códigos $(E_l)_{l\in\N}$ é crescente segundo a inclusão, e mais, como o espaço de chegada da função $\varphi$ tem dimensão finita temos que essa seqüência estabiliza num certo $N$.

Trabalharemos aqui somente com os morfismos sobrejetivos. 

\begin{exemp}
Sejam $R=\F_q[x_1,\ldots,x_n]/I$, onde $I$ é um ideal do anel de polinômios $\F_q[x_1,\ldots,x_n]$, $\Pponto=\{P_1,\ldots,P_m\}$ um subconjunto com $m$ pontos de $V(I)$, consideremos a função de avaliação:
	$$\funcao{ev_\Pponto}{R}{\F_q^m}{f+I}{(f(P_1),\ldots,f(P_m))}.$$

Observe que $ev_\Pponto$ está bem definida, pois para $P\in V(I)$ temos que se $f+I=g+I$ então $f-g\in I$, logo, $(f-g)(P)=0$, ou seja, $f(P)=g(P)$, e mais, desse modo definido $ev_\Pponto$ é um morfismo de $\F_q$-álgebras, visto que $fg(P)=f(P)g(P)$, para todos $f,g \in R$ e $P\in \F_q^n$.

\end{exemp}

\begin{lem}\label{lema:sobrej}
	A função de avaliação acima definida é sobrejetiva.
\end{lem}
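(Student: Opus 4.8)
O plano \'e provar a sobrejetividade exibindo, para cada $i$, um elemento de $R$ cuja imagem por $ev_\Pponto$ seja o $i$-\'esimo vetor da base can\^onica de $\F_q^m$; uma vez obtidos tais vetores, suas combina\c{c}\~oes lineares atingem qualquer ponto de $\F_q^m$. A ferramenta central ser\'a uma interpola\c{c}\~ao do tipo Lagrange em v\'arias vari\'aveis, apoiada no fato de os pontos serem dois a dois distintos.

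Primeiro eu usaria que $\Pponto=\{P_1,\ldots,P_m\}$ consiste de $m$ pontos \emph{distintos}: para cada par $i\neq j$ os pontos $P_i$ e $P_j$ diferem em ao menos uma coordenada, digamos a de \'indice $k=k(i,j)$, com $(P_i)_k\neq (P_j)_k$. Isso permite definir o polin\^omio de grau $1$
$$h_{ij}=\frac{x_k-(P_j)_k}{(P_i)_k-(P_j)_k}\in \F_q[x_1,\ldots,x_n],$$
que satisfaz $h_{ij}(P_i)=1$ e $h_{ij}(P_j)=0$. Em seguida eu formaria o produto $g_i=\prod_{j\neq i}h_{ij}$ e verificaria que $g_i(P_i)=1$, pois cada fator vale $1$ em $P_i$, e que $g_i(P_j)=0$ para todo $j\neq i$, pois o fator $h_{ij}$ se anula em $P_j$. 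Logo $ev_\Pponto(g_i+I)$ coincide com o $i$-\'esimo vetor da base can\^onica de $\F_q^m$.

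Para concluir, dado $(c_1,\ldots,c_m)\in \F_q^m$ qualquer, eu tomaria $f=\sum_{i=1}^m c_i g_i$ e observaria que $ev_\Pponto(f+I)=(c_1,\ldots,c_m)$, estabelecendo a sobrejetividade (alternativamente, a imagem \'e um $\F_q$-subespa\c{c}o que cont\'em uma base, logo \'e todo $\F_q^m$). N\~ao antevejo aqui nenhum obst\'aculo profundo: o \'unico cuidado ser\'a notar que os $g_i$ pertencem a $\F_q[x_1,\ldots,x_n]$ e que, pelo exemplo anterior, $ev_\Pponto$ est\'a bem definida nas classes $g_i+I\in R$, de sorte que a constru\c{c}\~ao produz de fato elementos do dom\'inio $R$. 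A hip\'otese de os pontos serem distintos \'e exatamente o que torna a interpola\c{c}\~ao poss\'ivel, e \'e o ponto em que eu concentraria a aten\c{c}\~ao.
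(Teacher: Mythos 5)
Sua proposta est\'a correta e segue essencialmente o mesmo caminho da disserta\c{c}\~ao: em ambos os casos constroem-se, por interpola\c{c}\~ao de Lagrange em v\'arias vari\'aveis, produtos de fatores lineares que se anulam em todos os pontos exceto um (usando a distin\c{c}\~ao dois a dois dos pontos) e que, ap\'os normaliza\c{c}\~ao, s\~ao levados por $ev_{\mathcal{P}}$ na base can\^onica de $\F_q^m$, donde a sobrejetividade por linearidade. A \'unica diferen\c{c}a \'e cosm\'etica: o texto define $G_i=\prod_{l}\prod_{a\in A_{il}}(x_l-a)$ percorrendo, em cada coordenada $l$, todos os valores $a_{jl}$ distintos de $a_{il}$, enquanto voc\^e escolhe uma \'unica coordenada separadora $k(i,j)$ para cada par $i\neq j$ e j\'a normaliza fator a fator.
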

\begin{proof}
	Sejam $\,P_j\,=\,(a_{j1},\,\ldots,\,a_{jm})$ e $\,$ os $\,$ conjuntos $\,$ auxiliares $A_{il}=\{a_{jl};\,j=1,\ldots,m\}\setminus \{a_{il}\}$. Definimos os polinômios 
	$$G_i=\prod_{l=1}^{n}\prod_{a\in A_{il}}(x_l-a).$$
	Temos que $G_i(P_j)=0$ sempre que $i\neq j$, e mais $G_i(P_i)\neq 0$. Os polinômios $G_i/G_i(P_i)$ quando aplicados em $ev_\Pponto$ chegam na base canônica de $\F_q^n$ assim $ev_\Pponto$ é sobrejetiva.
\end{proof}

\section{A Cota Ordem}

Um dos principais parâmetros de uma código é a sua distância mínima, contudo,na maioria das vezes é difícil de ser calculada, assim  torna-se importante as cotas para essa distância mínima. Nesta seção estamos interessados em encontrar uma cota inferior para a distância mínima do código $C_l$.

Continuaremos a usar as notações da seção anterior, lembrando que $N$ é o menor natural tal que a seqüência de códigos de avaliação $E_l$ estabiliza. Os morfismos utilizados serão sobrejetivos.

Definimos aqui uma  $N\x n$ matriz $H$ com a $i-$ésima linha sendo $h_i=\varphi(f_i)$ onde $f_i$ são os elementos da base construída anteriormente e $\varphi$ é o morfismo de $\F_q-$álgebras utilizado aqui.

\begin{defn}
	Seja $y\in \F_q^n$. Consideremos as sindromes $s_i(y)=\gera{y,h_i}$ e $s_{ij}(y) = \gera{y,(h_i*h_j)}$. Então a matriz $S(y) = (s_{ij}(y) ;\,\,\, 1\menori i,j\menori N)$ é a matriz síndrome\index{Matriz! Síndrome} de $y$.
\end{defn}

\begin{lem}\label{lema:diagonal}
	Seja $y\in \F_q^n$ e $D(y)$ a matriz diagonal com as coordenadas de $y$ em sua diagonal, então
	$$S(y)=HD(y)H^t,$$ 
	e mais
	$$posto(S(y))=w(y).$$
\end{lem}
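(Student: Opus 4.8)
The plan is to handle the two assertions separately: first the matrix identity $S(y)=HD(y)H^t$, which is a direct entry-wise computation, and then the rank formula $posto(S(y))=w(y)$, which will follow from that identity together with the surjectivity of $\varphi$.

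First I would expand everything coordinate-wise. Writing $h_i=(h_{i1},\ldots,h_{in})$ for the rows of $H$, the definition of the coordinate-wise product gives $h_i*h_j=(h_{i1}h_{j1},\ldots,h_{in}h_{jn})$, so that
$$s_{ij}(y)=\gera{y,h_i*h_j}=\sum_{k=1}^n y_k\,h_{ik}\,h_{jk}.$$
On the other hand, since $D(y)$ is diagonal with $(D(y))_{kk}=y_k$, the $(i,j)$ entry of $HD(y)H^t$ is $\sum_{k,l}h_{ik}(D(y))_{kl}h_{jl}=\sum_{k=1}^n h_{ik}\,y_k\,h_{jk}$. The two expressions coincide for every pair $(i,j)$, which establishes $S(y)=HD(y)H^t$.

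For the rank statement I would first observe that $posto(D(y))$ equals the number of nonzero diagonal entries of $D(y)$, which is exactly the number of nonzero coordinates of $y$, i.e. $w(y)$. The decisive extra ingredient is that $H$ has full column rank $n$: because $\varphi$ is surjective, the image $\F_q^n$ is generated by $\varphi(f_1),\ldots,\varphi(f_N)$ (the sequence $E_l$ having stabilized at $N$), that is, by the rows of $H$; hence the row space of $H$ is all of $\F_q^n$ and $posto(H)=n$. Since $H$ is $N\times n$ of rank $n$, its columns are linearly independent (full column rank), and equivalently $H^t$, an $n\times N$ matrix, has full row rank $n$.

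Finally I would invoke the standard fact that left-multiplication by a matrix of full column rank, and right-multiplication by a matrix of full row rank, both preserve the rank of a matrix. Applied to $S(y)=H\,D(y)\,H^t$, multiplying $D(y)H^t$ on the left by $H$ leaves the rank unchanged, and multiplying $D(y)$ on the right by $H^t$ leaves the rank unchanged, so
$$posto(S(y))=posto(D(y)H^t)=posto(D(y))=w(y).$$
The only genuine obstacle is the justification that $posto(H)=n$ from the surjectivity of $\varphi$, together with a clean statement and application of the rank-preservation property; the rest is routine bookkeeping.
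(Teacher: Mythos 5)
Your proof is correct and takes essentially the same route as the paper: the same entry-wise computation establishing $S(y)=HD(y)H^t$, the observation that $posto(D(y))=w(y)$, and the same use of the surjectivity of $\varphi$ to conclude that $H$ has rank $n$, whence $posto(S(y))=posto(D(y))$. The only difference is that you spell out the rank-preservation facts (left multiplication by a full-column-rank matrix, right multiplication by a full-row-rank matrix) which the paper leaves implicit.
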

\begin{proof}
	Temos que $s_{ij}(y)= \gera{y,(h_i*h_j)} = \sum_l y_lh_{il}h_{jl}$, e mais, denotando $C=S(y)$, temos que $C_{ij}= H_{ik}y_kH^t_{kj}=H_{ik}y_kH_{jk}$, logo a igualdade é válida. 
	
	Agora o posto de $D(y)$ é justamente o peso de $y$. Como a função $\varphi$ é sobrejetiva temos que a matriz $H$ tem posto máximo, ou seja, posto de $H$ é $n$. Assim temos que o posto de $S(y)$ é o mesmo posto de $D(y)$ como queriamos.
\end{proof}

\begin{defn}
	Seja $l\in \N_0$, definimos o conjunto $N_l=\{ (i,j)\in \N^2 ;\,\, l(i,j)=l+1  \}$, onde $l(i,j)$ é o número natural definido no item 3 do teorema \ref{prop:f-algebra}. A sua cardinalidade denotaremos por $\nu_l$.
\end{defn}

\begin{lem}\label{lema:ordem}
	Se $t=\nu_l$ e $(i_1,j_1),\ldots,(i_t,j_t)$ é a enumeração dos elementos de $N_l$ em ordem crescente segundo a ordem lexicográfica\footnote{Dados $(a,b),\,(c,d) \in \N^2$, $(a,b)<(c,d)$ se $a<c$ ou $a=c$ e $b<d$.} em $\N^2$ . Então $i_1<i_2<\cdots <i_t$ e $j_t<j_{t-1}<\cdots<j_1$. Além disso, se $y\in C_l \setminus C_{l+1}$ temos que $s_{i_uj_v}(y)=0$ se $u<v$ e $s_{i_uj_v}(y)\neq 0$ se $u=v$.
\end{lem}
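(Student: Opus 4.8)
The statement splits into two assertions: the combinatorial ordering of the index pairs, and the pattern of vanishing of the syndromes. The plan is to settle the ordering first, using only that $\rho$ is a weight function, and then to exploit the algebra-morphism structure to read off the syndromes.

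First I would record that, $\rho$ being a weight function, item~4 of Theorem~\ref{prop:f-algebra} gives $\rho(f_if_j)=\rho_i+\rho_j$, so $(i,j)\in N_l$ means exactly $\rho_i+\rho_j=\rho_{l+1}$. From this no two pairs in $N_l$ can share a first coordinate: if $(i,j)$ and $(i,j')$ both lay in $N_l$, then $\rho_j=\rho_{l+1}-\rho_i=\rho_{j'}$, and since $\rho_1<\rho_2<\cdots$ is strictly increasing this forces $j=j'$. Hence the lexicographic enumeration is governed by the first coordinate alone, giving $i_1<i_2<\cdots<i_t$. Feeding this back into the relation $\rho_{i_u}+\rho_{j_u}=\rho_{l+1}$, which is constant in $u$, and using $\rho_{i_1}<\cdots<\rho_{i_t}$, I obtain $\rho_{j_1}>\cdots>\rho_{j_t}$, whence $j_t<\cdots<j_1$ by strict monotonicity again.

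For the syndrome pattern the key step is to expand $h_i*h_j$ in the rows $h_k$. Since $\varphi$ is a morphism of $\F_q$-algebras, $h_i*h_j=\varphi(f_i)*\varphi(f_j)=\varphi(f_if_j)$; writing $f_if_j=\sum_{k\menori l(i,j)}\lambda_kf_k$ with $\lambda_{l(i,j)}\neq0$, where the leading index is forced to be $l(i,j)$ by item~2 of Theorem~\ref{prop:f-algebra}, linearity gives
$$s_{ij}(y)=\gera{y,h_i*h_j}=\sum_{k\menori l(i,j)}\lambda_k\,s_k(y),\qquad \lambda_{l(i,j)}\neq0.$$
The hypothesis $y\in C_l\setminus C_{l+1}$, with $C_l=E_l^{\bot}$ and $E_l=\gera{h_1,\ldots,h_l}$, translates into $s_k(y)=0$ for all $k\menori l$ together with $s_{l+1}(y)\neq0$. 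For $u<v$, the decreasing second coordinates give $j_v<j_u$, so $\rho_{i_u}+\rho_{j_v}<\rho_{l+1}$ and hence $l(i_u,j_v)\menori l$; every term of the sum then carries an $s_k(y)$ with $k\menori l$, all zero, so $s_{i_uj_v}(y)=0$. For $u=v$ we have $(i_u,j_u)\in N_l$, i.e.\ $l(i_u,j_u)=l+1$, and the sum collapses to $s_{i_uj_u}(y)=\lambda_{l+1}s_{l+1}(y)$ after discarding the vanishing lower terms; as $\lambda_{l+1}\neq0$ and $s_{l+1}(y)\neq0$, this is nonzero.

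The main obstacle I anticipate is the bookkeeping in the expansion of $h_i*h_j$: one must ensure the leading index is exactly $l(i,j)$, so that at $u=v$ the single surviving contribution is precisely $\lambda_{l+1}s_{l+1}(y)$, and that all indices appearing are $\menori l$ in the case $u<v$. Both points rest on the strict monotonicity $\rho_{j_t}<\cdots<\rho_{j_1}$ proved in the first part, so the ordering result is really the engine driving the syndrome computation.
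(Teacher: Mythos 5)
Your treatment of the syndromes is sound, but the first half of your argument has a genuine gap: you derive the ordering of the pairs from the identity $\rho(f_if_j)=\rho_i+\rho_j$ (item~4 of Theorem~\ref{prop:f-algebra}), which is available \emph{only} when $\rho$ is a weight function. The lemma, however, is stated in the setting of the section on the order bound, where $R$ carries merely an order function $\rho$ --- indeed that is the whole point of the bound $d(l)$, which is meant to apply beyond the weight case. For an order function that is not a weight, $\rho_{l(i,j)}=\rho_i+\rho_j$ can fail, and with it collapse your two key deductions: that no two pairs of $N_l$ share a first coordinate, and that $j_v<j_u$ forces $l(i_u,j_v)\menori l$. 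The conclusions remain true, but they must be obtained from the strict monotonicity of $l(i,j)$ in each argument (item~3 of Theorem~\ref{prop:f-algebra}, together with $l(i,j)=l(j,i)$ from the commutativity of $R$), which is exactly how the paper argues: if $i_k=i_{k+1}$ then $j_k<j_{k+1}$, whence $l+1=l(i_k,j_k)<l(i_k,j_{k+1})=l+1$, absurd, so the $i_u$ are strictly increasing; a symmetric contradiction shows the $j_u$ are strictly decreasing; and for $u<v$ one gets $l(i_u,j_v)<l(i_v,j_v)=l+1$ directly from $i_u<i_v$, so $f_{i_u}f_{j_v}\in L_l$, hence $h_{i_u}*h_{j_v}=\varphi(f_{i_u}f_{j_v})\in E_l$ and $s_{i_uj_v}(y)=0$ for $y\in C_l$. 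You should excise every appeal to item~4 and rerun these three steps through item~3.

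Once those membership statements are repaired, your second half coincides with the paper's proof: expanding $f_{i_u}f_{j_u}=\sum_{k\menori l+1}\lambda_k f_k$ with $\lambda_{l+1}\neq 0$ (your use of item~2, legitimate for order functions) is the same as the paper's congruence $f_{i_u}f_{j_u}\congruo\mu f_{l+1}$ modulo $L_l$ with $\mu\neq 0$, and your translation of $y\in C_l\setminus C_{l+1}$ into $s_k(y)=0$ for $k\menori l$ together with $s_{l+1}(y)=\gera{y,h_{l+1}}\neq 0$ is correct, giving $s_{i_uj_u}(y)=\lambda_{l+1}s_{l+1}(y)\neq 0$ exactly as in the paper.
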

\begin{proof}
	Pela ordem da seqüência temos que $i_1\menori i_2 \menori \cdots \menori i_t$, suponha por absurdo que $i_k=i_{k+1}$. Desse modo temos que $j_k<j_{k+1}$ logo
	$$l+1=l(i_k,j_k)<l(i_k,j_{k+1})=l(i_{k+1},j_{k+1})=l+1$$
	absurdo, logo a seqüência é estritamente crescente.
	
	Agora suponha que $j_k \maiori j_{k+1}$, novamente temos que
	$$ l+1 =l(i_k , j_k) \maiori l(i_k , j_{k-1}) > l(i_{k-1},j_{k-1})=l+1,$$
	outro absurdo, assim essa seqüência é decrescente. 
	
	Seja $y\in C_l$, se $u<v$ temos que $l(i_u,j_v)<l(i_v,j_v)=l+1$, assim $f_{i_u}f_{j_v} \in L_l$ e desse modo $h_{i_u}*h_{j_v}\in E_l$, logo $s_{i_uj_v}(y)=\gera{y,h_{i_u}*h_{j_v}}=0$. Do mesmo modo se $u=v$ temos que $l(i_u,j_v)=l+1$ o que nos dá que $h_{i_u}*h_{j_v}\in L_{l+1}\setminus L_l$, e mais, $f_{i_u}f_{j_v}\congruo \mu f_{l+1}\modulo L_l$ para algum $0\neq \mu \in \F_q$. Assim $h_{i_u}*h_{j_v}\congruo \mu h_{l+1} \modulo E_l$, como $y\notin C_l$ temos que $s_{l+1}(y)=\gera{y,h_{l+1}}\neq 0$ pois $\gera{h_i,y}=0$ para $1\menori i \menori l$ e $h_{l+1} \notin C_l.$ Assim $s_{i_uj_v}(y)\neq 0$.
\end{proof}

Observamos que a matriz $m_{uv}=s_{i_uj_v}(y)$ com $1\menori u,v \menori \nu_l$ como do lema acima, é uma matriz quadrada de posto $\nu_l$, e mais, $(m_{uv})$ é uma sub-matriz de $S(y)$, logo o posto de $S(y)$ é maior ou igual a $\nu_l$. Isso juntamente como os lemas \ref{lema:diagonal} e \ref{lema:ordem} demonstram a seguinte proposição:

\begin{prop}\label{prop:peso}
	Se $y\in C_l\setminus C_{l+1}$, então $w(y)\maiori \nu_l$.
\end{prop}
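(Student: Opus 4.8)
O plano \'e explorar a estrutura triangular da matriz s\'indrome, garantida pelo lema \ref{lema:ordem}, para exibir em $S(y)$ uma submatriz quadrada de posto exatamente $\nu_l$, e depois converter esse posto em peso por meio do lema \ref{lema:diagonal}.

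Primeiro, fixaria a enumera\c{c}\~ao $(i_1,j_1),\ldots,(i_t,j_t)$ dos elementos de $N_l$ em ordem lexicogr\'afica crescente, com $t=\nu_l$, e consideraria a matriz quadrada $M=(m_{uv})$ de ordem $\nu_l$ dada por $m_{uv}=s_{i_uj_v}(y)$. Como por hip\'otese $y\in C_l\setminus C_{l+1}$, o lema \ref{lema:ordem} fornece $m_{uv}=0$ sempre que $u<v$ e $m_{uu}\neq 0$ para todo $u$. Assim $M$ \'e triangular inferior com diagonal sem zeros, de onde $\det(M)=\prod_{u=1}^{\nu_l} m_{uu}\neq 0$, e portanto $M$ tem posto $\nu_l$.

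Em seguida, observaria que $M$ se obt\'em de $S(y)$ selecionando as linhas de \'indices $i_1,\ldots,i_t$ e as colunas de \'indices $j_1,\ldots,j_t$. Pelo lema \ref{lema:ordem} a sequ\^encia $(i_u)$ \'e estritamente crescente e a sequ\^encia $(j_v)$ estritamente decrescente, logo ambos os conjuntos de \'indices s\~ao formados por elementos distintos; isso garante que $M$ \'e genuinamente uma submatriz $\nu_l\x\nu_l$ de $S(y)$. Como $M$ tem posto $\nu_l$, segue que $\texto{posto}(S(y))\maiori \nu_l$.

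Por fim, aplicaria o lema \ref{lema:diagonal}, que assegura $\texto{posto}(S(y))=w(y)$, obtendo $w(y)\maiori \nu_l$. O \'unico cuidado t\'ecnico \'e verificar que os \'indices de linha e de coluna escolhidos s\~ao mutuamente distintos, para que $M$ seja de fato uma submatriz $\nu_l\x\nu_l$ de $S(y)$; mas isso \'e exatamente a monotonicidade estrita das duas sequ\^encias dada pelo lema \ref{lema:ordem}. N\~ao h\'a obst\'aculo conceitual real, pois todo o trabalho pesado j\'a foi concentrado nos lemas \ref{lema:diagonal} e \ref{lema:ordem}.
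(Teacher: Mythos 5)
Sua proposta est\'a correta e segue essencialmente o mesmo caminho do texto: o pr\'oprio trabalho observa, logo antes da proposi\c{c}\~ao, que a matriz $(m_{uv})$ com $m_{uv}=s_{i_uj_v}(y)$ \'e uma submatriz quadrada de $S(y)$ de posto $\nu_l$ (triangular com diagonal n\~ao nula, pelo lema \ref{lema:ordem}), e conclui via o lema \ref{lema:diagonal} que $w(y)=\texto{posto}(S(y))\maiori \nu_l$. Seu \'unico acr\'escimo \'e explicitar o c\'alculo do determinante e a verifica\c{c}\~ao de que os \'indices de linha e coluna s\~ao distintos, detalhes que o texto deixa impl\'icitos mas que n\~ao alteram o argumento.
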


Definiremos agora algumas cotas relacionadas aos códigos de avaliação.

\begin{defn}
	Chamaremos de cota ordem\index{Cota! Ordem} aos números
	 $$ d(l) = min\{\nu_m ; m \maiori l \},$$
	 $$ d_{\varphi}(l)=min\{\nu_m; m \maiori l, \,\, C_m \neq C_{m+1}\}.$$
\end{defn}
\begin{thm}
	Os números $d(l)$ e $d_\varphi(l)$ são cotas inferiores para a distância mínima de $C_l$. Mais ainda, $d(C_l)\maiori  d_\varphi(l) \maiori d(l).$
\end{thm}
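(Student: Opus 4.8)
The plan is to establish the two inequalities $d_\varphi(l) \maiori d(l)$ and $d(C_l) \maiori d_\varphi(l)$ separately; together they give the chain $d(C_l) \maiori d_\varphi(l) \maiori d(l)$ and thereby exhibit both numbers as lower bounds for the minimum distance.

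First I would dispose of $d_\varphi(l) \maiori d(l)$, which is immediate from the definitions. The minimum defining $d_\varphi(l)$ ranges over the index set $\{m \maiori l : C_m \neq C_{m+1}\}$, a subset of the set $\{m \maiori l\}$ used for $d(l)$; minimizing $\nu_m$ over fewer indices can only raise the value, so $d(l) \menori d_\varphi(l)$.

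The substantial step is to show that $d_\varphi(l)$ bounds $d(C_l)$ from below. Since $C_l$ is linear, its minimum distance equals its least nonzero weight, so it suffices to prove $w(y) \maiori d_\varphi(l)$ for every $0 \neq y \in C_l$. Fix such a $y$. As the $E_m$ form an increasing chain, their duals $C_m$ form a decreasing chain $C_l \contemi C_{l+1} \contemi \cdots$. Here I would use that $\varphi$ is surjective and that $\{f_i\}$ is a basis of $R$: then $\bigcup_m E_m = \varphi(R) = \F_q^n$, so the chain $E_m$ stabilizes at the full space $\F_q^n$ at the index $N$, whence $C_N = \{0\}$. Consequently $y \notin C_m$ for $m \maiori N$, so the set $\{m \maiori l : y \in C_m\}$ is nonempty and bounded above; letting $m$ be its largest element gives $y \in C_m \setminus C_{m+1}$.

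At this point Proposition \ref{prop:peso} yields $w(y) \maiori \nu_m$. Moreover $y \in C_m$ together with $y \notin C_{m+1}$ forces $C_m \neq C_{m+1}$, and $m \maiori l$, so $m$ lies in the index set minimized by $d_\varphi(l)$; hence $\nu_m \maiori d_\varphi(l)$ and therefore $w(y) \maiori d_\varphi(l)$. Since $y$ was arbitrary, $d(C_l) \maiori d_\varphi(l)$. I expect the main obstacle to be precisely the production of the index $m$ with $y \in C_m \setminus C_{m+1}$: it rests on the chain stabilizing at all of $\F_q^n$ (equivalently $C_N = \{0\}$), which is where surjectivity of $\varphi$ enters; once this is secured, Proposition \ref{prop:peso} finishes the argument.
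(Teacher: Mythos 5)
Your proposal is correct and takes essentially the same approach as the paper, whose entire proof is the one-line chain $d(C_l)\maiori \nu_l\maiori d_\varphi(l)$ citing Proposition \ref{prop:peso}; your argument is the fleshed-out version of exactly that reduction. Indeed your write-up is more careful than the paper's: as literally stated the paper's chain is imprecise (a minimum-weight word of $C_l$ need not lie in $C_l\setminus C_{l+1}$, and $\nu_l$ need not belong to the index set of $d_\varphi(l)$ when $C_l=C_{l+1}$), and your stratification step --- taking the largest $m\maiori l$ with $y\in C_m$, which exists because surjectivity of $\varphi$ gives $E_N=\F_q^n$ and hence $C_N=\{0\}$ --- is precisely what is needed to apply Proposition \ref{prop:peso} rigorously to every nonzero $y\in C_l$.
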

\begin{proof}
	Pela proposição \ref{prop:peso} temos que  $d(C_l)\maiori \nu_l\maiori d_\varphi(l).$
\end{proof}

\section{Semigrupos}

Nesta seção falaremos de semigrupos e a sua associação aos códigos de avaliação.

\begin{defn}
	Um semigrupo numérico\index{Semigrupo! numérico} é um subconjunto $\Lambda$ de $\N_0$ com as seguintes propriedades:
	\begin{enumerate}
		\item $\Lambda$ é fechado para adição;
		\item $0\in \Lambda$.
	\end{enumerate}
	Os elementos de $\N_0\setminus \Lambda$ são chamados de lacunas e a quantidade desses elementos será denotada por $g=g(\Lambda)$ (aqui $g$ pode ser infinito).
\end{defn}

Suponha agora $\rho$ é uma função peso em uma $\F_q-$álgebra $R$, por \ref{fun:ordem}-\ref{fun:peso} temos que o conjunto $\Lambda = \{\rho(f) ; \, f\in R,\, f\neq 0\}$ é um semigrupo numérico chamado de {\bf semigrupo de $\rho$} \index{Semigrupo! de $\rho$}. Se $g< \infty$, então existe um $n \in \Lambda$ tal que se $x \in \N$ e $n\menori x$ então $x\in \Lambda$. Ao menor $n$ com essa propriedade chamamos de {\it condutor}\index{Condutor} de $\Lambda$ e denotamos por $c=c(\Lambda)$. Observe que $c-1$ é o maior lacuna de $\Lambda$ desde que $g>0$.

\begin{defn}
	Seja $(\rho_l)_{l\in \N}$ uma enumeração do semigrupo $\Lambda$ tal que $\rho_l<\rho_{l+1}$ para todo $l$. Denotamos por $g(l)$ o número de lacunas menores que $\rho_l$.
\end{defn}

\begin{lem}\label{lema:semi_fini}
	Sejam $\Lambda$ um semigrupo com finitos lacunas e $l\in \N$, então:
	\begin{enumerate}
		\item $g(l)=\rho_l-l+1$;
		\item $\rho_l\menori l+g-1$, valendo a igualdade se, e somente se, $\rho_l\maiori c$;
		\item Se $l> c-g$, então $\rho_l=l+g-1$;
		\item Se $l\menori c-g$, então $\rho_l < c-1$.
	\end{enumerate}
\end{lem}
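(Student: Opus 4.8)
The plan is to reduce all four parts to one elementary counting identity for the increasing enumeration of $\Lambda$. First I would record that $\rho_1 = 0$: since $0 \in \Lambda$ and $0 = \min \N_0$, the smallest element listed is $\rho_1 = 0$. For part (1), fix $l$ and look at the $\rho_l$ integers $0, 1, \ldots, \rho_l - 1$. Each of them is either an element of $\Lambda$ or a gap. The elements of $\Lambda$ strictly below $\rho_l$ are exactly $\rho_1, \ldots, \rho_{l-1}$ (there are $l-1$ of them, since the enumeration is strictly increasing and lists every element of $\Lambda$), while the gaps below $\rho_l$ number $g(l)$ by definition. Adding the two counts gives $\rho_l = (l-1) + g(l)$, that is $g(l) = \rho_l - l + 1$, which is (1).

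For (2) I would use that the number of gaps below $\rho_l$ can never exceed the total number of gaps, so $g(l) \menori g$; substituting (1) yields $\rho_l - l + 1 \menori g$, i.e. $\rho_l \menori l + g - 1$. Equality holds iff $g(l) = g$, i.e. iff \emph{every} gap lies below $\rho_l$. Since $c-1$ is the largest gap (as recorded just before the statement, for $g > 0$), all gaps are $< \rho_l$ precisely when $c - 1 < \rho_l$, i.e. when $\rho_l \maiori c$. This gives the claimed ``if and only if''.

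For (3) and (4) I would count the elements of $\Lambda$ lying below the conductor. Because $c-1$ is the largest gap, all $g$ gaps are $< c$, so among the $c$ integers $0, 1, \ldots, c-1$ exactly $c - g$ belong to $\Lambda$. Hence $\rho_1, \ldots, \rho_{c-g}$ are precisely the elements of $\Lambda$ below $c$, and the next listed element is the conductor itself, $\rho_{c-g+1} = c$. Since all integers $\maiori c$ lie in $\Lambda$, the enumeration then runs through consecutive integers, so $\rho_l = c + (l - (c-g+1)) = l + g - 1$ for every $l > c - g$, which is (3). For (4), if $l \menori c - g$ then $\rho_l \menori \rho_{c-g} < c$, so $\rho_l \menori c - 1$; but $c-1$ is a gap and hence is not of the form $\rho_l$, so in fact $\rho_l < c - 1$.

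There is no deep obstacle here; the points requiring care are purely bookkeeping. The first is the indexing convention, namely that the enumeration starts at $\rho_1 = 0$, which is what produces ``$l-1$ elements below $\rho_l$'' rather than $l$. The second is the sharpening in (4): one must invoke that $c-1$ is itself a gap in order to upgrade $\rho_l \menori c-1$ to the strict inequality $\rho_l < c-1$. Finally, the degenerate case $g = 0$ (so $\Lambda = \N_0$ and $c = 0$) should be dispatched separately, where (3) holds for all $l$ with $\rho_l = l - 1$ and (4) is vacuous.
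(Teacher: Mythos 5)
Your proof is correct and follows essentially the same counting argument as the paper: comparing positions in $\N_0$ and in $\Lambda$ for (1) and (2), identifying the conductor as $\rho_{c-g+1}$ (the paper writes $c=\rho_{c+1-g}$) for (3), and invoking that $c-1$ is a gap, with the degenerate case $c=0$ set aside, for (4). The only difference is that you explicitly verify the converse direction of the equivalence in (2) (equality forces $\rho_l \maiori c$), which the paper leaves implicit, so this is a small gain in completeness rather than a different route.
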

\begin{proof}
	(1) O elemento $\rho_l\in \Lambda$ é o $(\rho_l+1)-$ésimo elemento de $\N_0$ e mais, ele é o $(\rho_l+1-g(l))-$ésimo elemento de $\Lambda$. Assim $l=\rho_l+1-g(l)$, ou seja, $g(l)=\rho_l +1 -l$.
	
	(2) Se $g(l)\menori g$ temos que $\rho_l+1-l \menori g$, logo $\rho_l \menori g-1+l$, se $\rho_l \maiori c$ temos que todos os lacunas são menores que $\rho_l$ logo $g(l)=g$ valendo assim a igualdade.
	
	(3) Temos que $c$ é o $(c+1)-$ésimo elemento de $\N_0$ e mais, é o $(c+1-g)-$ésimo termo de $\Lambda$m, assim $c=\rho_{c+1-g}$. Tomando $l>c-g$ teremos $\rho_l \maiori \rho_{c+1-g}=c$ e mais, conseguimos assim $\rho_l=g-1+l$.
	
	(4) Seja $l \menori c-g$, assim $\rho_l\menori l+g-1\menori c-1$. Contudo $c-1$ é um lacuna ou é negativo. Para $c=0$ não tem sentido a proposição, então temos que $c-1$ é um lacuna, assim $\rho_l < c-1$.
\end{proof}

A próxima proposição nos traz um resultado sobre o condutor de um semigrupo numérico que vai influenciar em uma importante definição.
\begin{prop}
	Seja $\Lambda$ um semigrupo numérico com número de lacunas $g<\infty$, então $c\menori 2g$ e a igualdade é válida se, e somente se, para todo lacuna $s$ temos que $c-1-s$ não é um lacuna.
\end{prop}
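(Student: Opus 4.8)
The plan is to compare, inside the finite interval $\set{0,1,\ldots,c-1}$, the number of gaps against the number of semigroup elements, using the reflection $a\mapsto c-1-a$. Since every integer $\maiori c$ lies in $\Lambda$ and $0\in\Lambda$, all $g$ gaps sit inside $\set{1,\ldots,c-1}$; hence the $c$ integers $0,1,\ldots,c-1$ partition into exactly $g$ gaps and $c-g$ elements of $\Lambda$. Thus proving $c\menori 2g$ amounts to proving $c-g\menori g$, i.e.\ that there are at least as many gaps as non-gaps in this interval.

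First I would set up the reflection $\psi(a)=c-1-a$, which is an involution of $\set{0,1,\ldots,c-1}$. The key step is to show $\psi$ carries each non-gap to a gap: if $a\in\Lambda$ and also $c-1-a\in\Lambda$, then closure under addition gives $a+(c-1-a)=c-1\in\Lambda$; but (since $g>0$) $c-1$ is the largest gap of $\Lambda$, a contradiction. Consequently $\psi$ restricts to a well-defined injection from the $c-g$ non-gaps of the interval into the $g$ gaps, whence $c-g\menori g$, that is $c\menori 2g$. (When $g=0$ one has $\Lambda=\N_0$ and $c=0$, so the inequality holds trivially.)

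For the equality case, I would observe that $c=2g$ holds exactly when this injection is onto the set of gaps, and then translate surjectivity directly. The map $\psi$ hits every gap iff for each gap $s$ there is a non-gap $a$ with $c-1-a=s$, i.e.\ iff $a=c-1-s$ is a non-gap. Here one checks the ranges are reversible: $s\in\set{1,\ldots,c-1}$ forces $c-1-s\in\set{0,\ldots,c-2}$, so it lies back in the interval and $\psi(c-1-s)=s$ genuinely recovers $s$. Hence $c=2g$ if and only if for every gap $s$ the integer $c-1-s$ is not a gap, which is precisely the stated criterion.

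The main obstacle is really just the bookkeeping needed to make $\psi$ a well-defined injection from non-gaps to gaps and then a bijection: one must verify that the image of a non-gap is always a gap (this is exactly where closure of $\Lambda$ and the fact that $c-1$ is a gap enter) and, for the converse direction, that $c-1-s$ falls back into the interval so the reflection argument is reversible. No delicate estimates beyond this symmetric counting are required.
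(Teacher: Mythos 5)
Your proposal is correct and takes essentially the same route as the paper: both arguments rest on the pairing $s \leftrightarrow c-1-s$ of $\{0,\ldots,c-1\}$, where closure of $\Lambda$ under addition together with the fact that $c-1$ is the largest gap forces at least one element of each pair to be a gap, and the equality case is read off from exactness of this pairing. You merely package the count as an involution inducing an injection (a bijection in the equality case) from non-gaps to gaps, whereas the paper counts the $c$ ordered pairs $(s,t)$ with $s+t=c-1$ directly; the mathematical content is identical, and your bookkeeping is if anything slightly cleaner.
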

\begin{proof}
	Observemos que os pares $(s,t)\in \N^2_0$ tais que $s+t=c-1$, pelo fato de $c-1$ ser um lacuna e de $\Lambda$ ser fechado em relação a soma, tem pelo menos  um dos dois termos de cada par como um lacuna. Como temos $c$ pares desses, levando em consideração a ordem, temos que existem pelo menos $\left[\dfrac {c+1}{2} \right]$ lacunas, logo $c\menori 2g$.
	
	Agora se a igualdade é válida, temos que $g=\dfrac c2$ assim dados $s,t \in \N_0$ tais que $s+t=c-1$ temos que apenas um dos dois ($s$ ou $t$) é um lacuna, logo sendo $s$ um lacuna $c-1-s$ não pode ser lacuna.
	
	Supondo que se $s$ for um lacuna temos que $c-1-s$ não é, isso nos dá que apenas um dos termos dos pares $(s,t)$ tais que $s+t=c-1$ é um lacuna, assim temos exatamente $\dfrac c2$ lacunas.
\end{proof}
O resultado anterior justifica a seguinte definição:
\begin{defn}
	Um semigrupo numérico é chamado simétrico\index{Semigrupo! Simétrico} se $c=2g$.
\end{defn}
\begin{defn}
	Dizemos que um semigrupo numérico é finitamente gerado se existe um conjunto $A=\{a_1,\ldots,a_k\}\contido \Lambda$ tal que dado $\lambda \in \Lambda$ temos que existem $x_1,\ldots,x_k \in \N_0$, tais que $\lambda =\sum_{i=1}^k x_ia_i$. Assim falamos que $A$ gera $\Lambda$ e escrevemos $\Lambda=\gera{A}$.
\end{defn}
A respeito de subgrupos numéricos finitamente gerados o primeiro resultado que apresentamos é:
\begin{prop}\label{prop:gerado}
	Sejam $a,b\in \N$ tais que $mdc(a,b)=1$. O semigrupo gerado por $a$ e $b$ é simétrico, tem como último lacuna o número $ab-a-b$, como seu condutor o número $(a-1)(b-1)$ e o número total de lacunas é $(a-1)(b-1)/2$.
\end{prop}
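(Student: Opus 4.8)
The plan is to analyze membership in $\Lambda=\gera{a,b}=\{xa+yb : x,y\in\N_0\}$ one residue class modulo $b$ at a time, an idea that controls all four assertions simultaneously. Since $mdc(a,b)=1$, multiplication by $a$ permutes the residues modulo $b$, so for each $r\in\{0,1,\ldots,b-1\}$ there is a unique $x_r\in\{0,1,\ldots,b-1\}$ with $x_r a\congruo r \pmod b$, and $r\mapsto x_r$ is a bijection of $\{0,\ldots,b-1\}$ onto itself. First I would establish the membership criterion: an integer $n\maiori 0$ lies in $\Lambda$ if and only if $n\maiori x_r a$, where $r$ is the residue of $n$ modulo $b$. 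The forward direction is immediate; for the converse, $n\maiori x_r a$ together with $n\congruo x_r a \pmod b$ lets one write $n=x_r a + kb$ with $k\maiori 0$, exhibiting $n$ as a nonnegative combination.

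Next I would count the gaps. In the residue class $r$ the elements of $\Lambda$ are exactly $x_r a,\ x_r a + b,\ x_r a + 2b,\ldots$, so the gaps in that class are the nonnegative integers congruent to $r$ and strictly below $x_r a$, of which there are precisely $(x_r a - r)/b$. Summing over $r$ and using that both $r\mapsto x_r$ and the identity run over $\{0,\ldots,b-1\}$ gives
$$g=\frac1b\sum_{r=0}^{b-1}(x_r a - r)=\frac1b\left(a\cdot\frac{b(b-1)}2-\frac{b(b-1)}2\right)=\frac{(a-1)(b-1)}2,$$
which is the stated gap count.

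To locate the largest gap I would observe that in each class the largest gap (present exactly when $x_r\maiori 1$) is $x_r a - b$, so the global largest gap is obtained by taking $x_r=b-1$, giving $ab-a-b$; this is the last lacuna. The conductor is then one more than the largest gap, namely $c=ab-a-b+1=(a-1)(b-1)$, and comparing with the gap count yields $c=2g$, which is symmetry by definition (and is consistent with the preceding proposition, since each pair $s+t=c-1$ then splits a gap from a non-gap).

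The main obstacle is pinning down the residue-class membership criterion and the per-class gap count with complete care --- in particular verifying that $x_r a$ really is the smallest element of $\Lambda$ in class $r$, which is exactly where coprimality and the normalization $0\menori x_r \menori b-1$ are used. Once that bookkeeping is fixed, the summation and the identifications of the last lacuna $ab-a-b$, the conductor $(a-1)(b-1)$, and the symmetry $c=2g$ all follow by direct computation.
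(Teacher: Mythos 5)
Your proof is correct, but it reaches the four conclusions by a genuinely different route than the paper, even though both rest on the same basic fact (the normalized representation of an integer with the coefficient of $a$ taken in $\{0,\ldots,b-1\}$, which is exactly your $x_r$). The paper first identifies $ab-a-b$ as the last gap and $c=(a-1)(b-1)$ as the conductor, then proves symmetry \emph{by contradiction}: if two gaps $s=x_1a+y_1b$, $t=x_2a+y_2b$ satisfied $s+t=c-1$, the bounds $0\le x_1+x_2\le 2b-2$ and $y_1+y_2\le -2$ together with $mdc(a,b)=1$ would force $a\mid(-y_1-y_2-1)$ and $b\mid(x_1+x_2-b+1)$, trapping a positive multiple of $ab$ strictly below $ab$; only \emph{after} that does the gap count follow, as $g=c/2$, using the preceding proposition characterizing when $c=2g$. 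You invert the logical order: the Ap\'ery-style decomposition of $\N_0$ into residue classes modulo $b$, with per-class minimum $x_ra$, lets you count the gaps directly, $g=\frac1b\sum_r(x_ra-r)=(a-1)(b-1)/2$, read off the last gap and conductor from the same data, and obtain symmetry as the purely numerical identity $c=2g$ (which is the paper's definition of symmetric), with no contradiction argument and no appeal to the preceding proposition. Your route is more self-contained and constructive; the paper's route makes the structural pairing $s\leftrightarrow c-1-s$ behind symmetry explicit. One small imprecision worth fixing: the largest gap in the class of $r$ is $x_ra-b$ provided $x_ra\ge b$, not merely when $x_r\ge 1$ (for $a=3$, $b=5$ the class $r=3$ has $x_r=1$ and no gaps at all); this is harmless for your maximization, since taking $x_r=b-1$ gives $(b-1)a-b=ab-a-b\ge 0$ for $a,b\ge 2$, and that number is genuinely a gap because it lies below the class minimum $(b-1)a$.
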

\begin{proof}
	Como $mdc(a,b)=1$, temos que todo inteiro $m$ pode ser escrito como $m=xa+yb$, de maneira única com $0\menori y <b$.
	
	Pelo fato acima observamos que o maior lacuna possível é $(b-1)a -b$ e de fato esse número é um lacuna, pois não é possível escreve-lo como  $(b-1)a-b= xa +yb$  com $x,y \in \N_0$, assim temos que o condutor é $c=(b-1)a-b+1 = (a-1)(b-1)$.
	
	Agora mostremos que o semigrupo é simétrico. Suponhamos por absurdo que o semigrupo não seja simétrico, ou seja, existe $s,t\in \N$ lacunas tais que $s+t=c-1$ onde $c$ é o condutor. Podemos escrever $s=x_1a+y_1b$ e $t=x_2a+y_2b$, assim temos que $c-1=ab-a-b=(x_1+x_2)a+(y_1+y_2)b$. Observe que $0\menori x_1+x_2 \menori 2b-2$ e mais $y_1+y_2\menori -2$. Disto segue que:
	$$D=(-y_1-y_2-1)b=(x_1+x_2-b+1)a\implica$$
	$$0<b\menori(-y_1-y_2-1)b=(x_1+x_2-b+1)a \menori (b-1)a<ba$$
	Como $mdc(a,b)=1$ temos que $a|(-y_1-y_2-1)$ e que $b|(x_1+x_2-b+1)$, assim chegamos que $0<\dfrac{D}{ab}<1$, absurdo, logo $\Lambda$ é simétrico.
	
	Como o semigrupo é simétrico temos que $c=2g$, logo $g=(a-1)(b-1)/2$.
\end{proof}

Faremos agora mais um lema técnico sobre semigrupos.

\begin{lem}\label{lema:cardi}
	Sejam $\Lambda$ um semigrupo numérico com finitos lacunas e $s\in \Lambda$. Então temos que $\#(\Lambda \setminus \{s+\lambda;\,\lambda \in \Lambda\}) = s$.
\end{lem}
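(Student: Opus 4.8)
The plan is to count the set $T := \Lambda \setminus (s+\Lambda)$, where $s+\Lambda := \set{s+\lambda ; \lambda \in \Lambda}$, by comparing the number of elements of $\Lambda$ and of $s+\Lambda$ that lie in a sufficiently long initial segment $\set{0,1,\ldots,M}$ of $\N_0$, and then checking that the difference is exactly $s$.

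First I would record two preliminary observations. Since $\Lambda$ is closed under addition and $s\in\Lambda$, we have $s+\Lambda \contidoi \Lambda$, so $T$ is genuinely the difference of two nested subsets of $\Lambda$. Moreover $T$ is finite: an element $n\in\Lambda$ lies outside $s+\Lambda$ precisely when $n-s \notin \Lambda$, and if $n \maiori c+s$ then $n-s \maiori c$, which forces $n-s\in\Lambda$ (every integer at least $c$ belongs to $\Lambda$); hence $T \contidoi \set{0,1,\ldots,c+s-1}$.

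Next, fix an integer $M \maiori c-1+s$ and write $g=g(\Lambda)$ for the number of gaps. Because every gap of $\Lambda$ is strictly smaller than $c$, for any $M' \maiori c-1$ the segment $\set{0,\ldots,M'}$ contains all $g$ gaps, so $\#(\Lambda \inter \set{0,\ldots,M'}) = (M'+1)-g$. Taking $M'=M$ gives $\#(\Lambda \inter \set{0,\ldots,M}) = M+1-g$. On the other hand, the translation $\lambda \mapsto s+\lambda$ is a bijection of $\Lambda$ onto $s+\Lambda$ sending $\set{\lambda\in\Lambda ; \lambda \menori M-s}$ onto $(s+\Lambda)\inter\set{0,\ldots,M}$; since $M-s \maiori c-1$, this yields $\#((s+\Lambda)\inter\set{0,\ldots,M}) = \#(\Lambda\inter\set{0,\ldots,M-s}) = (M-s+1)-g$.

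Finally, as $s+\Lambda \contidoi \Lambda$ and both are counted inside the same segment, subtraction gives
$$\#(T \inter \set{0,\ldots,M}) = (M+1-g)-(M-s+1-g) = s.$$
Because $T \contidoi \set{0,\ldots,c+s-1} \contidoi \set{0,\ldots,M}$, the left-hand side equals $\#T$, which proves the lemma. The only delicate point I anticipate is the bookkeeping of the bounds: one must choose $M$ large enough that the saturated counting formula $\#(\Lambda\inter\set{0,\ldots,M'}) = M'+1-g$ applies simultaneously to $M'=M$ and to $M'=M-s$, and the choice $M \maiori c-1+s$ is exactly what guarantees both at once.
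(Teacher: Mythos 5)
Your proof is correct and takes essentially the same route as the paper: the paper splits $\Lambda$ and $s+\Lambda$ at the threshold $s+c$ into a common tail $T=\{t\in\N_0;\ t\maiori s+c\}$ plus finite initial parts of cardinalities $s+c-g$ and $c-g$, which is exactly your count over the segment $\{0,\ldots,M\}$ specialized to $M=s+c-1$. Both arguments rest on the same two facts --- all gaps lie below $c$, so initial-segment counts of $\Lambda$ saturate to (length $-\,g$), and translation by $s$ bijects $\Lambda$ onto $s+\Lambda$ --- so there is nothing essentially different, only your explicit bookkeeping of the cutoff $M$.
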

\begin{proof}
	Seja $c$ o condutor de $\Lambda$, $T=\{t\in \N_0;\,\, t\maiori s+c\}$, claramente temos que $T\contido \Lambda$, e mais, $T\contido s+\Lambda=\{s+\lambda;\,\lambda \in \Lambda\}$. Seja $U=\{u\in \Lambda; u<s+c\}$, temos que $\#U=s+c-g$, além disso $\Lambda =  U\uniao T$. Seja $V=\{v\in s+\Lambda;\,\, s\menori v<s+c\}$, temos que $\#V=s+c-g-s=c-g$, e mais, $s+\Lambda = V \uniao T$. Observe que as uniões acima são disjuntas e mais, $V\contido U$. Assim temos que:
	$$\#(\Lambda\setminus s+\Lambda)=\#(U\uniao T \setminus V\uniao T) = \#(U\setminus T) = s+c-g -(c-g) = s.$$
	Como queríamos demonstrar.
\end{proof}
Uma conseqüência quase imediata desse lema é:

\begin{prop}\label{prop:dim}
	Seja $f$ um elemento não nulo de uma $\F_q-$álgebra $R$ com uma função peso $\rho$. Então $dim_{\F_q}(R/\gera{f})=\rho(f)$.
\end{prop}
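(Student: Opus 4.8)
The plan is to produce an explicit $\F_q$-basis of $R/\gera{f}$ whose cardinality is governed by Lemma \ref{lema:cardi}. Write $s=\rho(f)$ and recall from Theorem \ref{prop:f-algebra} the $\F_q$-basis $\{f_i : i\in\N\}$ of $R$ with $\rho(f_i)=\rho_i$ strictly increasing, so that $i\mapsto\rho_i$ is a bijection onto the semigroup $\Lambda$. The first step is to record that the weights attained on $\gera{f}=fR\setminus\{0\}$ are exactly $s+\Lambda:=\{s+\lambda : \lambda\in\Lambda\}$: any nonzero element of $fR$ is $fh$ with $h\neq 0$, so by property 6 of Definition \ref{fun:ordem} its weight is $\rho(f)+\rho(h)=s+\rho(h)\in s+\Lambda$, and conversely every such value is realized by $fh$.

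Next I would set $\Delta=\{i\in\N : \rho_i\notin s+\Lambda\}$, let $W=\gera{f_i : i\in\Delta}$ be the $\F_q$-span of the corresponding basis vectors, and prove that the composite $W\hookrightarrow R\to R/fR$ is an isomorphism. Injectivity is the easy direction: if $0\neq w\in W\cap fR$, then by Theorem \ref{prop:f-algebra}(2) its weight equals $\rho_m$ for the largest index $m\in\Delta$ occurring in $w$, hence $\rho(w)\notin s+\Lambda$; but lying in $fR\setminus\{0\}$ forces $\rho(w)\in s+\Lambda$, a contradiction, so $W\cap fR=\{0\}$.

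The substantive step is surjectivity, i.e.\ $R=W+fR$, which I would establish by induction on $\rho(g)$ along the well ordering of $\N_0\uniao\{-\infty\}$, the base case $\rho(g)=0$ being immediate since then $g\in\F_q\subseteq W$ (the degenerate case $s=0$, where $f$ is a unit and $R/fR=0$, is trivial). Given $g\neq 0$ with $\rho(g)=\rho_k>0$, there are two cases. If $k\in\Delta$, property 5 of Definition \ref{fun:ordem} yields a scalar $\lambda$ with $\rho(g-\lambda f_k)<\rho_k$; since $f_k\in W$, the induction hypothesis applied to $g-\lambda f_k$ places $g$ in $W+fR$. If $k\notin\Delta$, then $\rho_k=s+\rho_j$ for some $j$, and $ff_j\in fR$ has weight exactly $\rho_k$, so property 5 gives a scalar $\mu$ with $\rho(g-\mu ff_j)<\rho_k$; as $\mu ff_j\in fR$, induction again gives $g\in W+fR$. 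Thus every $g$ reduces into $W+fR$, and the map is onto.

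Combining the two directions shows $\{f_i+fR : i\in\Delta\}$ is an $\F_q$-basis, whence $dim_{\F_q}(R/\gera{f})=\#\Delta$. Since $i\mapsto\rho_i$ carries $\Delta$ bijectively onto $\Lambda\setminus(s+\Lambda)$, Lemma \ref{lema:cardi} (which requires $\Lambda$ to have finitely many gaps, the standing hypothesis there) gives $\#\Delta=\#(\Lambda\setminus(s+\Lambda))=s=\rho(f)$, as desired. I expect the main obstacle to be the surjectivity induction: one must verify that the two-case leading-term reduction strictly decreases $\rho$ at each stage and hence terminates, and that the terms accumulated along the way genuinely lie in $W$ rather than merely in $R$.
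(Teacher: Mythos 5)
Your proof is correct and follows essentially the same route as the paper: both identify the set of weights attained on $\gera{f}\setminus\{0\}$ as $s+\Lambda$ via property 6, arrange a weight-adapted basis so that the cosets indexed by $\Lambda\setminus(s+\Lambda)$ form a basis of $R/\gera{f}$, and conclude by counting with Lemma \ref{lema:cardi}. The only organizational difference is that the paper re-chooses the basis so that $f_i\in\gera{f}$ whenever $\rho_i\in s+\Lambda$ and cites the reduction argument from the proof of Theorem \ref{prop:f-algebra}, whereas you keep a fixed basis and carry out that same leading-term reduction explicitly (substituting $ff_j$ in the ideal case), so your write-up in fact makes precise the steps the paper leaves implicit, including the standing finitely-many-gaps hypothesis needed for Lemma \ref{lema:cardi}.
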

\begin{proof}
	Sejam $\Lambda$ o semigrupo  da função peso $\rho$ e  $s=\rho(f)$. Tomemos a seqüência $(\rho_i)_{i\in \N} $ dos elementos de $\Lambda$ em ordem crescente. Pela propriedade \ref{fun:ordem}-\ref{fun:peso} temos que a imagem dos elementos não nulos do ideal $\gera{f}$ segundo a função $\rho$ é o conjunto $s+\Lambda$. Como feito antes, para todo $\rho_i \in \Lambda$, existe um $f_i\in R$ tal que $\rho(f_i)=\rho_i$ e caso $\rho_i\in s+\Lambda$ podemos tomar $f_i\in \gera{f}$. Os conjuntos $\{f_i; \,i\in \N\}$ e $\{f_i; \, i\in \N,\, \rho_i \in s+\Lambda\}$ formam uma base para a álgebra $R$ e o ideal $\gera{f}$ respectivamente, vide demonstração de \ref{prop:f-algebra}. Desse modo as classes de equivalência $f_i$ módulo $\gera{f}$ com $i\in \N$ e $\rho_i\in \Lambda \setminus (s+\Lambda)$ formam uma base para o quociente $R/\gera{f}$. Assim a sua dimensão é a cardinalidade da base que é $s$ pelo lema \ref{lema:cardi}, ou seja, $\rho(f)$.
\end{proof}

\begin{defn}
	Seja $R=\F_q[x_1,\ldots,x_n]/I$, onde $I$ é um ideal do anel de polinômios $\F_q[x_1,\ldots,x_n]$, para $f+I \in R$ dizemos que $P\in \F^n_q$ é um zero de $f+I$, se $P\in V(I)$ e $f(P)=0$.
\end{defn}

\begin{lem}\label{lema:numer_zeros}
	Seja $R$ uma $\F_q-$álgebra finita com uma função peso $\rho$. Seja $f\in R$ um elemento não nulo. Então o número de zeros de $f$ é no máximo $\rho(f)$.
\end{lem}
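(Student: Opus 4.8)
O plano � relacionar o n�mero de zeros de $f$ com a dimens�o do quociente $R/\gera{f}$, a qual a proposi��o \ref{prop:dim} j� garante ser igual a $\rho(f)$. Escrevendo $R=\F_q[x_1,\ldots,x_n]/I$, denotarei por $P_1,\ldots,P_m$ os zeros de $f$ em $V(I)$, de modo que $m$ � exatamente a quantidade que quero estimar (e � finita, pois $V(I)\contido \F_q^n$ � finito). A estrat�gia central ser� exibir uma sobreje��o $\F_q$-linear de $R/\gera{f}$ sobre $\F_q^m$, pois da� seguir� $m\menori dim_{\F_q}(R/\gera{f})=\rho(f)$.

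Para isso, considerarei a avalia��o restrita aos zeros de $f$, isto �, a aplica��o $\F_q$-linear
$$\funcao{ev}{R}{\F_q^m}{g+I}{(g(P_1),\ldots,g(P_m))}.$$
Primeiro verificarei que $ev$ � sobrejetiva: como os $P_j$ s�o pontos dois a dois distintos de $V(I)$, o mesmo argumento do lema \ref{lema:sobrej} (construindo polin�mios $G_i$ que se anulam em $P_j$ para $j\neq i$ e n�o se anulam em $P_i$) fornece pr�-imagens de toda a base can�nica de $\F_q^m$. Em seguida observarei que $\gera{f}\contido Ker(ev)$, pois para qualquer $h\in R$ tem-se $(fh)(P_j)=f(P_j)h(P_j)=0$ em cada zero $P_j$ de $f$. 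Assim $ev$ se fatora pelo quociente, produzindo uma aplica��o sobrejetiva $\overline{ev}:R/\gera{f}\emm \F_q^m$.

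Da sobreje��o $\overline{ev}$ concluirei que $m=dim_{\F_q}(\F_q^m)\menori dim_{\F_q}(R/\gera{f})$, e a proposi��o \ref{prop:dim} d� $dim_{\F_q}(R/\gera{f})=\rho(f)$; logo $m\menori \rho(f)$, que � a cota procurada. O ponto mais delicado ser� a sobrejetividade de $ev$, mas ela � uma reutiliza��o direta da constru��o do lema \ref{lema:sobrej} aplicada ao subconjunto $\{P_1,\ldots,P_m\}$ de $V(I)$; o restante (boa defini��o no quociente e a desigualdade de dimens�es via sobreje��o) � rotina.
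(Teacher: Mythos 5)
Your proof is correct and takes essentially the same route as the paper's: evaluation at the zeros of $f$, surjectivity via Lemma \ref{lema:sobrej}, the inclusion $\gera{f}\contido Ker(ev)$, and the identity $dim_{\F_q}(R/\gera{f})=\rho(f)$ from Proposition \ref{prop:dim}. The only difference is cosmetic (and, if anything, slightly cleaner): you factor the evaluation through the quotient $R/\gera{f}$ to get $m\menori dim_{\F_q}(R/\gera{f})$ directly, whereas the paper reaches the same bound by comparing $dim_{\F_q}(\gera{f})$ with $dim_{\F_q}(Ker(ev))$ and subtracting dimensions inside $R$.
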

\begin{proof}
	Seja $\Pponto$ o conjunto de zeros de $f$ e $t=\#\Pponto$. A função de avaliação, $ev_\Pponto :R\emm \F_q^t$, é uma função linear e  pelo lema \ref{lema:sobrej} temos que $ev_\Pponto$ é sobrejetiva. Isso nos garante que $R/Ker(ev_\Pponto) \iso \F_q^t$. Observe que $\gera{f} \contido Ker(ev_\Pponto)$ e olhando ambos como sub-espaço vetorial de $R$ temos que $dim_{\F_q}(\gera{f})\menori dim_{\F_q}(Ker(ev_\Pponto))$. Assim segue que $t=dim_{\F_q}(R/Ker(ev_\Pponto))=dim_{\F_q}(R)-dim_{\F_q}(Ker(ev_\Pponto)\menori dim_{\F_q}(R)-dim_{\F_q}(\gera{f}) = \dim_{\F_q}(R/\gera{f}) = \rho(f)$ por \ref{prop:dim}.
\end{proof}

\section{Código de Avaliação Via Semigrupos}

Aqui estamos interessados em  encontrar uma cota para a distância mínima dos códigos de avaliação $E_l$.

Nessa seção iremos supor que $\rho$ é uma função peso em $R=\F_q[x_1,\ldots,x_m]/I$, onde $I$ é um ideal de $\F_q[x_1,\ldots,x_m]$ (anel de polinômios em $m$ variáveis). Seja $(\rho_i)_{i\in \N}$ a enumeração do semigrupo de $\rho$ em ordem crescente. Tomemos $\Pponto$ como um conjunto com $n$ pontos do conjunto $V(I)$ (variedade algébrica gerada por $I$). A função de avaliação $ev_\Pponto : R\emm \F_q^n$ nos define os códigos de avaliação
$$E_l=\{ev_\Pponto(f); \,\, f\in R ,\, \rho(f)\menori \rho_l\}.$$

\begin{thm}\label{teo:dista}
	A distância mínima do código $E_l$ é maior ou igual a $n-\rho_l$. Se $\rho_l<n$, temos que $dim_{\F_q}(E_l)=l.$
\end{thm}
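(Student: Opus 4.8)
The plan is to prove both assertions by leaning on Lemma~\ref{lema:numer_zeros}, which bounds the number of zeros of a nonzero $f\in R$ by its weight $\rho(f)$. Everything reduces to that single inequality together with the structure of the basis $\{f_i\}$ furnished by Theorem~\ref{prop:f-algebra}.

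First I would treat the minimum distance bound. Since $E_l$ is a linear code (it is the image of the linear map $ev_\Pponto$ restricted to $L_l$), its minimum distance equals its minimum weight, so it suffices to bound the weight of an arbitrary nonzero codeword. Such a codeword has the form $ev_\Pponto(f)=(f(P_1),\ldots,f(P_n))$ with $\rho(f)\menori \rho_l$, and $ev_\Pponto(f)\neq 0$ forces $f\neq 0$. Its weight equals $n$ minus the number of points $P_i\in\Pponto$ at which $f$ vanishes. Because $\Pponto\contido V(I)$, every such $P_i$ is a zero of $f$ in the sense of the definition preceding Lemma~\ref{lema:numer_zeros}, so the number of vanishing coordinates is at most the total number of zeros of $f$, which that lemma bounds by $\rho(f)\menori \rho_l$. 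Hence $w(ev_\Pponto(f))\maiori n-\rho_l$, giving $d(E_l)\maiori n-\rho_l$.

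For the dimension, I would first identify $\{f\in R:\rho(f)\menori \rho_l\}$ with $L_l=\gera{f_1,\ldots,f_l}$: by Theorem~\ref{prop:f-algebra} one has $\rho(f_i)=\rho_i$ with $\rho_i<\rho_{i+1}$ and $\rho\bigl(\sum_{i=1}^m\lambda_i f_i\bigr)=\rho_m$ when $\lambda_m\neq 0$, so $\rho(f)\menori \rho_l$ exactly when the top index is $\menori l$. Thus $E_l=ev_\Pponto(L_l)$ with $\dim_{\F_q}(L_l)=l$, and by rank--nullity $\dim_{\F_q}(E_l)=l-\dim_{\F_q}(L_l\inter \ker ev_\Pponto)$. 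Under the hypothesis $\rho_l<n$ I would show the intersection is trivial: if $0\neq f\in L_l$ satisfied $ev_\Pponto(f)=0$, then $f$ would vanish at all $n$ points of $\Pponto\contido V(I)$, producing at least $n$ zeros, whereas Lemma~\ref{lema:numer_zeros} allows at most $\rho(f)\menori \rho_l<n$. This contradiction yields $L_l\inter \ker ev_\Pponto=\{0\}$, hence $\dim_{\F_q}(E_l)=l$.

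The only points requiring genuine care are the identification $\{f:\rho(f)\menori \rho_l\}=L_l$ and the observation that vanishing on the $n$ chosen points—which do lie in $V(I)$—actually counts toward the zero bound of Lemma~\ref{lema:numer_zeros}; the rest is routine rank--nullity bookkeeping. I do not expect a serious obstacle, since both halves collapse to the inequality ``number of zeros $\menori \rho(f)$,'' the strict inequality $\rho_l<n$ being precisely what forces injectivity in the dimension claim.
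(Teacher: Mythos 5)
Your proposal is correct and takes essentially the same route as the paper: both halves rest on Lemma~\ref{lema:numer_zeros}, first bounding the number of vanishing coordinates of a nonzero codeword by $\rho(f)\menori\rho_l$ to get $d(E_l)\maiori n-\rho_l$, and then, when $\rho_l<n$, using the same zero count to show $Ker(ev_\Pponto|_{L_l})=\{0\}$, whence $\dim_{\F_q}(E_l)=\dim_{\F_q}(L_l)=l$. Your explicit identification of $\{f\in R;\ \rho(f)\menori\rho_l\}$ with $L_l$ via Theorem~\ref{prop:f-algebra} merely makes precise a step the paper leaves implicit, so the arguments coincide.
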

\begin{proof}
	Seja $c$ uma palavra código não nula de $E_l$, então existe $f\in R\setminus \{0\}$ tal que $\rho(f)\menori \rho_i$ e $c=ev_\Pponto(f)$. Temos que $c_i=f(P_i)$ para todo $0<i<n+1$, onde os $c_i$'s são as coordenadas da palavra $c$, pelo lema \ref{lema:numer_zeros} temos que o número de zeros de $f$ é no máximo $\rho(f)\menori \rho_l$, assim $w(c)\maiori n-\rho_l$.
	
	Agora suponha que $\rho_l<n$. Temos que $E_l$ é a imagem do espaço vetorial $L_l$ através da função $ev_\Pponto$. Se $f\in L_l$ e $ev_\Pponto(f)=0$ então $f$ tem pelo menos $n$ zeros, mas pelo lema \ref{lema:numer_zeros} se $f$ é não nulo então $f$ admite no máximo $\rho(f)$ zeros, contudo $\rho(f)\menori \rho_l <n$, assim $f$ tem de ser nulo, concluímos assim que $Ker(ev_\Pponto |_{L_l})=\{0\}$, o que nos dá que $dim_{\F_q}(E_l)=dim_{\F_q}(L_l)$, lembremos que $L_l$ é o espaço vetorial gerado pelo conjunto $\{f_1,\ldots,f_l\}$, logo sua dimensão é $l$, assim $dim_{\F_q}(E_l)=l$.
\end{proof}
\begin{cor}
	Seja $\rho$ uma função peso com $g$ lacunas. Se $\rho_k<n$, então $E_k$ é um $[n,k,d]-$código tal que $d\maiori n+1-k-g$.
\end{cor}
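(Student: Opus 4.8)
O plano � combinar diretamente o teorema \ref{teo:dista} com o item 2 do lema \ref{lema:semi_fini}, pois o corol�rio nada mais � do que uma consequ�ncia num�rica desses dois resultados, sem nenhum argumento adicional de fundo.

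Primeiro eu identificaria os par�metros de comprimento e dimens�o. Como $E_k \contido \F_q^n$, o comprimento do c�digo � automaticamente $n$. Em seguida, usando a hip�tese $\rho_k<n$, a segunda parte do teorema \ref{teo:dista} garante que $dim_{\F_q}(E_k)=k$, de modo que $E_k$ � de fato um $[n,k]-$c�digo e resta apenas controlar sua dist�ncia m�nima.

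Para a cota da dist�ncia m�nima, a primeira parte do teorema \ref{teo:dista} fornece $d \maiori n-\rho_k$. Por outro lado, como $\rho$ possui $g$ lacunas (em quantidade finita), posso aplicar o item 2 do lema \ref{lema:semi_fini}, que afirma $\rho_k \menori k+g-1$. Substituindo essa desigualdade na cota anterior obtenho
$$d \maiori n-\rho_k \maiori n-(k+g-1) = n+1-k-g,$$
que � exatamente a estimativa desejada.

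N�o enxergo aqui um obst�culo t�cnico de verdade, j� que tudo decorre de substitui��es diretas; o �nico ponto ao qual prestaria aten��o � verificar que as hip�teses dos resultados invocados est�o satisfeitas. Em particular, a condi��o $\rho_k<n$ � indispens�vel para concluir a igualdade das dimens�es $dim_{\F_q}(E_k)=k$, e a finitude do n�mero de lacunas do semigrupo associado a $\rho$ (impl�cita no enunciado ao se falar em um n�mero $g$ de lacunas) � o que legitima o uso do lema \ref{lema:semi_fini}.
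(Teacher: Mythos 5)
Sua prova est� correta e segue essencialmente o mesmo caminho do texto: o teorema \ref{teo:dista} d� $d\maiori n-\rho_k$ e o item 2 do lema \ref{lema:semi_fini} d� $\rho_k\menori k+g-1$, donde $d\maiori n+1-k-g$. Sua verifica��o adicional de que $\rho_k<n$ garante $dim_{\F_q}(E_k)=k$ apenas explicita o que a prova do texto usa implicitamente.
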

\begin{proof}
	Pelo teorema \ref{teo:dista} temos que $d\maiori n-\rho_k$. Agora pelo lema \ref{lema:semi_fini} temos que $\rho_k \menori k+g-1$, assim segue que $d\maiori n+1-k-g.$
\end{proof}


\chapter{Exemplos}
Este capítulo está dedicado a apresentação de alguns exemplos dos códigos que nesse texto foram definidos.
As notações aqui  utilizadas são na maioria as mesmas do  capítulo 3, ou seja, $R$ denotará sempre uma $\F_q-$álgebra, $\rho$ uma função peso, $l(i,j)$, o inteiro tal que $\rho(f_if_j)=\rho_{l(i,j)}$, onde $\{f_1,f_2,\ldots\}$ é uma $\F_q$-base para $R$, etc.

\section{Um Primeiro Exemplo}

Esse primeiro teorema do capítulo vai nos permitir garantir a existência de funções ordem e peso em determinadas condições.

\begin{thm}\label{teo:exemp1}
    Sejam $R$ uma $\F-$álgebra e $\{f_1,f_2,\ldots\}$ uma $\F-$base do $\F-$espaço vetorial $R$, com $f_1=1$. Sejam, ainda, $(\rho_i)_{i\in \N}$ uma seq{\"u}ência estritamente crescente de inteiros não negativos e $\rho : R \emm \N_0 \uniao \{-\infty\}$ a função definida por $\rho(0)=-\infty$ e $\rho(f)=\rho_i$ se $f\neq 0$ e $i$ for o menor inteiro tal que $f\in L_i$, onde $L_i$ é o $\F-$subespaço vetorial gerado por $\{f_1,\ldots,f_i\}$.  Se para todo $(i,j)\in \N^2$ tem-se que $l(i,j)<l(i+1,j)$, então $\rho$ é uma função ordem e, mais ainda, se $\rho_{l(i,j)}=\rho_i+\rho_j$, então $\rho$ é uma função peso.

\end{thm}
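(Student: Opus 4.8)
\section*{Proposta de demonstra\c{c}\~ao}

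O plano \'e verificar diretamente as propriedades 1--5 da defini\c{c}\~ao \ref{fun:ordem} para concluir que $\rho$ \'e fun\c{c}\~ao ordem, e depois a propriedade 6 para a parte de fun\c{c}\~ao peso; come\c{c}o pelas mais simples, deixando a propriedade 4 (que usa a hip\'otese $l(i,j)<l(i+1,j)$) como o passo central. As propriedades 1 e 2 s\~ao imediatas: a primeira porque $\rho(f)=-\infty$ exatamente quando $f=0$, e a segunda porque $f$ e $\lambda f$ (com $\lambda\neq 0$) pertencem aos mesmos subespa\c{c}os $L_i$, logo t\^em o mesmo menor \'indice. Para a propriedade 3, escrevo $f\in L_i\setminus L_{i-1}$ e $g\in L_j\setminus L_{j-1}$ (usando que $\{f_k\}$ \'e base) e suponho $i\menori j$; ent\~ao $f+g\in L_j$, donde $\rho(f+g)\menori\rho_j=max\{\rho(f),\rho(g)\}$, e se $i<j$ a igualdade segue pois $f+g\in L_{j-1}$ acarretaria $g=(f+g)-f\in L_{j-1}$, um absurdo. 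A propriedade 5 tamb\'em sai da representa\c{c}\~ao na base: se $f$ e $g$ t\^em ambos \'indice m\'aximo $i$, com coeficientes $a_i,b_i\neq 0$ em $f_i$, tomo $\lambda=a_i/b_i$ e o coeficiente de $f_i$ em $f-\lambda g$ se anula, logo $f-\lambda g\in L_{i-1}$ e $\rho(f-\lambda g)<\rho_i$.

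O passo central \'e a propriedade 4, e o ingrediente t\'ecnico ser\'a a f\'ormula $\rho(f_ih)=\rho_{l(i,m)}$, onde $m$ \'e o \'indice m\'aximo de $h$. Primeiro noto que, como $f_if_j=f_jf_i$, vale $\rho_{l(i,j)}=\rho_{l(j,i)}$ e portanto $l(i,j)=l(j,i)$ (pois a sequ\^encia $(\rho_i)$ \'e estritamente crescente); assim a hip\'otese $l(i,j)<l(i+1,j)$ fornece tamb\'em a monotonicidade no segundo argumento, $l(i,j)<l(i,j+1)$. Escrevendo $h=\sum_{s\menori m}c_sf_s$ com $c_m\neq 0$, tenho $f_ih=\sum_{s\menori m}c_sf_if_s$ com $\rho(f_if_s)=\rho_{l(i,s)}$ estritamente crescente em $s$; como esses valores s\~ao dois a dois distintos e o termo dominante ($s=m$) tem coeficiente n\~ao nulo, a propriedade 3 iterada d\'a $\rho(f_ih)=\rho_{l(i,m)}$. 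Estendo isso para $f$ qualquer de \'indice m\'aximo $i$ escrevendo $fh=\sum_{k\menori i}a_kf_kh$ e usando que $\rho(f_kh)=\rho_{l(k,m)}$ cresce estritamente em $k$, de onde $\rho(fh)=\rho_{l(i,m)}$. Assim, se $\rho(f)<\rho(g)$, ou seja, o \'indice m\'aximo $i$ de $f$ \'e menor que o \'indice m\'aximo $j$ de $g$, obtenho $\rho(fh)=\rho_{l(i,m)}<\rho_{l(j,m)}=\rho(gh)$ pela monotonicidade de $l$.

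Para a parte da fun\c{c}\~ao peso, a f\'ormula $\rho(fh)=\rho_{l(i,m)}$ aplicada com $h=g$ (de \'indice m\'aximo $j$) d\'a diretamente $\rho(fg)=\rho_{l(i,j)}$ sempre que $f$ e $g$ s\~ao n\~ao nulos com \'indices m\'aximos $i$ e $j$. Logo, sob a hip\'otese adicional $\rho_{l(i,j)}=\rho_i+\rho_j$, segue $\rho(fg)=\rho_{l(i,j)}=\rho_i+\rho_j=\rho(f)+\rho(g)$, que \'e a propriedade 6. Espero que o ponto mais delicado seja controlar o termo dominante nas expans\~oes dos produtos: \'e preciso garantir que o valor $\rho_{l(i,m)}$ (respectivamente $\rho_{l(i,j)}$) \'e estritamente maior que o de todos os demais termos, para que nenhum cancelamento o reduza, e isso depende de forma essencial da monotonicidade estrita de $l$ em cada argumento, obtida da hip\'otese via a simetria $l(i,j)=l(j,i)$.
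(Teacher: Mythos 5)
Sua proposta est\'a correta e segue essencialmente o mesmo caminho da demonstra\c{c}\~ao do texto: ambas se reduzem \`a identidade $\rho(fg)=\rho_{l(i,j)}$, onde $i,j$ s\~ao os \'indices m\'aximos de $f$ e $g$, isolando o termo dominante da expans\~ao de $fg$ na base gra\c{c}as \`a monotonicidade estrita de $l$ em cada argumento (voc\^e itera a propriedade 3 em duas etapas, o texto expande de uma vez com as constantes de estrutura $\eta_{ijl}$ --- diferen\c{c}a apenas de contabilidade). A seu favor, voc\^e torna expl\'icito, via comutatividade de $R$ e $l(i,j)=l(j,i)$, o passo da monotonicidade no segundo argumento, que o texto usa sem justificar.
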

\begin{proof}
    Diretamente da definição da função $\rho$, vemos que ela satisfaz as condições 1, 2, 3 e 5 para ser uma função peso. Mostraremos que ela também satisfaz as outras duas condições (4 e 6).

    Para $f\in R\setminus \{0\}$, associamos o número $\iota(f)$, o qual é o menor inteiro positivo tal que $f\in L_{\iota(f)}$. Dados $f,g \in R\setminus \{0\}$ temos que
    $$f=\sum_{i\menori \iota(f)}\lambda_i f_i, \, g=\sum_{i\menori \iota(g)}\nu_i f_i \, fg=\sum_{i\menori \iota(fg)}\mu_i f_i\textrm{ e } f_if_j=\sum_{l\menori l(i,j)}\eta_{ijl}f_l.$$
    com $\lambda_{\iota(f)}\neq 0,\,\, \nu_{\iota(g)}\neq 0,\,\, \mu_{\iota(fg)}\neq 0$ e $\eta_{ijl(i,j)}\neq 0 $.

    Observe que
    $$fg=\left( \sum_{i\menori \iota(f)}\lambda_i f_i\right) \left( \sum_{j\menori \iota(g)}\nu_j f_j\right) = \sum_{i\menori \iota(f)}\sum_{j\menori \iota(g)}\lambda_i\nu_jf_if_j =$$
    $$\sum_{i\menori \iota(f)}\sum_{j\menori \iota(g)}\lambda_i\nu_j\sum_{l\menori l(i,j)}\eta_{ijl}f_l = \sum_{i\menori \iota(f)}\sum_{j\menori \iota(g)}\sum_{l\menori l(i,j)}\lambda_i\nu_j\eta_{ijl}f_l,$$
    assim temos que
    $$\mu_l = \sum_{l(i,j)=l}\lambda_i\nu_j\eta_l f_l.$$

    Por hipótese temos que $l(i,j)<l(i+1,j)$, assim $l(i,j)<l(\iota(f),\iota(g))$ se $i<\iota(f)$ ou $j<\iota(g)$. Supondo $i=\iota(f)$ e $j=\iota(g)$, temos que $\mu_{\iota{fg}}=\lambda_i\nu_j\eta_{ijl(i,j)} \neq 0$, o que nos garante que $\iota(fg)=l(\iota(f),\iota(g))$.

    Agora dados $f,g,h \in R\setminus \{0\}$, com $\rho(f)<\rho(g)$,   temos que $\rho(fh)=\rho_{\iota(fh)}=\rho_{l(\iota(f),\iota(h))}<\rho_{l(\iota(g),\iota(h))}=\rho_{\iota(gh)}=\rho(gh)$, logo a condição 4 é verificada e $\rho$ é uma função ordem.

    Agora assumindo que $\rho_{l(i,j)}=\rho_l+\rho_j$, temos que $\rho(fg)=\rho_{\iota(fg)}=\rho_{l(\iota(f),\iota(g))}=\rho_{\iota(f)}+\rho_{\iota(g)} = \rho(f)+\rho(g).$ Ou seja, a condição 6 também é satisfeita sendo então $\rho$ uma função peso.
\end{proof}

\begin{exemp}\label{ex:exem1}
    Seja $\X$ a curva definida pelo polinômio $P(x,y)\in \F_q[x,y]$, $P(x,y)=x^m + y^{m-1} + G(x,y)$, com $gr_{\tnormal{ total}}(G(x,y))<m-1$. Como $P(x,y)$ é um polinômio irredutível, temos que o anel $R=\F_q[x,y]/\gera{P(x,y)}$ é um domínio de integridade e mais, $R$ é uma $\F_q-$álgebra. Assim, construído $R$ e denotando por $\ba{x},\, \ba{y}$ as classes de equivalências $ x + \gera{P(x,y)}$ e $y+\gera{P(x,y)}$, respectivamente, temos que $R$ admite uma função peso, $\rho$, tal que $\rho(\ba{x})=m-1$ e $\rho(\ba{y})=m$.

\end{exemp}

\begin{proof}
    O conjunto $B=\{\ba{x}^{\alpha}\ba{y}^{\beta}\in R;\, \alpha < m\}$, é uma $\F_q-$base para $R$. De fato, primeiro observemos que $\ba{x}^m = -\ba{y}^{m-1} - \ba{G}$, e dado $\ba{h}\in R$ temos que $\ba{h}=h(x,y)+\gera{P(x,y)}$ e mais, $h(x,y)=\sum_{i=0}^{a}\sum_{j=0}^b \lambda_{ij}x^iy^j$, com $\lambda_{ab}\neq 0$. Para demonstrar que $B$ gera $R$ vamos supor sem perda de generalidade que $h(x,y)$ é um monômio, visto que no máximo ele é uma soma de monômios, assim $h(x,y)=x^ay^b$, se $a<m$ temos que $h(x,y) \in B$ assim não temos o que fazer, suponha então $a\maiori m$.

    Podemos escrever $a=km+c$ com $c<a$, assim $\ba{x}^a\ba{y}^b = \ba{x}^c\ba{y}^b(-\ba{y}^{m-1} - \ba{g})^k$, temos que $\ba{x}^c\ba{y}^{b+k(m-1)} \in B$ e os demais monômios de $\ba{x}^c\ba{y}^b(-\ba{y}^{m-1} - \ba{g})^k$ tem grau em $x$ menor que $a$, logo repetindo esse processo recursivamente para os demais temos que $\ba{h}$ é escrito como combinação linear de elementos de $B$.

    Agora queremos mostrar que $B$ é um conjunto linearmente independente.
    Seja $\sum \lambda_{ij}\ba{x}^i \ba{y}^j = 0$ uma soma finita de elementos de $B$, equivalentemente mostraremos para $\sum \lambda_{ij} x^iy^y = f(x,y)P(x,y)$ para algum $f(x,y)\in \F_q[x,y]$, temos que se $f(x,y)\neq 0$, $gr_x(f(x,y)P(x,y))>m$ e mais, $gr_x\left( \sum \lambda_{ij} x^iy^y \right)<m$ logo a igualdade não é válida, sendo $f(x,y)=0$ temos que os $\lambda_{ij}'$s são nulos, pois esses monômios são linearmente independentes em $\F_q[x,y]$ assim $B$ é uma $\F_q-$base para $R$.

    Seja $\{f_1,f_2,f_3,\ldots\}$ uma enumeração do conjunto $B$. Para $f_i=\ba{x}^{\alpha}\ba{y}^{\beta}$ definimos $\rho_i=\alpha(m-1) + \beta m$. Sendo $D=\{(a,b)\in \N_0^2 ; a<m\}$, a função $\varphi:D\emm \N_0$, tal que $\varphi(a,b)=a(m-1)+bm$, é injetiva, já que $mdc(m,m-1)=1$ e $a<m$, assim se $i\neq j$ temos que $\rho_i \neq \rho_j$.

    Reordenando se preciso, assumiremos que a seq{\"u}ência $(\rho_i)_{i\in \N}$ é estritamente crescente.

    Seja $f_i=\ba{x}^{\alpha}\ba{y}^{\beta}$ e $f_j=\ba{x}^{\gamma}\ba{y}^{\delta}$, com $\alpha<m$ e $\gamma<m$. Seguindo as notações do teorema anterior, queremos mostrar que $l(i,j)<l(i+1,j)$ e para isso vamos mostrar que $\rho_{l(i,j)}=\rho_i+\rho_j$. Separaremos em 2 casos, $\alpha+\gamma <m$ e $m\menori \alpha + \gamma <2m$.
    \begin{enumerate}
        \item Se $\alpha+\gamma <m$, temos que $f_if_j \in B$, logo $f_if_j=f_{l(i,j)}$, onde $l(i,j)$ é o menor inteiro $l$, tal que $f_if_l\in L_l=\gera{f_1,\ldots,f_l}$, assim $\rho_l(i,j) = \rho_i+\rho_j$.

        \item Agora vamos supor $\alpha + \gamma \maiori m$, assim $\alpha + \gamma = m+\epsilon$ com $0\menori \epsilon < m$. Tomamos $n=\beta+\delta$, temos que
        $$ f_if_j = (\ba{x}^{\alpha}\ba{y}^{\beta})(\ba{x}^{\gamma}\ba{y}^{\delta}) = \ba{x}^{(m+\epsilon)}\ba{y}^{(\beta+\delta)}=\ba{x}^{\epsilon}\ba{y}^n(-\ba{y}^{m-1} - \ba{g})=$$
        $$ -\ba{x}^{\epsilon}\ba{y}^{(n+m-1)} - \ba{x}^{\epsilon}\ba{y}^{n}\ba{g} .$$
        Observe que $\ba{x}^{\epsilon}\ba{y}^{(n+m-1)}\in B$ e mais, se tomarmos $f_l=\ba{x}^{\epsilon}\ba{y}^{(n+m-1)}$ temos que
        $$ \rho_i+\rho_j=(\alpha+\gamma)(m-1)+(\beta+\delta)m = (m+\epsilon)(m-1) + nm =$$
        $$ \epsilon(m-1)+(m-1+n)m = \rho_l.$$

        Um monômio de $G$ com coeficiente não nulo é da forma $x^{\kappa}y^{\lambda}$  com $\kappa \menori gr_{x}(G)=d$ e $\kappa + \lambda < m-1$.

        Se $(\epsilon,\eta),(\kappa,\lambda)\in \N_0^2, \, \epsilon < m, \, \kappa \menori d, \, \kappa+\lambda < m-1$ e $\rho_l = \epsilon(m-1)+(m-1+n)m$, então $\ba{x}^{\epsilon+\kappa}\ba{y}^{\eta+\lambda}\in L_{l-1}$. De fato, mostraremos isso nos dois casos que seguem.
        \begin{enumerate}
            \item  Se $\epsilon + \kappa<m$, então $\ba{x}^{\epsilon+\kappa}\ba{y}^{\eta+\lambda}\in B$ e mais, $(\eta+\lambda)m+(\epsilon+\kappa)(m-1) < \epsilon(m-1)+(\eta + m-1)m = \rho_l$, assim, $\ba{x}^{\epsilon+\kappa}\ba{y}^{\eta+\lambda}\in L_{l-1}$.

            \item Agora, se $\epsilon + \kappa \maiori m$, temos que $\epsilon + \kappa = m +\epsilon'$ com $\epsilon'<\epsilon$, pois $\kappa \menori d <m$ e $\epsilon < m$. Do mesmo modo fazemos $\eta+\lambda = \eta'$. Assim
            $$ \ba{x}^{\epsilon+\kappa}\ba{y}^{\eta+\lambda} = \ba{x}^{m+\epsilon'}\ba{y}^{\eta'}= -\ba{x}^{\epsilon'}\ba{y}^{m-1+\eta'} - \ba{x}^{\epsilon'}\ba{y}^{\eta'}\ba{g}.$$

            Observe que $\rho_{l'}=\epsilon'(m-1)+(m-1+\eta')m = (m+\epsilon')(m-1)+\eta'm = (\epsilon + \kappa)(m-1) + (\eta+\lambda)m < \rho_l$, assim $f_l\in L_{l-1}$, e mais, recursivamente podemos mostrar que $\ba{x}^{\epsilon'}\ba{y}^{\eta'}\ba{g}\in L_{l-1}$.
        \end{enumerate}
    \end{enumerate}

    Assim mostramos que se $f_if_j \in L_{l(i,j)}$ então $\rho_{l(i,j)}=\rho_i+\rho_j$.

    Desse modo  pelo   teorema \ref{teo:exemp1}, temos que $R$ admite uma função peso e que essa do modo que foi construida é gerada por $m-1$ e $m$.
\end{proof}

\begin{exemp}
    Seja $\X$ a curva definida no exemplo \ref{ex:exem1}. O semigrupo numérico gerado pela função peso $\rho$, do mesmo exemplo, tem $g= {{m-1}\choose {2}}$ lacunas. Sejam $Q$ um conjunto de $n$ pontos racionais de  $\X$ e $k$ tal que $\rho_k=lm$, com $(l>m, \textrm{ e } lm<n)$. O código de avaliação $C=E_k$, determinado por $ev_Q$,  é um $[n,k,d]-$código com, $d\maiori n-lm$  e $k=lm+1-g= lm+1 -{{m-1}\choose {2}}$.
\end{exemp}
\begin{proof}
    Temos que o semigrupo determinado $\rho$ é gerado por $m$ e $m-1$. Assim, pela proposição \ref{prop:gerado}, temos que $g={m-1 \choose 2}$. Pelo teorema \ref{teo:dista}, temos que $d\maiori n-lm$. Agora, do lema \ref{lema:semi_fini}(1) segue que $\rho_k=k+g-1$, assim $k=lm-g+1$.
\end{proof}

Note que a dimensão e a cota inferior para a distância mínima do código do último exemplo e do código do teorema \ref{teo:cotarefe} são iguais.


Em vários artigos encontramos referências aos códigos geométricos de Goppa pontuais, códigos da forma $C(D,mQ)$, onde $Q$ é um ponto racional  e $m$ um inteiro. Os códigos de avaliação que aqui construimos foram propostos como um modo de estudo dos códigos de Goppa pontuais de modo simples.
A principio pensava-se que os códigos de avaliação generalisavam os de Goppa pontuais, mas recentemente fora provado que isso é falso.
Nos exemplos que seguem fazemos uma associação dos códigos geométricos de Goppa pontuais  aos de avaliação.

No livro Algebraic Function Fields and Codes  de  Hennin Stichtenoth,  página 113, temos um exemplo de um corpo de funções no qual o gênero é dado por, $g=(m-1)/2$ caso $m$ seja ímpar e $g=(m-2)/2$ caso contrário. 

\begin{exemp}\label{exe:gop1}
    Sejam $K$ um corpo finito de caracteristica diferente de 2, $\X$ a curva definida por $y^2 = f(x) = p_1(x)\cdots p_s(x)\in K[x]$, onde $p_1(x),\ldots,p_s(x)$ são polinômios mônicos irredutiveis distintos entre si, $s\maiori 1$ e $F$ o corpo de frações do anel de coordenada de $\X$. Assim $K$ é o corpo de constantes de $F$ e se $m=gr(f(x))$ for impar temos que o gênero de $F$ é $(m-1)/2$.

    Agora sejam $P,P_1,\ldots,P_n$, places de grau 1 dois a dois disjuntos, $D=P_1+\cdots+P_n$,  $G=lmP$ um divisor em $F/K$, $m<gr(G)=lm<n$  e $supp(G)\inter supp(D)=\emptyset$. Então o código geométrico de Goppa $C(D,G)$ (definição \ref{cod:goppa}) tem parâmetros, $k=lm+1-(m-1)/2$ e $d \maiori n-lm$.
\end{exemp}

Faremos agora a construção de um código de avaliação.

\begin{exemp}\label{ex:eva2}
    Sejam $K$ um corpo finito com caracteristica diferente de 2, $f(x,y)=y^2+g(x)$ um polinômio em $K[x,y]$ tal que $g(x)\in K[x]$, $\X$ a curva plana gerada por $f(x,y)$ e $m=gr(g(x))$ impar, em particular $f(x,y)$ é irredutível. Faça $R=K[x,y]/\gera{f(x,y)}$.  Assim $R$ é uma $K-$álgebra que admite uma função peso, $\rho$, gerada por 2 e $m$.
\end{exemp}

\begin{proof}
    Denotaremos por $\ba{h}$ à classe $h+\gera{f(x,y)}$ de $R$. O conjunto $B=\{\ba{x}^b\ba{y}^a ; a<2\}$ é uma $K-$base para $R$. De fato, observe que dada uma soma finita da forma $\sum \lambda_{ab}\ba{x}^b\ba{y}^a = 0$, teriamos em $K[x,y]$ a igualdade $\sum \lambda_{ab}x^by^a = h(x,y)f(x,y)$, para algum $h(x,y)\in K[x,y]$, contudo $gr_y(h(x,y)f(x,y)) \maiori 2$ enquanto $gr_y\left(\sum \lambda_{ab}x^by^a \right) < 2$, assim $B$ é um conjunto linearmente independente. Mostremos agora que $B$ gera $R$.

    É suficiente mostrar que $B$ gera os elementos da forma $\ba{y}^c\ba{x}^d$, pois todos elementos em $R$ são somas de elementos desse tipo. Se $c<2$ temos que $\ba{y}^c\ba{x}^d\in B$, ou seja, não tem o que fazer, vamos supor agora que $c\maiori 2$.

    Podemos então escrever $c=a+2k$ com $a<2$, assim $\ba{y}^c\ba{x}^d=\ba{y}^a\ba{x}^d\ba{g(x)}^k$, pois $\ba{y}^2=\ba{g(x)}$, logo $B$ é uma $K-$base de $R$.

    Enumeramos $B$ como $\{f_1,f_2,\ldots\}$. Agora sendo $f_i=\ba{y}^a\ba{x}^b \in B$, definimos $\rho_i=2b+am$, observe que $2b+am = 2\tilde b +\tilde a m$, com $a,\tilde a<2$ se, e somente, $a=\tilde a$ e $b=\tilde b$, visto que $mdc(2,m)=1$, assim reenumerando caso necessário, assumiremos que $(\rho_i)_{i\in \N}$ é uma seq{\"u}ência estritamente crescente.

    Seja $l(i,j)$ o menor inteiro tal que $f_if_j \in L_{l(i,j)}$, queremos mostrar que $l(i+1,j)>l(i,j)$, ou seja, temos que mostrar que $\rho_{l(i,j)}<\rho_{l(i+1,j)}$, para isso provaremnos que $\rho_{l(i,j)}=\rho_i+\rho_j$.

    Sejam $f_i=\ba{x}^a\ba{y}^b$ e $f_j=\ba{x}^c\ba{y}^d$, com $b<2$ e $d<2$. Se $b+d<2$ temos que $f_if_j\in B$ e logo  $\rho_{l(i,j)}=\rho_i+\rho_j$. Suponha agora  $b+d\maiori 2$, temos que $b+d=2+\lambda$ com $\lambda<2$, assim $f_if_j=\ba{y}^{\lambda}\ba{x}^{a+c}\ba{g(x)}$. Note que para $f_h=\ba{y}^{\lambda}\ba{x}^{a+c+m}$, temos que $\rho_h=2a+2c+2m+\lambda m = 2(a+c) +m(\lambda +2) = 2(a+c) + m(b+d) = \rho_i + \rho_j$, observe também que outro monômio $f_t$ de $\ba{y}^{\lambda}\ba{x}^{a+c}\ba{g(x)}$ é tal que $\rho_t < \rho_h$ assim $l(i,j)=h$, concluimos então que $\rho_{l(i,j)} = \rho_i + \rho_j$ e assim $\rho_{l(i+1,j)}=\rho_{i+1}+\rho_j > \rho_i+\rho_j = \rho_{l(i,j)}$.

    Agora pelo teorema \ref{teo:exemp1}, temos que existe uma função peso em $R$.
\end{proof}

\begin{exemp}
    Com as hipóteses do exemplo \ref{ex:eva2}, tome um conjunto $P$ com $n$ pontos racionais distintos de $\X$. Além disso, seja $k$ dado por $\rho_k=lm$. Temos então que o código $E_k$ tem parâmetros iguais ao do código $C(D,G)$ visto em \ref{exe:gop1}.
\end{exemp}

	\newpage
 \bibliographystyle{plainnat}

\begin{thebibliography}{xx}
	\bibitem{fulton} Fulton, W., {\bf Algebraic Curves.} Benjamin, New York, 1969.
	
	\bibitem{garcia} Garcia, Arnaldo e Lequain, Yves, {\bf Elementos de Álgebra}. Associação Instituto Nacional de Matemática Pura e Aplicada. Rio de Janeiro, 2003. (Projeto Euclides)
	
	\bibitem{goppa} Goppa, V. D., {\bf Codes on Algebraic Curves.} Soviet Math. Dokl. vol 24, No.1, pp. 170-172, 1981. 
	
	\bibitem{hefez2} Hefez, A. e Villela, M. L. T., {\bf Códigos Corretores de Erros}, (Série Computação e Matemática), IMPA. Rio de Janeiro, 2002.
	
	
	\bibitem{lang}Lang, Serge, {\bf Algebra}, Third Edition. Addison-Wesley Publishing Company. Massachusetts, 1997.
		
	\bibitem{macw} MacWlliams, F. J. and Sloane, N. J. A., {\bf The theory of error-correcting codes.} (North-Holland Mathematical Library; 16). North-Holland, Amsterdam, 1977.
	
	\bibitem{stich} Stichtenoth, Hennin, {\bf Algebraic Function Fields and Codes}. Springer-Verlag, Berlin, 1993.

	\bibitem{vanlint}van Lint, J. H., {\bf Introduction to Coding Theory}, 3rd rev. and expand ed. Springer - Verlag Berlin Heidelberg New York, New York, 1998.
	
	\bibitem{vanlint2} van Lint, Jacobus H. , Pellikaan, Ruud and H\o{}hold, Tom, {\bf Algebraic geometry codes} In the Handbook of Coding Theory, vol 1, pp. 871-961, Elsevier, Amsterdam, 1998.

	\bibitem{vanlint3} van Lint, Jacobus H. , Pellikaan, Ruud and H\o{}hold, Tom, {\bf An Elementary Approach to Algebraic Geometry Codes}, Congressus Numerantium 135,  pp. 25-35, 1998.
\end{thebibliography}

\addcontentsline{toc}{chapter}{Referências Bibliográficas}

	\addcontentsline{toc}{chapter}{Índice Remissivo}
	\printindex
\end{document}